\newif\ifarXiv
\newtheorem{theorem}{Theorem}
\newtheorem{remark}{Remark}
\newtheorem{lemma}{Lemma}
\newtheorem{definition}{Definition}
\newtheorem{corollary}{Corollary}
\newcommand{\bF}{\mathbb{F}}
\newcommand{\cA}{\mathcal{A}}
\newcommand{\cC}{\mathcal{C}}
\newcommand{\cI}{\mathcal{I}}
\newcommand{\cM}{\mathcal{M}}
\newcommand{\cQ}{\mathcal{Q}}
\newcommand{\cR}{\mathcal{R}}
\newcommand{\cS}{\mathcal{S}}
\newcommand{\cT}{\mathcal{T}}
\newcommand{\cV}{\mathcal{V}}
\newcommand{\cW}{\mathcal{W}}
\newcommand{\boldk}{\mathbf{k}}
\newcommand{\bolds}{\mathbf{s}}
\newcommand{\boldt}{\mathbf{t}}
\newcommand{\boldx}{\mathbf{x}}
\newcommand{\boldy}{\mathbf{y}}
\newcommand{\boldz}{\mathbf{z}}
\DeclarePairedDelimiter{\floor}{\lfloor}{\rfloor} 
\DeclareSymbolFont{bbold}{U}{bbold}{m}{n}
\DeclareSymbolFontAlphabet{\mathbbold}{bbold}
\newcommand{\1}{\mathbbold{1}}
\begin{document}

\title{Perfect Subset Privacy in Polynomial Computation via Reed-Muller Information Super-sets} 


	\author{\textbf{Zirui (Ken) Deng}$^\star$, \textbf{Vinayak Ramkumar}$^{\dagger}$, and \textbf{Netanel Raviv}$^\star$\\
$^\star$Department of Computer Science and Engineering, Washington University in St. Louis, St. Louis, USA\\
		$^\dagger$Department of Electrical Engineering--Systems, Tel Aviv University, Tel Aviv, Israel\\
   \texttt{d.ken@wustl.edu}, \texttt{vinram93@gmail.com}, \texttt{netanel.raviv@wustl.edu}}

\maketitle


\begin{abstract}
Delegating large-scale computations to service providers is a common practice which raises privacy concerns.
This paper studies information-theoretic privacy-preserving delegation of data to a service provider, who may further delegate the computation to auxiliary worker nodes, in order to compute a polynomial over that data at a later point in time. We study 
techniques which are compatible with robust management of distributed computation systems, an area known as \textit{coded computing}. Privacy in coded computing, however, has traditionally addressed the problem of colluding workers, and assumed that the server that administrates the computation is trusted. This viewpoint of privacy does not accurately reflect real-world privacy concerns, since normally, 
the service provider as a whole (i.e., the administrator and the worker nodes) form one cohesive entity which itself poses a privacy risk. This paper aims to shift the focus of privacy in coded computing to safeguarding the privacy of the user against the service provider as a whole, instead of merely against colluding workers inside the service provider. To this end, we leverage the recently defined notion of \textit{perfect subset privacy}, which guarantees zero information leakage from all subsets of the data up to a certain size. 
Using known techniques from Reed-Muller decoding, we provide a scheme which enables polynomial computation with perfect subset privacy in straggler-free systems. Furthermore, by studying \textit{information super-sets} in Reed-Muller codes, which may be of independent interest, we extend the previous scheme to tolerate straggling worker nodes inside the service provider. 
\end{abstract}
\begin{IEEEkeywords}
Perfect subset privacy, distributed systems, coded computing, Reed-Muller codes.
\end{IEEEkeywords}
\thispagestyle{empty} 
\section{Introduction}


 


{\let\thefootnote\relax\footnote{ This work was presented in part at the 2024 IEEE International Symposium on Information Theory (ISIT). Parts of this work were done when V.~Ramkumar was a visiting researcher in the Department of Computer Science and Engineering in Washington University in St. Louis.}} 



As significant computational power is no longer within the grasp of individual users, relying on third-party services for data storage and computation has become a necessity. While this paradigm allows smaller, non-technological entities to harness capabilities beyond their means, it inevitably poses risks related to the privacy of these entities.

In recent years, there has been considerable interest in 
applications of error-correcting codes in distributed computations, commonly known as \textit{coded computing}, in order to tackle straggling (i.e., slow) workers, adversarial workers, and privacy issues in such systems~\cite{lee2017speeding,yu2019lagrange,wang2022breaking}.
Privacy guarantees, however, have typically been given in terms of some level of restricted collusion among workers, while the system administrator is assumed to be trusted. 

This approach overlooks the common setting where a data owner relies on a single \textit{untrusted} service provider in order to store and compute over their data, and the service provider itself inherently introduces a privacy risk. Whether the storage or computation is distributed internally or not, the service provider has access to the entirety of the data, rendering the assumption of limited collusion among workers largely irrelevant. This scenario is widely prevalent, as many institutions delegate their data in its entirety to service providers such as AWS (Amazon Web Services) or Google Cloud. 
In this work, we seek to extend the scope of coded computing
and develop new techniques to protect privacy against the service provider as a whole, while enabling the service provider's internal system administrator (abbrv. admin) to distribute the computation internally. 

In our setup, we assume the user has some data~$\boldx$, which is internally partitioned to~$n$ parts~$x_1,\ldots,x_n$ of some contextual meaning (e.g., features). The service provider consists of a system admin and workers, among which collusion is not restricted. The user cannot communicate with the workers directly, but rather through the system admin. The user encodes~$\boldx$ using some random key vector~$\boldk$ and then sends the encoded (privatized) version of the data~$\Tilde{\boldx}$ to the admin, who in turn distributes~$\Tilde{\boldx}$ to multiple workers. Due to storage limitations, the user wishes to discard $\boldx$ and instead only keep the key $\boldk$ as side information. At a later point in time, the user is interested in computing some polynomial~$f$ over~$\boldx$, and sends~$f$ to the admin. 
The goal of the admin is to reply to the user with some information~$\cA$, computed with the aid of the workers, so that the user can extract the desired computation results, potentially using~$\boldk$.
The purpose of this work is to shift the ``privacy barrier'' in coded computing: instead of protecting the privacy of the admin against colluding workers, we wish to protect the privacy of the user against the service provider as a whole. We call the above the \textit{single-provider setup}, illustrated in Figure~\ref{fig:barrier}.

\begin{figure}
\centering

\tikzset{every picture/.style={line width=0.75pt}} 

\begin{tikzpicture}[x=0.75pt,y=0.75pt,yscale=-1,xscale=1]

\draw  (252.77,977.22) -- (318.26,977.22) -- (318.26,1053.05) -- (252.77,1053.05) -- cycle ;
\draw  (410.33,975.88) -- (464.07,975.88) -- (464.07,1050.37) -- (410.33,1050.37) -- cycle ;
\draw    (327.12,1016.8) -- (392.92,1017.23) ;
\draw [shift={(394.92,1017.24)}, rotate = 180.38] [color={rgb, 255:red, 0; green, 0; blue, 0 }  ][line width=0.75]    (10.93,-3.29) .. controls (6.95,-1.4) and (3.31,-0.3) .. (0,0) .. controls (3.31,0.3) and (6.95,1.4) .. (10.93,3.29)   ;
\draw    (394.15,1046.54) -- (377.01,1046.43) -- (327.96,1046.11) ;
\draw [shift={(325.96,1046.1)}, rotate = 0.37] [color={rgb, 255:red, 0; green, 0; blue, 0 }  ][line width=0.75]    (10.93,-3.29) .. controls (6.95,-1.4) and (3.31,-0.3) .. (0,0) .. controls (3.31,0.3) and (6.95,1.4) .. (10.93,3.29)   ;
\draw    (327.12,986.83) -- (392.92,987.27) ;
\draw [shift={(394.92,987.28)}, rotate = 180.38] [color={rgb, 255:red, 0; green, 0; blue, 0 }  ][line width=0.75]    (10.93,-3.29) .. controls (6.95,-1.4) and (3.31,-0.3) .. (0,0) .. controls (3.31,0.3) and (6.95,1.4) .. (10.93,3.29)   ;
\draw    (464.65,1003.09) -- (497.36,980.34) ;
\draw [shift={(499,979.2)}, rotate = 145.18] [color={rgb, 255:red, 0; green, 0; blue, 0 }  ][line width=0.75]    (10.93,-3.29) .. controls (6.95,-1.4) and (3.31,-0.3) .. (0,0) .. controls (3.31,0.3) and (6.95,1.4) .. (10.93,3.29)   ;
\draw    (464.26,1022.98) -- (497.51,1052.86) ;
\draw [shift={(499,1054.2)}, rotate = 221.94] [color={rgb, 255:red, 0; green, 0; blue, 0 }  ][line width=0.75]    (10.93,-3.29) .. controls (6.95,-1.4) and (3.31,-0.3) .. (0,0) .. controls (3.31,0.3) and (6.95,1.4) .. (10.93,3.29)   ;
\draw    (464.65,1012.01) -- (497,1012.19) ;
\draw [shift={(499,1012.2)}, rotate = 180.31] [color={rgb, 255:red, 0; green, 0; blue, 0 }  ][line width=0.75]    (10.93,-3.29) .. controls (6.95,-1.4) and (3.31,-0.3) .. (0,0) .. controls (3.31,0.3) and (6.95,1.4) .. (10.93,3.29)   ;
\draw   (501,954.03) -- (560.95,954.03) -- (560.95,991.2) -- (501,991.2) -- cycle ;
\draw   (501,998.29) -- (561,998.29) -- (561,1033.2) -- (501,1033.2) -- cycle ;
\draw   (501,1040.9) -- (560,1040.9) -- (560,1072.56) -- (501,1072.56) -- cycle ;
\draw [dashed, color={rgb, 255:red, 126; green, 211; blue, 33 }  ,draw opacity=1 ]   (400.31,949.39) -- (400.3,966.34) -- (400.24,979.27) -- (399.54,1077.4) ;
\draw [dashed, color={rgb, 255:red, 208; green, 2; blue, 27 }  ,draw opacity=1 ]   (480.15,951.17) -- (480.14,970.8) -- (480.08,983.73) -- (479.38,1075.17) ;
\draw [dotted]  (407.63,949.91) -- (567.12,949.91) -- (567.12,1074.62) -- (407.63,1074.62) -- cycle ;
\draw    (275.11,1053.94) -- (275.11,1097.91) ;
\draw [shift={(275.11,1097)}, rotate = 270] [color={rgb, 255:red, 0; green, 0; blue, 0 }  ][line width=0.75]    (10.93,-3.29) .. controls (6.95,-1.4) and (3.31,-0.3) .. (0,0) .. controls (3.31,0.3) and (6.95,1.4) .. (10.93,3.29)   ;
\draw    (295.91,1053.94) -- (295.91,1097.91) ;
\draw [shift={(295.91,1097)}, rotate = 270] [color={rgb, 255:red, 0; green, 0; blue, 0 }  ][line width=0.75]    (10.93,-3.29) .. controls (6.95,-1.4) and (3.31,-0.3) .. (0,0) .. controls (3.31,0.3) and (6.95,1.4) .. (10.93,3.29)   ;

\draw   (261.24,1097.53) -- (309.01,1097.53) -- (309.01,1126.2) -- (261.24,1126.2) -- cycle ;

\draw (355.64,965.79) node [anchor=north west][inner sep=0.75pt]  [font=\normalsize] [align=left] {$\displaystyle \Tilde{\boldx}$};
\draw (412.91,1005.23) node [anchor=north west][inner sep=0.75pt]   [align=left] {{\fontfamily{ptm}\selectfont {\large Admin}}};
\draw (265.51,1006.46) node [anchor=north west][inner sep=0.75pt]   [align=left] {{\fontfamily{ptm}\selectfont {\Large User}}};
\draw (353.1,1028.73) node [anchor=north west][inner sep=0.75pt]  [font=\large] [align=left] {$\displaystyle \mathcal{A}$};
\draw (355.33,994.75) node [anchor=north west][inner sep=0.75pt]  [font=\footnotesize] [align=left] {{\large {\fontfamily{pcr}\selectfont $f$}$ $}};
\draw (506.18,966.03) node [anchor=north west][inner sep=0.75pt]   [align=left] {{\fontfamily{ptm}\selectfont {\small Worker 1}}};
\draw (506.18,1009.9) node [anchor=north west][inner sep=0.75pt]   [align=left] {{\fontfamily{ptm}\selectfont {\small Worker 2}}};
\draw (506,1051) node [anchor=north west][inner sep=0.75pt]   [align=left] {{\fontfamily{ptm}\selectfont {\small Worker 3}}};
\draw (477.15,935.31) node [anchor=north west][inner sep=0.75pt]   [align=left] {{\fontfamily{ptm}\selectfont Service provider}};
\draw (450.07,1079.76) node [anchor=north west][inner sep=0.75pt]   [align=left] {\begin{minipage}[lt]{39.9pt}\setlength\topsep{0pt}
\begin{center}
{\fontfamily{ptm}\selectfont Existing }\\{\fontfamily{ptm}\selectfont privacy }\\{\fontfamily{ptm}\selectfont barrier}
\end{center}

\end{minipage}};
\draw (367.93,1079.31) node [anchor=north west][inner sep=0.75pt]   [align=left] {\begin{minipage}[lt]{42.61pt}\setlength\topsep{0pt}
\begin{center}
{\fontfamily{ptm}\selectfont Proposed}\\{\fontfamily{ptm}\selectfont privacy }\\{\fontfamily{ptm}\selectfont barrier}
\end{center}

\end{minipage}};
\draw (297.16,1067.62) node [anchor=north west][inner sep=0.75pt]  [font=\large] [align=left] {$\displaystyle \mathcal{A}$};
\draw (262,1068.24) node [anchor=north west][inner sep=0.75pt]  [font=\large] [align=left] {$\displaystyle \boldk$};

\draw (267.24,1100.53) node [anchor=north west][inner sep=0.75pt]  [font=\small] [align=left] {{\large {\fontfamily{pcr}\selectfont $f(\boldx)$}$ $}};

\end{tikzpicture}
 \ \\ \ \\ 
\caption{Shifting the privacy barrier in coded computing.}
    \label{fig:barrier}
\end{figure}
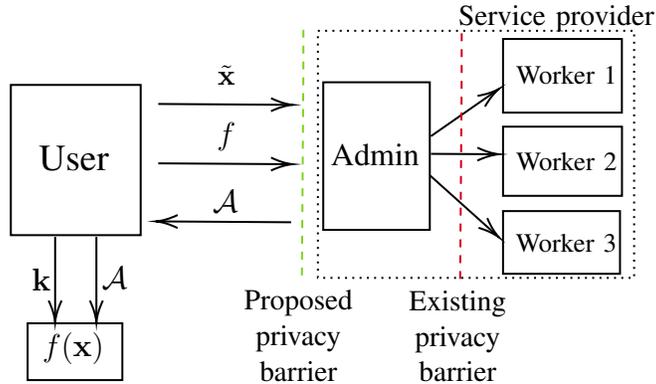


We aim to provide information-theoretic privacy guarantees. However, bounding
the mutual information between the actual and privatized data provides no understanding about which parts of the data are revealed, and to what extent. To remedy that, we leverage the recently developed notion of~\textit{perfect subset privacy}~\cite{raviv2022perfect}, which guarantees zero leakage from~\textit{all} subsets of the data up to a certain size~$r$. This notion provides a uniform privacy metric that is easy to interpret, and~$r$ serves as a tunable privacy parameter that trades better privacy for higher storage and computation requirements, and vice versa.

We first generalize the encoding scheme from~\cite{raviv2022perfect} to apply to finite fields of any size.
Next, by utilizing the known notion of \textit{information sets} in Reed-Muller decoding, we devise a scheme which allows the admin to compute~$f$ over the encoded data~$\Tilde{\boldx}$, either locally or in a system without straggling workers; informally, information sets are specific sets of positions in codewords with which one may determine the identity of entire codewords, and they are used specifically to alleviate the download cost of our scheme. Then, in order to adapt the above scheme to tolerate straggling workers, we study a notion that we refer to as \textit{information super-sets} for Reed-Muller codes.
Information super-sets for a code are multisets whose subsets of a certain size all contain information sets for the same code, and they are used in this work to account for stragglers.
This notion may be of independent interest, and can be seen as related to the problem of puncturing Reed-Muller codes efficiently~\cite{guruswami2017efficiently}. We proceed by devising good straggler-resilient schemes by constructing information super-sets for binary and general Reed-Muller codes. 

The rest of the paper is structured as follows. The problem setup and computation model are given in Section~\ref{section:preliminaries}. An encoding procedure that achieves perfect subset privacy is provided in Section~\ref{section:rsubsetprivacy}. In Section~\ref{section:reducingD} we reduce the user's download cost in straggler-free systems. In Section~\ref{section:minimizingN} we devise straggler-resilient protocols using information super-sets, which are then constructed for the binary case in Section~\ref{section:binaryRM} and for the general case in Section~\ref{section:generalsupersets}. 

\section{Problem statement}\label{section:preliminaries}



\subsection{The computation model}\label{section:computationmodel}
Our distributed computation model consists of a storage phase and a computation phase. In what follows, the user holds uniformly distributed data~$\boldx=(x_1, \ldots, x_n) \in \bF_q^n$ chosen from $X = \textrm{Unif}(\bF_{q}^n)$ and encodes the data into a privatized version prior to sharing it. Specifically, the user encodes~$\boldx$ to $\Tilde{\boldx} \triangleq \Tilde{E}(\boldx, \boldk)$, where~$\Tilde{E}$ is some encoding function and~$\boldk \in \bF_q^m$, chosen from~$K = \textrm{Unif}(\bF_{q}^m)$, is some random key, with~$m \ll n$. Then, the user sends $\Tilde{\boldx}$ to the admin, who further encodes $\Tilde{\boldx}$ to $\hat{\boldx} \triangleq \hat{E}(\Tilde{\boldx}) = (\hat{\boldx}_1, \ldots, \hat{\boldx}_N),$ where~$\hat{E}$ is another encoding function. The admin distributes $\hat{\boldx}$ to~$N$ workers, with worker~$i \in [N]$ (where~$[N]\triangleq\{1,2,\ldots,N\}$) storing $\hat{\boldx}_i$. The user wishes to discard~$\boldx$ and only keep~$\boldk$ as side information, which requires less storage. We call this the \textit{storage phase} of the model. 

At a later point, the user is interested in computing~$f(\boldx)$, where~$f: \bF_q^n \rightarrow \bF_q$ is a multivariate polynomial of total degree at most~$d$ and individual degree at most~$q-1$ (i.e., $d < n(q-1)$). The polynomial~$f$ is not known during the storage phase.
The user sends~$f$ to the admin, who then sends it to the~$N$ workers. For~$i\in[N]$, worker~$i$ performs some computation over~$\hat{\boldx}_i$, and returns the results to the admin. The admin aggregates the results into some information~$\cA$ that is sent back to the user; $\cA$ along with side information~$\boldk$ should suffice for decoding~$f(\boldx)$. We call this the \textit{computation phase} of the model.
Among those~$N$ workers, however, $S$ ones may straggle, i.e., not return a response in a timely manner, and the admin must be able to provide the user with sufficient information to decode~$f(\boldx)$ even in the absence of the computation results from the stragglers.


\begin{remark}
    For simplicity, we assume the user is interested in computing~$f$ on a \emph{single}~$\boldx$, unlike the common system assumption of multiple~$\boldx$'s~\cite{yu2019lagrange}. Our setting can be generalized to multiple~$\boldx$'s (e.g., a dataset with multiple datapoints, or a matrix with many columns) by straightforward repetition.
\end{remark}


\subsection{Perfect subset privacy}
Let~$\Tilde{X}$ be the random variable which is induced by the data distribution~$X$, the distribution of the key~$K$, and the function~$\Tilde{E}$, i.e., $\Tilde{X}=\Tilde{E}(X,K)$. An encoding scheme is called \textit{perfectly private}~\cite{raviv2022perfect} if the transformed and original data have zero mutual information, i.e., $I(X; \Tilde{X}) = 0$. Perfect privacy is a very strong notion, which requires a key as big as the data itself~\cite{CovThom06}; our scheme would trivialize under such requirements, as no storage saving would be possible. Therefore, we must resort to limited information leakage.

However, guarantees of the form~$I(X;\Tilde{X})\le \epsilon$ are rather vague, since they provide no insight about which part of the data is revealed. 
To this end, we leverage the notion of \textit{perfect subset privacy}, first developed in \cite{raviv2022perfect} by generalizing \textit{perfect sample privacy} of~\cite{rassouli2019data}. To define this notion, let $\binom{[n]}{r}$ be the collection of all subsets of $[n]$ of size $r$. For any $\cR \subseteq [n]$, we use the notation $X_{\cR} \triangleq (X_i)_{i \in \cR}$, and later extend this notation to matrices and their columns in a natural way.

\begin{definition}\cite[Def.~1]{raviv2022perfect}
    For a privacy parameter~$r \in [n]$, an encoding procedure is said to satisfy \emph{$r$-subset privacy} if $I(X_{\cR}; \Tilde{X}) = 0$ for all $\cR \in \binom{[n]}{r}$.  
\end{definition}

This notion of~$r$-subset privacy provides uniform protection for all $r$-subsets of the data with a tunable parameter~$r$. It is easy to show~\cite{rassouli2019data,raviv2022perfect} that $r$-perfect subset privacy implies that~$I(X;\Tilde{X})\le (n-r)\log_2 q$ bits. Therefore, it can be seen as a special case of bounding information leakage, which has additional uniform and clear guarantees regarding the structure of the privacy protection. It is to be noted that perfect subset privacy and the popular notion of \textit{differential privacy}~\cite{cuff2016differential} are two different privacy paradigms, and that neither implies the other, as explained in~\cite{raviv2022perfect}. 
We also remark that perfect subset privacy is different from collusion resistance~\cite{yu2019lagrange, catalano2005multiparty}, which requires~$I(X;\Tilde{X}_{\cR})=0$ for all $\cR \in \binom{[n]}{r}$. Furthermore, perfect subset privacy is to be distinguished from the notion of~\textit{individual security}~\cite{tarnopolsky2024coding, cohen2018secure, kobayashi2013secure, bhattad2005weakly}, which states that $I(\Tilde{X}_{\cR}; X_{\cQ})=0$ for all $\cR \in \binom{[n]}{r}$ and $\cQ \in \binom{[n]}{n-r}$. Neither of these notions is relevant for our problem definition, since in our setting the service provider has access to~$\tilde{X}$ in its entirety.
\subsection{Performance metrics}
We say that any scheme that realizes the computation model in Section~\ref{section:computationmodel} is an~\textit{$(n, q, r, d, S)$-scheme}, where~$n$ is the length of the data~$\boldx$ held by the user, $q$ is the field size, $r$ is the privacy parameter, $d$ is the maximum total degree of~$f$, and~$S$ is the maximum number of stragglers in the system. Below are the relevant metrics by which we judge the merit of such schemes. 

\begin{itemize}
    \item \textbf{The number of workers}, measured by the parameter~$N$.
    \item \textbf{Download cost}, measured by the number~$D$ of~$\bF_q$ symbols the user needs to download in order to complete the decoding process.
    \item \textbf{Side information}, measured by~$m$, the size of the random key~$\boldk$. As mentioned before, we would like~$m$ to be much smaller than~$n$, given the limited storage capability of the user.
\end{itemize}

We would like to minimize all three quantities, and indeed, Section~\ref{section:rsubsetprivacy} deals with minimizing the amount of side information held by the user, Section~\ref{section:reducingD} addresses the problem of reducing download cost, and  Section~\ref{section:minimizingN}, Section~\ref{section:binaryRM}, and Section~\ref{section:generalsupersets} optimize the number of workers needed for computation. 

\subsection{Our contributions}

We generalize the encoding scheme from~\cite{raviv2022perfect} for an encoding scheme that achieves perfect subset privacy for finite fields of any size. We reduce download cost through evaluations at an information set for a certain Reed-Muller code. Information sets are particular sets of codeword symbols that can be used to identify full codewords. To minimize the number of workers
needed, we first leverage Lagrange Coded Computing (LCC)~\cite{yu2019lagrange} to obtain a scheme that requires large finite fields. Then, we establish the notion of~\textit{information super-sets} for a code, which are multisets whose subsets of a given size all contain information sets for the same code, and present a scheme with no field size restriction. 

The paper then investigates how to construct information super-sets for binary and general Reed-Muller codes, an interesting open problem in and of itself. We come up with constructions of information super-sets for special cases of binary Reed-Muller codes, with which we construct information super-sets for general binary Reed-Muller codes using a recursive approach. We then generalize the above and recursively construct information super-sets for Reed-Muller codes over arbitrary alphabets based on constructions for certain special cases.

We summarize our contributions as such:
\begin{itemize}
    \item We provide an encoding procedure using a generator matrix of any linear code with a certain dual minimum distance property that achieves perfect subset privacy with no field size restriction.
    \item We present a scheme based on information sets for Reed-Muller codes that reduces download cost. The resulting download cost is equal to the dimension of some Reed-Muller code, and is significantly smaller than that of trivial solutions, as discussed next in detail.
    \item We present a scheme based on LCC that, given a certain restriction on field size, reduces the total number of workers needed for straggler resilience in most cases compared to repetition.
    \item We present a scheme based on information super-sets for Reed-Muller codes that reduces the total number of workers needed for straggler resilience to the size of such information super-sets and improves upon repetition and LCC-based schemes.
    \item We study the construction of information super-sets for first-order Reed-Muller codes when there are at most two stragglers, and then, using special case constructions, develop recursive constructions for general information super-sets for Reed-Muller codes.

\end{itemize}
\section{Obtaining~$r$-subset privacy}\label{section:rsubsetprivacy}


In this section, we present an encoding procedure which achieves perfect subset privacy. In \cite{raviv2022perfect}, Raviv et al. presented a method that obtains $r$-subset privacy when the finite field is sufficiently large. We first give a brief description of their method and then generalize it to finite fields of any size. 

For distinct elements $w_1, \ldots, w_n \in \bF_q$, let 
   $\bold{\Tilde{x}} \triangleq \boldx + \boldk V$,
where~$\boldk \sim \textrm{Unif}(\bF_{q}^r)$ and~$V = V(w_1, \ldots, w_n)$ is an~$r \times n$ Vandermonde matrix over~$w_1, \ldots, w_n$. This method has been proven \cite[Thm.~1]{raviv2022perfect} to achieve~$r$-subset privacy, but it requires~$q \ge n$. 

We now generalize this method to obtain a procedure that guarantees subset privacy without any field size restriction. Specifically, let \begin{align}\label{equation:kg}
    \bold{\Tilde{x}} \triangleq \boldx + \boldk G,
\end{align} where~$\boldk \sim \textrm{Unif}(\bF_{q}^m)$ and~$G \in \bF_q^{m \times n}$ is a generator matrix of any~$[n, m]_q$ linear code $\cC$ that satisfies $d_{\textrm{min}}(\cC^{\perp}) \ge r+1$, where~$d_{\textrm{min}}(\cdot)$ denotes minimum Hamming distance and~$\cC^{	\perp}$ is the dual code of~$\cC$. This implies that for all~$\cR \in \binom{[n]}{r}$, $G_\cR \in \bF_q^{m \times r}$ has full rank. 

\begin{remark}
    Notice that $m \ge r$ according to the Singleton bound, and the equality is achieved with Maximum Distance Separable (MDS) codes. For example, the scheme in \cite{raviv2022perfect} achieves equality with Reed-Solomon codes whose generator matrices are Vandermonde matrices. 
\end{remark}

\begin{remark}
    In our encoding scheme, there is an inherent tradeoff between field size and side information. For example, say $n = 16$ and $r=3$, which implies that $d_{\textrm{min}}(\cC^{\perp}) \ge r + 1 = 4$. If $q=2$, then we need $m=5$ \cite{Grassl:codetables}, whereas if we consider the field $GF(8)$, then we only need $m=4$. The minimum amount of side information needed is $m=3$, achieved by Reed-Solomon codes when the field size is $16$. 
\end{remark}

The following theorem establishes that this generalized encoding indeed achieves perfect subset privacy. 

\begin{theorem} \label{thm:privacy}
    For any~$r \in [n]$, the encoding procedure in~\eqref{equation:kg} satisfies $r$-subset privacy, i.e., $I(X_{\cR}; \Tilde{X}) = 0$ for all $\cR \in \binom{[n]}{r}$. 
\end{theorem}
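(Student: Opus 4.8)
The plan is to reduce the statement to one elementary fact about $G$ together with a one-time-pad--style averaging argument. As already noted in the text, the hypothesis $d_{\textrm{min}}(\cC^{\perp}) \ge r+1$ --- equivalently, that $G$, being a parity-check matrix of $\cC^{\perp}$, has every $r$ of its columns linearly independent --- guarantees that $G_{\cR} \in \bF_q^{m \times r}$ has rank $r$ for every $\cR \in \binom{[n]}{r}$ (and $m \ge r$ by the Singleton bound, so full column rank is the right notion). Hence the $\bF_q$-linear map $\bF_q^m \to \bF_q^r$ given by $\boldk \mapsto \boldk G_{\cR}$ is surjective, so each of its fibres has exactly $q^{m-r}$ elements. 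This is the only structural input I would need.

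Fixing an arbitrary $\cR \in \binom{[n]}{r}$, I would then show that $(X_{\cR}, \Tilde{X})$ has a product distribution, which is equivalent to $I(X_{\cR};\Tilde{X}) = 0$. Writing $\Tilde{X} = X + KG$ with $X \sim \textrm{Unif}(\bF_q^n)$ independent of $K \sim \textrm{Unif}(\bF_q^m)$, for any $a \in \bF_q^r$ and $b \in \bF_q^n$ I would compute
\begin{align*}
\Pr[X_{\cR}=a,\ \Tilde{X}=b]
&= \sum_{\boldk \in \bF_q^m} \Pr[K=\boldk]\,\Pr[X = b - \boldk G,\ X_{\cR}=a]\\
&= q^{-m}q^{-n}\,\bigl|\{\boldk \in \bF_q^m : \boldk G_{\cR} = b_{\cR}-a\}\bigr| = q^{-m}q^{-n}q^{m-r} = q^{-n-r},
\end{align*}
where the event $X_{\cR}=a$ forces $(b-\boldk G)_{\cR} = a$, i.e.\ $\boldk G_{\cR} = b_{\cR}-a$, and the fibre size from the first step is inserted. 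Since $X_{\cR}$ is uniform on $\bF_q^r$ and $\Tilde{X}$ is uniform on $\bF_q^n$ (a shift of a uniform vector for each fixed $\boldk$, hence uniform after averaging), the right-hand side is exactly $\Pr[X_{\cR}=a]\,\Pr[\Tilde{X}=b]$; thus $X_{\cR}$ and $\Tilde{X}$ are independent, and since $\cR$ was arbitrary, $r$-subset privacy follows.

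I do not expect a genuine obstacle here; the one subtlety worth stating is that the claim concerns the \emph{full} vector $\Tilde{X}$, not merely its $\cR$-coordinates, and the computation above accommodates this automatically because, once $b_{\cR}$ is fixed, the remaining $n-r$ coordinates of $b$ place no constraint on $\boldk$ and merely re-index the free choice of $X$. Everything else is routine: the surjectivity/fibre-count step is a line of linear algebra, and the remainder is the familiar principle that adding an independent mask that is uniform on the coordinates of interest erases all information about them.
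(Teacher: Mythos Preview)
Your proof is correct and follows essentially the same approach as the paper: both arguments sum over the key, use the full column rank of $G_{\cR}$ to count the $q^{m-r}$ keys satisfying $\boldk G_{\cR}=b_{\cR}-a$, and conclude independence of $X_{\cR}$ and $\Tilde{X}$. The only cosmetic difference is that you compute the joint law $\Pr[X_{\cR}=a,\Tilde{X}=b]$ directly and factor it, whereas the paper computes the conditional $\Pr[X_{\cR}=\boldx_{\cR}\mid \Tilde{X}=\Tilde{\boldx}]$ and shows it equals the marginal.
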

\begin{proof}
    For any~$\cR\in\binom{[n]}{r}$, by the law of total probability, we have 
    \begin{align*}
        \Pr&(X_\cR = \boldx_\cR|\Tilde{X}=\Tilde{\boldx}) = \sum_{\theta \in \bF_q^m} \Pr(K=\theta)\Pr(X_\cR = \boldx_\cR|\Tilde{X}=\Tilde{\boldx}, K = \mathbf{\theta})
    \end{align*}
    for every~$\boldx_\cR\in\bF_q^r$ and every~$\Tilde{\boldx}\in\bF_q^n$. Since $K \sim \textrm{Unif}(\bF_{q}^m)$, it follows that $\Pr(K=\theta) = 1/q^m$ for all $\theta \in \bF_q^m$. For any~$\cR\in\binom{[n]}{R}$, any~$\boldx_\cR\in\bF_q^r$, and any~$\Tilde{\boldx}\in\bF_q^n$, let 
    $$\cM_{\cR,\Tilde{\boldx},\boldx_\cR}=\cM \triangleq \{\theta \in \bF_q^m: \Tilde{\boldx}_\cR = \boldx_\cR + \theta G_\cR\}.$$
    The number of solutions $\theta \in \bF_q^m$ for $\theta G_\cR = \Tilde{\boldx}_\cR - \boldx_\cR$ is identical to the number of solutions for $\theta G_\cR = 0$ because of an affine transformation. Thus we have $$|\cM| = |\theta \in \bF_q^m: \{\theta G_\cR = 0\}|.$$ Now, $G_\cR \in \bF_q^{m\times r}$ has full rank, which indicates that $$|\{\theta \in \bF_q^m: \theta G_\cR = 0\}| = q^{m-r},$$ 
    so we have $|\cM| = q^{m-r}$. Since $\Pr(X_\cR = \boldx_\cR|\Tilde{X}=\Tilde{\boldx}, K = \mathbf{\theta}) = 1$ if $\theta \in \cM$ and $0$ otherwise, it follows that $$\Pr(X_\cR = \boldx_\cR|\Tilde{X}=\Tilde{\boldx}) = \frac{1}{q^m} \cdot q^{m-r} = 1/q^r,$$ which implies that the variable $X_\cR | \Tilde{X}=\Tilde{\boldx}$ is uniform over $\bF_{q}^r$. Since $X_\cR \sim \textrm{Unif}(\bF_{q}^r)$, we have $$I(X_{\cR}; \Tilde{X}) = H(X_\cR)-H(X_\cR|\Tilde{X})=0. \qedhere$$ 
\end{proof}
It follows from \eqref{equation:kg} that $\bold{\Tilde{x}} \in \boldx + \cC$, i.e., that~$\boldx$ and~$\bold{\Tilde{x}}$ belong to the same coset of~$\cC$. Therefore, it is possible to reduce \textit{upload} cost from~$n$ to $n-m$ by communicating to the service provider the identity of the coset of~$\cC$ containing~$\boldx$, e.g., by computing a syndrome, which in turn identifies a shift vector that defines the coset\footnote{Interestingly, recent works about privacy of quantized computation over the reals~\cite{raviv2022information, deng2023private, deng2023approximate} employed a similar strategy of communicating a syndrome for saving upload costs, albeit in a very different problem setting.}. Then, the service provider is able to perform computations based on the syndrome and respond to the user accordingly. Further details are given in Appendix~\ref{section:reducingUploadCost}. 


Now that we have demonstrated an encoding scheme which achieves~$r$-subset privacy for any field size~$q$, we turn our attention to devising a compatible computation scheme.

\section{Lowering download cost in straggler-free systems}\label{section:reducingD}

In this section, we first present a trivial scheme which serves as a benchmark. Then, we show that it is possible to reduce download cost through evaluation of a dedicated polynomial at an information set for Reed-Muller codes. 

The trivial scheme works as follows. The user sends~$\bold{\Tilde{x}}$ as defined in~\eqref{equation:kg} to the admin. The admin encodes~$\bold{\Tilde{x}}$ to~$\{\bold{\Tilde{x}} - \boldt G | \boldt \in \bF_q^m\}$ and distributes these to the~$N=q^m$ workers in the storage phase. Later in the computation phase, the admin receives~$f$ from the user. The admin sends the answer~$\cA \triangleq (f(\bold{\Tilde{x}} - \boldt G) | \boldt \in \bF_q^m)$, and then the user can retrieve the element that corresponds to the correct key~$\boldk \in \bF_q^m$ and get the desired result~$f(\boldx) = f(\bold{\Tilde{x}} - \boldk G)$. 
The download cost~$D$ of the trivial scheme equals~$q^m$. This scheme is termed ``trivial'' since it is equivalent to trying all possible guesses for the key. We will next show that it is possible to reduce download cost significantly, by using \textit{information sets} for Reed-Muller codes. 

\begin{definition}\label{def:infoset}\cite{key2006information}
    Let~$\cC$ be an $[n,k]_q$ linear code and let $G \in \bF_q^{k \times n}$ be a generator matrix for $\cC$. Then, a set $\cI\in\binom{[n]}{k}$ is an information set if~$G_\cI \in \bF_q^{k\times k}$ is of rank~$k$.
\end{definition}
Intuitively, any codeword can be reconstructed in full by observing its entries on any information set. 
Furthermore, we have the following lemma that will be useful later. The proof follows from the definition of minimum distance of linear codes.

\begin{lemma}\label{lemma:nminusdminplusone}
    Let $\cC$ be an $[n,k]_q$ linear code with minimum distance $d_{\textrm{min}}$ and let $G \in \bF_q^{k \times n}$ be a generator matrix for~$\cC$. Then, every subset of $[n]$ of size $n - d_{\textrm{min}} + 1$ contains an information set for $\cC$.
\end{lemma}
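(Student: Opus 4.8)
The plan is to prove the contrapositive flavored statement directly: take any $\cT \subseteq [n]$ with $|\cT| = n - d_{\textrm{min}} + 1$, and show that its complement's columns cannot "absorb" a full-rank obstruction, so the restriction $G_{\cT}$ must contain a $k \times k$ invertible submatrix. Concretely, let $\cT^c = [n] \setminus \cT$, so $|\cT^c| = d_{\textrm{min}} - 1$. First I would observe that it suffices to show $G_{\cT}$ has rank $k$: once $G_{\cT} \in \bF_q^{k \times |\cT|}$ has rank $k$, a standard fact about matrices (column-rank equals row-rank, and one can extract a maximal set of linearly independent columns) gives a subset $\cI \subseteq \cT$ of size $k$ with $G_{\cI}$ invertible, which is exactly an information set by Definition~\ref{def:infoset}.

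So the core is: $\operatorname{rank}(G_{\cT}) = k$. I would argue by contradiction. Suppose $\operatorname{rank}(G_{\cT}) < k$; then the rows of $G_{\cT}$ are linearly dependent, i.e., there exists a nonzero vector $\bolda \in \bF_q^k$ with $\bolda G_{\cT} = \mathbf{0}$. Now consider the full codeword $\boldc = \bolda G \in \cC$. Since $\bolda \neq \mathbf{0}$ and $G$ has rank $k$ (it is a generator matrix), $\boldc \neq \mathbf{0}$. But $\boldc$ vanishes on all coordinates in $\cT$, so $\operatorname{supp}(\boldc) \subseteq \cT^c$, whence the Hamming weight of $\boldc$ satisfies $\operatorname{wt}(\boldc) \le |\cT^c| = d_{\textrm{min}} - 1$. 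This contradicts the defining property of the minimum distance of a linear code, namely that every nonzero codeword has weight at least $d_{\textrm{min}}$. Therefore $\operatorname{rank}(G_{\cT}) = k$, and the extraction step above finishes the proof.

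There is essentially no serious obstacle here — the proof is a one-line consequence of the minimum-distance property once it is set up correctly. The only point requiring a modicum of care is the bookkeeping: making sure the dimensions line up ($|\cT| = n - d_{\textrm{min}} + 1 \ge k$, which holds by the Singleton bound $k \le n - d_{\textrm{min}} + 1$, so the claim is not vacuous), and being explicit that "contains an information set" means we first get rank $k$ on $\cT$ and then thin out to exactly $k$ columns. I would present it in the order: (i) reduce to showing $\operatorname{rank}(G_{\cT}) = k$; (ii) assume not, produce the dependent row-combination $\bolda$; (iii) turn $\bolda$ into a low-weight nonzero codeword supported on $\cT^c$; (iv) invoke $\operatorname{wt}(\boldc) \ge d_{\textrm{min}}$ for a contradiction; (v) conclude and extract the size-$k$ information set. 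The whole thing is short, so I would not belabor any single step.
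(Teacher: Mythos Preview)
Your proof is correct and is exactly the standard argument the paper has in mind; the paper does not spell out a proof at all, merely stating that it ``follows from the definition of minimum distance of linear codes,'' and your write-up is precisely that unpacking. Nothing to change.
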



We proceed with the definition of Reed-Muller codes \cite{abbe2020reed}. 

\begin{definition}\cite[Def.~1]{abbe2020reed}
    For parameters~$d$ and~$m$, the Reed-Muller code over $\bF_q$ is defined as $$RM_q(d, m) \triangleq \{\textrm{Eval}(f)\mid f \in \bF_q[\boldx], \deg(f) \le d\},$$ where $\deg(\cdot)$ denotes the total degree of a polynomial, and~$\textrm{Eval}(f)=(f(\boldz))_{\boldz\in \bF_q^m}$.
\end{definition}

The coordinates of $RM_q(d,m)$ are indexed using the~$q^m$ elements of $\bF_q^m$, in an arbitrary order. In \cite{key2006information}, an explicit description of information sets for Reed-Muller codes is provided. Specifically, by denoting $\bF_q \triangleq \{\alpha_0, \ldots, \alpha_{q-1}\}$, we have that $$\cI_{d,m} \triangleq \{(\alpha_{i_1}, \ldots, \alpha_{i_m})\mid \textstyle\sum_{k=1}^m i_k \le d, 0 \le i_k \le q - 1\}$$ is an information set for $RM_q(d, m)$. 
In addition, the next lemma follows directly from the definitions of information sets and Reed-Muller codes.
\begin{lemma}\label{lemma:evalatinfoset}
    Let~$h$ be an~$m$-variate polynomial over~$\bF_q$ of total degree~$d$. If evaluations of~$h$ at an information set for~$RM_q(d, m)$ are known, then it is possible to obtain evaluations of~$h$ at all other points in~$\bF_q^m$ by \emph{linearly combining} the evaluations at the information set.
\end{lemma}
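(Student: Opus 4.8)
The plan is to unwind the definitions and observe that the statement is almost a tautology once the right objects are identified. First I would set $N = q^m$ and let $G \in \bF_q^{k \times N}$ be the canonical generator matrix of $RM_q(d,m)$ whose columns are indexed by the points of $\bF_q^m$ and whose rows correspond to the evaluation vectors $\mathrm{Eval}(\boldx^{\bolde})$ of the monomials $\boldx^{\bolde}$ of total degree at most $d$; here $k = \dim RM_q(d,m)$. Writing $h = \sum_{\bolde} c_{\bolde}\, \boldx^{\bolde}$ with the sum over the monomials of total degree at most $d$, and letting $\boldc = (c_{\bolde})_{\bolde}$ be the coefficient row vector, we have $\mathrm{Eval}(h) = \boldc\, G$, i.e. the vector of all evaluations of $h$ on $\bF_q^m$ is precisely the codeword of $RM_q(d,m)$ with message $\boldc$. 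This is the point where the hypothesis $\deg(h) \le d$ is used: it guarantees $\mathrm{Eval}(h)$ lies in $RM_q(d,m)$, not merely in some larger Reed–Muller code.

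Next I would invoke Definition~\ref{def:infoset}: if $\cI$ is an information set for $RM_q(d,m)$, then the submatrix $G_{\cI} \in \bF_q^{k \times k}$ is invertible. The evaluations of $h$ at $\cI$ are exactly $\mathrm{Eval}(h)_{\cI} = \boldc\, G_{\cI}$, so from these known values we recover the coefficient vector as $\boldc = \mathrm{Eval}(h)_{\cI}\, G_{\cI}^{-1}$ — a linear operation on the observed evaluations. Then, for any point $\boldz \in \bF_q^m$, the evaluation $h(\boldz)$ is the $\boldz$-th coordinate of $\boldc\, G$, namely $\boldc\, G_{\boldz}$ where $G_{\boldz}$ is the column of $G$ indexed by $\boldz$. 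Substituting the expression for $\boldc$ gives $h(\boldz) = \mathrm{Eval}(h)_{\cI}\, G_{\cI}^{-1} G_{\boldz}$, which exhibits $h(\boldz)$ as a fixed $\bF_q$-linear combination (with coefficients given by the entries of $G_{\cI}^{-1} G_{\boldz}$, independent of $h$) of the evaluations of $h$ at the information set $\cI$. Taking this over all $\boldz \in \bF_q^m$ yields all remaining evaluations, completing the argument.

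There is no real obstacle here — the lemma is a direct consequence of the two definitions, as the paper itself flags ("follows directly from the definitions"). The only points that warrant a sentence of care are: (i) making explicit that a polynomial of total degree exactly $d$ (or less) has its evaluation vector inside $RM_q(d,m)$, so the message length matches $k = \dim RM_q(d,m)$ and the notion of "information set for $RM_q(d,m)$" is the relevant one; and (ii) noting that the transformation $G_{\cI}^{-1} G_{\boldz}$ does not depend on $h$, so the same linear combination works uniformly, which is what justifies the phrasing "linearly combining the evaluations at the information set." If one prefers a slicker presentation avoiding an explicit generator matrix, one can instead argue purely via interpolation: the polynomial $h$ of degree $\le d$ is uniquely determined by its values on any information set (this is exactly the content of $\cI$ being an information set, since $G_{\cI}$ invertible means the evaluation map restricted to $\cI$ is a bijection from the message space to $\bF_q^{\cI}$), and evaluation at any other point is itself a linear functional on the message space, hence a linear functional on the values at $\cI$. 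I would present the generator-matrix version for concreteness.
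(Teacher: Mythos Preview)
Your proposal is correct and is exactly the unwinding the paper has in mind: the paper gives no explicit proof, stating only that the lemma ``follows directly from the definitions of information sets and Reed-Muller codes,'' and your argument via $\mathrm{Eval}(h)=\boldc\,G$ and $h(\boldz)=\mathrm{Eval}(h)_{\cI}\,G_{\cI}^{-1}G_{\boldz}$ is precisely that direct verification.
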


We now present our scheme based on polynomial evaluation at information sets for Reed-Muller codes. The scheme assumes that the total degree of~$f$ is less than~$m(q-1)$; that is, $d<m(q-1)$. Otherwise, $RM_q(d,m)$ is not well-defined.

The admin encodes~$\bold{\Tilde{x}}$ as defined in~\eqref{equation:kg} to~$\{\bold{\Tilde{x}} - \boldt G | \boldt \in \cI_{d,m}\}$ and distributes them to the workers. At a later point, the admin receives a multivariate polynomial~$f$ of total degree~$d<m(q-1)$ from the user. Define a new multivariate polynomial~$g: \bF_q^m \rightarrow \bF_q$ as \begin{align}\label{equation:polygdef}
    g(\boldt) \triangleq f(\bold{\Tilde{x}} - \boldt G)
\end{align} for all~$\boldt \in \bF_q^{m}$. It can be seen that the total degree of~$g$ is at most~$d$, and that~$g(\boldk) = f(\boldx)$. The admin obtains \begin{align}\label{equation:answerfrominfoset}
    \cA \triangleq (f(\bold{\Tilde{x}} - \boldt G) | \boldt \in \cI_{d,m})
\end{align} from the workers 
and sends it to the user. Since the user has now received evaluations of~$g$ at an information set for~$RM_q(d, m)$, it follows from Lemma \ref{lemma:evalatinfoset} that the user is able to compute~$g(\boldk) = f(\boldx)$ from~$\cA$ and the key $\boldk$. 

It is straightforward to see that this scheme based on information set evaluations reduces download cost of the user with respect to the trivial scheme, provided that $d<m(q-1)$. To be precise, $D$ now equals the size of the information set, $|\cI_{d,m}|$, which is again equal to the dimension of the code $RM_q(d, m)$, written in the sequel as $\lambda(q,d,m)$. Note that to conclude the computation the user combines those~$D=\lambda(q,d,m)$ elements \textit{linearly}, which is normally much less costly than computing~$f$ locally since~$f$ is a polynomial on~$n$ variables.
\begin{theorem}\label{thm:reducedD}
    Assuming $d<m(q-1)$, the scheme based on evaluations at information sets for Reed-Muller codes requires download cost $D = \lambda(q,d,m)$, where $\lambda(q,d,m)$ is the dimension of $RM_q(d,m)$.
\end{theorem}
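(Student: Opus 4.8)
The plan is to establish two things: \emph{(i) correctness}, namely that the answer $\cA$ in~\eqref{equation:answerfrominfoset} together with the key $\boldk$ suffices to recover $f(\boldx)$; and \emph{(ii) the accounting}, namely that $\cA$ consists of exactly $\lambda(q,d,m)$ symbols of $\bF_q$. Part~(ii) is the assertion of the theorem, but it is worth recording~(i) as well, since it is what makes the stated download cost meaningful; and~(i) is almost immediate from Lemma~\ref{lemma:evalatinfoset} and the construction of the scheme.

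For correctness, first note that the map $\boldt \mapsto \Tilde{\boldx} - \boldt G$ is affine in $\boldt$: its $j$-th coordinate is $\Tilde{x}_j - (\boldt G)_j$, an affine $\bF_q$-linear function of $\boldt$. Substituting affine functions into the total-degree-$d$ polynomial $f$ can only preserve or lower the total degree, so the polynomial $g$ of~\eqref{equation:polygdef} satisfies $\deg g \le d$; hence $\textrm{Eval}(g) = (g(\boldz))_{\boldz \in \bF_q^m} \in RM_q(d,m)$ directly from the definition of the Reed--Muller code (no reduction modulo $x_i^q - x_i$ is needed, since $RM_q(d,m)$ is defined as the evaluation image of \emph{all} polynomials of total degree at most $d$). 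By construction $\cA$ lists precisely the values $g(\boldt)$ for $\boldt \in \cI_{d,m}$, i.e., the entries of the codeword $\textrm{Eval}(g)$ on the information set $\cI_{d,m}$. Lemma~\ref{lemma:evalatinfoset} then guarantees that the user can reconstruct $g(\boldz)$ for every $\boldz \in \bF_q^m$ --- in particular $g(\boldk)$ --- as an explicit linear combination of the entries of $\cA$. Finally, \eqref{equation:kg} gives $\Tilde{\boldx} - \boldk G = \boldx$, so $g(\boldk) = f(\Tilde{\boldx} - \boldk G) = f(\boldx)$, as desired.

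For the accounting, $\cA$ has $|\cI_{d,m}|$ entries in $\bF_q$, so $D = |\cI_{d,m}|$. The set $\cI_{d,m}$ is an information set for $RM_q(d,m)$ --- this is the explicit description of~\cite{key2006information} recalled above --- and by Definition~\ref{def:infoset} every information set of a linear code has cardinality equal to the dimension of that code. Therefore $|\cI_{d,m}| = \dim RM_q(d,m) = \lambda(q,d,m)$, which yields $D = \lambda(q,d,m)$. The hypothesis $d < m(q-1)$ is exactly what makes $RM_q(d,m)$ (and hence $\cI_{d,m}$ and $\lambda(q,d,m)$) well defined as a proper subcode of $\bF_q^{q^m}$, and it is also what ensures $\lambda(q,d,m) < q^m$, i.e., that there is a genuine saving over the trivial scheme.

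I do not expect a substantial obstacle: the argument is essentially bookkeeping layered on top of the properties of information sets already stated in the excerpt. The one step that deserves a word of care is the degree bound $\deg g \le d$ under the affine substitution $\boldt \mapsto \Tilde{\boldx} - \boldt G$, together with the observation that this total-degree bound is \emph{all} that is needed for $\textrm{Eval}(g)$ to be a codeword of $RM_q(d,m)$; after that, Lemma~\ref{lemma:evalatinfoset} handles the user's recovery and Definition~\ref{def:infoset} handles the cardinality count. If a fully explicit decoding description were wanted, one could additionally write out the fixed linear functionals --- depending only on $q$, $d$, $m$, and the chosen ordering of $\bF_q^m$ --- that express $g(\boldk)$ in terms of the coordinates of $\cA$, but this is not required for the download-cost claim.
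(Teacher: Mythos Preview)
Your proposal is correct and matches the paper's own reasoning: the paper does not give a separate formal proof of this theorem but simply observes in the preceding paragraphs that~$g$ has total degree at most~$d$, that Lemma~\ref{lemma:evalatinfoset} then lets the user recover~$g(\boldk)=f(\boldx)$, and that~$D=|\cI_{d,m}|=\lambda(q,d,m)$. Your write-up is a faithful and slightly more detailed expansion of exactly that argument, including the useful remark that the affine substitution preserves the total-degree bound so that~$\textrm{Eval}(g)\in RM_q(d,m)$ without any reduction.
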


It is well known \cite{peterson1972error} that if $q=2$, we have $\lambda(q=2,d,m) = \sum_{i=0}^d\binom{m}{i}$, and that if $d < q$, we have $\lambda(q,d<q,m) = \binom{m+d}{d}$. Hence, we have the following reduced download costs. 

\begin{corollary}\label{cor:sumofmchoosei}
    If $q=2$ and $d<m$, then the scheme based on evaluation at information sets of Reed-Muller codes requires download cost $D = \sum_{i=0}^d\binom{m}{i}$.
    If $d<\min\{q, m(q-1)\}$, then the scheme based on evaluation at information sets of Reed-Muller codes requires download cost $D = \binom{m+d}{d}$.
\end{corollary}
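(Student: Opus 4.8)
The plan is to observe that Corollary~\ref{cor:sumofmchoosei} is an immediate specialization of Theorem~\ref{thm:reducedD} combined with two classical formulas for the dimension of Reed-Muller codes, so the ``proof'' amounts to quoting those formulas and checking that the hypotheses line up. First I would recall from Theorem~\ref{thm:reducedD} that, under the standing assumption $d < m(q-1)$, the information-set-based scheme has download cost $D = \lambda(q,d,m)$, where $\lambda(q,d,m) = \dim RM_q(d,m)$. Thus everything reduces to evaluating $\lambda(q,d,m)$ in the two regimes of interest.

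For the binary case, I would invoke the well-known fact (e.g.~\cite{peterson1972error}) that $RM_2(d,m)$ has dimension $\sum_{i=0}^{d}\binom{m}{i}$; this is standard since a basis is given by the evaluation vectors of the monomials $\prod_{j\in T} x_j$ with $|T|\le d$, and there are exactly $\sum_{i=0}^d \binom{m}{i}$ such monomials once we use $x_j^2 = x_j$ over $\bF_2$. The hypothesis $q=2$ and $d<m$ guarantees $d < m = m(q-1)$, so Theorem~\ref{thm:reducedD} applies and yields $D = \sum_{i=0}^{d}\binom{m}{i}$.

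For the second statement, I would use the classical count that when $d < q$ the code $RM_q(d,m)$ has dimension $\binom{m+d}{d}$: indeed, when the total degree is below the field size, no reductions via $x_j^q = x_j$ occur among monomials of degree $\le d$, so a basis consists of all monomials $x_1^{a_1}\cdots x_m^{a_m}$ with $a_1+\cdots+a_m \le d$, and the number of such tuples is $\binom{m+d}{d}$ by a stars-and-bars argument. The hypothesis $d < \min\{q, m(q-1)\}$ simultaneously ensures $d<q$ (so the dimension formula holds) and $d < m(q-1)$ (so Theorem~\ref{thm:reducedD} is applicable), giving $D = \binom{m+d}{d}$.

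I do not anticipate a real obstacle here; the only thing to be careful about is confirming that the stated hypotheses indeed imply $d<m(q-1)$ in each case so that Theorem~\ref{thm:reducedD} may be cited, and that in the binary case $d<m$ is exactly the well-definedness condition. If one wanted to be self-contained rather than citing~\cite{peterson1972error}, the mild extra work would be spelling out the monomial-basis arguments above, but for a corollary it suffices to cite the known dimension formulas.
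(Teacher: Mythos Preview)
Your proposal is correct and matches the paper's approach exactly: the paper states the two dimension formulas $\lambda(2,d,m)=\sum_{i=0}^d\binom{m}{i}$ and $\lambda(q,d,m)=\binom{m+d}{d}$ for $d<q$ (citing \cite{peterson1972error}) immediately before the corollary, and the corollary is then presented without further proof as a direct consequence of Theorem~\ref{thm:reducedD}. Your additional care in verifying that the hypotheses imply $d<m(q-1)$ is a nice touch, but otherwise there is nothing to add.
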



One example that demonstrates the benefit of using information sets is taking~$\cC^\bot$ as the~$[n=2^t-1,~k \ge n-\frac{r}{2}t,~d_{\textrm{min}} \ge r+1]$ binary BCH code, where~$t \ge 3$ is an integer and~$r$ is an even positive integer~\cite[Sec. 6.1]{ecc_Lin_Cos}. The number~$m$ of parity symbols in~$\cC^\bot$, which equals the dimension of~$\cC$, is upper bounded by~$\frac{r}{2}\log_2(n+1)$. 
This implies download cost~$D \le \sum_{i=0}^d \binom{\frac{r}{2}\log_2(n+1)}{i} \le (1+\frac{r}{2}\log_2(n+1))^d = \operatorname{poly}(\log (n))$, assuming~$d$ is a constant and~$r=O(\operatorname{poly}(\log (n)))$. This download cost is a significant improvement over the trivial scheme with~$D = 2^m>n$, at the same side information size of $m \le \frac{r}{2}\log_2(n+1)=\operatorname{poly}(\log(n))$. As a numeric example, suppose~$n=2^{14}-1$, $r=10$, and $d=2$, which results in~$m=70$. The download cost is~$D \le 2486$ for our scheme, whereas $D=2^{70}$ for the trivial scheme. 


\section{Straggler-resilient frameworks}\label{section:minimizingN}

In this section, we seek to minimize the number of workers needed to compute the answer~$\cA$, in the presence of at most~$S$ stragglers. 
Clearly, $N=(S+1)D$ workers suffice to ensure that the admin can provide the answer in both schemes described in Section~\ref{section:reducingD}. This can be done by storing $(S+1)$ replicas of every $\Tilde{\boldx} - \boldt G$, where $\boldt \in \bF_q^m$ in the trivial scheme and $\boldt \in \cI_{d,m}$ in the information set based scheme.  
Next, we show that it is possible to compute the answer with fewer workers than this repetition-based approach.

First, we use Lagrange Code Computing (LCC)~\cite{yu2019lagrange}, which requires large fields. Then, we establish the notion of \textit{information super-sets}, based on information sets defined previously, and devise a scheme without any field size restriction. 

\subsection{LCC-based scheme} 
Recall that in the scheme described in Theorem~\ref{thm:reducedD}, given a polynomial~$f$, the admin employs the workers in order to compute~$\cA=( f(\tilde{\boldx}-\boldt G) \mid \boldt\in\cI_{d,m})$. 
Let $\cI_{d,m}=\{\boldt_1,\ldots,\boldt_\lambda\}$, where $\lambda=\lambda(q,d,m)$. 
The LCC scheme can be applied as-is by the admin over the dataset~$\{\tilde{\boldx}-\boldt_i G \mid i \in [\lambda]\}$. Namely, the admin constructs a univariate polynomial~$h$ of degree~$\lambda-1$ which evaluates to~$\tilde{\boldx}-\boldt_1 G,\ldots, \tilde{\boldx}-\boldt_\lambda G$ at designated~$\bF_q$ elements~$\omega_1,\ldots,\omega_\lambda\in\bF_q$, evaluates~$h$ at~$N$ distinct evaluation points~$\alpha_1,\ldots,\alpha_N\in\bF_q$ with~$N=(\lambda-1)d+S+1$, and stores these evaluations in the workers, who apply~$f$ over them and return the results. The admin can then obtain~$\cA$ while tolerating~$S$ stragglers. The LCC scheme, however, requires~$N\le q$.


\begin{theorem}\label{theorem:LCC}
    Using the LCC scheme, an $(n, q, r, d, S)$-scheme exists with $N = (\lambda-1)d + S + 1$ workers, provided that $q \ge N$, where~$\lambda$ is the dimension of~$RM_q(d,m)$. The download cost remains~$D=\lambda$.
\end{theorem}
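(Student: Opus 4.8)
The plan is to verify that the Lagrange Coded Computing (LCC) machinery of \cite{yu2019lagrange} applies verbatim to the dataset the admin must process, and to track how its parameters instantiate here. Recall that in the scheme of Theorem~\ref{thm:reducedD} the admin needs to obtain the $\lambda = \lambda(q,d,m)$ values $\cA = (f(\tilde{\boldx} - \boldt_i G))_{i\in[\lambda]}$, where $\cI_{d,m} = \{\boldt_1,\ldots,\boldt_\lambda\}$. Each $\tilde{\boldx} - \boldt_i G \in \bF_q^n$ can be viewed as a ``datapoint'', so we have an LCC instance over $K := \lambda$ datapoints, and the function being applied is the degree-$d$ polynomial $f$. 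First I would fix $\lambda$ distinct interpolation nodes $\omega_1,\ldots,\omega_\lambda \in \bF_q$ and define the unique univariate polynomial $h$ of degree at most $\lambda - 1$ (over $\bF_q^n$, i.e., coefficientwise) with $h(\omega_i) = \tilde{\boldx} - \boldt_i G$ for each $i \in [\lambda]$; this requires $\lambda \le q$, which is implied by the hypothesis $q \ge N = (\lambda-1)d + S + 1 \ge \lambda$.

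Next I would observe that $f \circ h$ is a univariate polynomial of degree at most $(\lambda - 1)d$, since $h$ has degree $\le \lambda - 1$ in its single variable and $f$ has total degree $\le d$. The admin picks $N = (\lambda-1)d + S + 1$ further distinct points $\alpha_1,\ldots,\alpha_N \in \bF_q$ (distinct from each other; they need not be disjoint from the $\omega_i$, though one can take them disjoint), evaluates $h(\alpha_j)$, and has worker $j$ store $h(\alpha_j)$ and return $f(h(\alpha_j)) = (f\circ h)(\alpha_j)$. Among the $N$ responses, at most $S$ are withheld by stragglers, leaving at least $N - S = (\lambda - 1)d + 1$ evaluations of the degree-$\le (\lambda-1)d$ polynomial $f \circ h$; by Lagrange interpolation these uniquely determine $f \circ h$, and in particular its values at $\omega_1,\ldots,\omega_\lambda$, which are exactly $f(\tilde{\boldx} - \boldt_i G)$ for $i \in [\lambda]$, i.e., the entries of $\cA$. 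Thus the admin recovers $\cA$ and forwards it; the user then applies Lemma~\ref{lemma:evalatinfoset} to obtain $g(\boldk) = f(\boldx)$ exactly as in Theorem~\ref{thm:reducedD}, so $D = \lambda$ is unchanged.

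Finally I would confirm the privacy claim is inherited: the admin's encoding $h$ is a deterministic function of $\tilde{\boldx}$ alone (the nodes $\omega_i,\alpha_j$ are public constants), so the workers and admin jointly see only a function of $\tilde{X}$, and Theorem~\ref{thm:privacy} already guarantees $I(X_\cR; \tilde{X}) = 0$ for all $\cR \in \binom{[n]}{r}$; no new leakage is introduced, so the construction is a valid $(n,q,r,d,S)$-scheme. The field-size requirement is $q \ge \max\{\lambda, N\} = N$ because we need $\lambda$ distinct nodes for $h$ and $N$ distinct evaluation points for the workers, and $N \ge \lambda$.

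The main obstacle is not any deep difficulty but rather bookkeeping: making sure the degree bound on $f\circ h$ is exactly $(\lambda-1)d$ (so that $N - S = (\lambda-1)d+1$ responses suffice for interpolation), checking that the interpolation nodes and evaluation points can be chosen distinct within $\bF_q$ under the stated $q \ge N$ (and noting $\lambda \le N$ so $\lambda$ nodes also fit), and confirming that applying $f$ componentwise to the vector-valued $h(\alpha_j) \in \bF_q^n$ and then interpolating componentwise is legitimate — which it is, since interpolation is linear and $f\circ h$ has the claimed degree in each coordinate. One should also remark that $\lambda \ge 1$ and handle the degenerate case $\lambda = 1$ (then $f\circ h$ is constant and $N = S+1$, recovering plain repetition), but this is a trivial edge case.
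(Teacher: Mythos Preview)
Your proposal is correct and follows exactly the approach the paper uses: the paper's argument for Theorem~\ref{theorem:LCC} is the short paragraph preceding the theorem, which applies LCC as a black box to the dataset $\{\tilde{\boldx}-\boldt_i G\}_{i\in[\lambda]}$, interpolates a degree-$(\lambda-1)$ polynomial $h$, and recovers the degree-$\le(\lambda-1)d$ polynomial $f\circ h$ from any $N-S$ worker responses. Your write-up is more detailed (explicitly checking $\lambda\le N$, the privacy inheritance via Theorem~\ref{thm:privacy}, and the edge case $\lambda=1$), but the route is identical; one small wording slip is that $f$ is not applied ``componentwise'' to $h(\alpha_j)\in\bF_q^n$ but rather as a single polynomial $\bF_q^n\to\bF_q$, so $f\circ h$ is scalar-valued and no coordinatewise interpolation is needed.
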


We note that if $d<S$, then $(\lambda-1)d+S+1<(S+1)\lambda$.
Next, we demonstrate another scheme based on the notion of  \textit{information super-sets} for Reed-Muller codes, defined shortly. The scheme 
has no field size restriction.

\subsection{The information super-set scheme} 

We first define \textit{information super-sets} for linear codes.

\begin{definition}\label{def:infosuperset}
    Let $\cC$ be an $[n,k, d_{\textrm{min}}]_q$ linear code. Then, for any $0 < S < d_{\textrm{min}}$, a multiset~$\cT \subseteq [n]$ is called an~$S$-information super-set for $\cC$ if every~$(|\cT|-S)$-subset\footnote{We say that~$\cS$ is a~$k$-subset of a multiset~$\cT$ if~$\cS$ is a multiset of size~$k$, and the multiplicity of each element in~$\cS$ is at most its multiplicity in~$\cT$.} of~$\cT$ contains an information set for~$\cC$.
\end{definition}
Let $g$ be the $m$-variate polynomial of degree at most $d$ defined in \eqref{equation:polygdef}. Suppose $\cT$ is an $S$-information super-set for $RM_q(d,m)$, and the number of workers is equal to $|\cT|$. The admin encodes $\bold{\Tilde{x}}$ to $\{\bold{\Tilde{x}} - \boldt G | \boldt \in  \cT\}$ and distributes them to the workers. From the definition of information super-sets, it follows that even in the presence of~$S$ stragglers, the admin receives the evaluations of~$g$ at an information set for~$RM_q(d,m)$.
It then follows from Lemma \ref{lemma:evalatinfoset} that the admin can compute the answer $\cA$ defined in \eqref{equation:answerfrominfoset}. Let $L(q,d,m,S)$ denote the minimum size of an $S$-information super-set for $RM_q(d,m)$. We have the following result.

\begin{theorem}
    For parameters~$n, q, r, d, S$, if there exists an~$[n,m]_q$ linear code~$\cC$ whose dual code~$\cC^\bot$ has minimum distance at least~$r+1$, and if~$d<m(q-1)$, then there exists an $(n, q, r, d, S)$-scheme with $N=L(q,d,m,S)$ workers.
\end{theorem}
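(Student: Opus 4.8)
The plan is to assemble the pieces that have already been developed in the excerpt, since the statement is essentially a composition of Theorem~\ref{thm:privacy}, the information-super-set scheme described immediately before the theorem, and Lemma~\ref{lemma:evalatinfoset}. First I would invoke Theorem~\ref{thm:privacy}: since~$\cC^\bot$ has minimum distance at least~$r+1$, the encoding~$\bold{\Tilde{x}} = \boldx + \boldk G$ of~\eqref{equation:kg} with~$\boldk \sim \mathrm{Unif}(\bF_q^m)$ satisfies $r$-subset privacy, so the storage phase leaks nothing about any~$r$-subset of~$\boldx$; note this requires exactly the stated hypothesis on~$\cC$, and the key size is~$m$. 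This settles the privacy requirement of an~$(n,q,r,d,S)$-scheme regardless of what the admin and workers do afterward, because privacy is a property of~$\Tilde{X}$ alone.

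Next I would describe the storage and computation phases concretely, following the ``information super-set scheme'' paragraph. Fix an~$S$-information super-set~$\cT$ for~$RM_q(d,m)$ of minimum size~$L(q,d,m,S)$; such a multiset exists because~$S < d_{\mathrm{min}}(RM_q(d,m))$ is forced by the problem (indeed the trivial super-set obtained by taking enough copies of a fixed information set always works, so~$L$ is finite). Since~$d < m(q-1)$, the code~$RM_q(d,m)$ is well-defined and is a proper subcode, so Definition~\ref{def:infosuperset} applies. In the storage phase the admin sets~$N = |\cT| = L(q,d,m,S)$, computes~$\{\bold{\Tilde{x}} - \boldt G \mid \boldt \in \cT\}$ (one vector per worker, counting multiplicity), and distributes them. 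In the computation phase the user sends~$f$ of total degree~$d < m(q-1)$; the admin and workers implicitly work with~$g(\boldt) \triangleq f(\bold{\Tilde{x}} - \boldt G)$ as in~\eqref{equation:polygdef}, which has total degree at most~$d$ and satisfies~$g(\boldk) = f(\boldx)$. Worker indexed by~$\boldt \in \cT$ returns~$f(\bold{\Tilde{x}} - \boldt G) = g(\boldt)$.

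Then I would handle stragglers. At most~$S$ of the~$N = |\cT|$ workers fail, so the admin receives~$g$ evaluated on a sub-multiset of~$\cT$ of size at least~$|\cT| - S$; by the definition of an~$S$-information super-set, this sub-multiset contains an information set~$\cI$ for~$RM_q(d,m)$. Therefore the admin knows~$g$ on an information set for~$RM_q(d,m)$, and since~$\deg(g) \le d$, Lemma~\ref{lemma:evalatinfoset} lets the admin linearly combine these evaluations to recover~$g$ at every point of~$\bF_q^m$, in particular on~$\cI_{d,m}$. The admin then sends~$\cA = (f(\bold{\Tilde{x}} - \boldt G) \mid \boldt \in \cI_{d,m})$ as in~\eqref{equation:answerfrominfoset}; by Lemma~\ref{lemma:evalatinfoset} again (or directly), the user recovers~$g(\boldk) = f(\boldx)$ from~$\cA$ and the key~$\boldk$. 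This shows correctness with~$N = L(q,d,m,S)$ workers and completes the construction.

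I do not expect any genuine obstacle here, as every ingredient is quoted verbatim from the surrounding text; the only points requiring a sentence of care are (i) checking that an~$S$-information super-set for~$RM_q(d,m)$ exists at all, i.e. that~$L(q,d,m,S)$ is finite — which follows since~$S < d_{\mathrm{min}}(RM_q(d,m))$ and repeating a fixed information set~$S+1$ times yields a valid, if non-optimal, super-set — and (ii) observing that privacy and correctness are established by independent arguments, privacy depending only on the hypothesis~$d_{\mathrm{min}}(\cC^\bot) \ge r+1$ and correctness only on~$d < m(q-1)$, exactly the two hypotheses in the statement. If there is any subtlety at all, it is purely bookkeeping: making sure the multiset semantics (workers with identical stored vectors when~$\cT$ has repeated elements) are consistent between the ``number of workers equals~$|\cT|$'' claim and the straggler count, which the footnote in Definition~\ref{def:infosuperset} already pins down.
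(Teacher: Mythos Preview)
Your proposal is correct and follows precisely the paper's approach: the theorem is proved by the discussion immediately preceding it (and the enumerated scheme immediately after), combining Theorem~\ref{thm:privacy} for the privacy guarantee, the information-super-set storage rule together with Definition~\ref{def:infosuperset} for straggler resilience, and Lemma~\ref{lemma:evalatinfoset} for decoding~$g(\boldk)=f(\boldx)$. If anything, your write-up is more careful than the paper's, explicitly noting the finiteness of~$L(q,d,m,S)$ via the~$(S+1)$-fold repetition argument and the multiset bookkeeping for duplicated workers.
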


Our information super-set based $(n, q, r, d, S)$-scheme for $d<m(q-1)$ can be described as follows. Let~$\cT$ be an $S$-information super-set for $RM_q(d,m)$ of size~$L(q,d,m,S)$.

\begin{enumerate}[label=\roman*)]
    \item The user encodes~$\boldx$ to $\bold{\Tilde{x}} \triangleq \boldx + \boldk G$ according to \eqref{equation:kg} using some random key $\boldk \sim K = \textrm{Unif}(\bF_{q}^m)$ and sends $\bold{\Tilde{x}}$ to the admin. 
    \item The admin encodes~$\bold{\Tilde{x}}$ to  $\bold{\hat{x}} = (\hat{\boldx}_1, \ldots, \hat{\boldx}_N) = (\bold{\Tilde{x}} - \boldt G | \boldt \in \cT)$ 
    and distributes them to $N=L(q,d,m,S)$ workers.
    \item At a later point, the user sends $f$ to the admin, who then sends it to the workers. The workers apply~$f$ on their data, and send the results back to the admin.
    \item The admin obtains $\{f(\bold{\Tilde{x}} - \boldt G)\}_{\boldt \in \cI}$, where~$\cI \subseteq\cT$ is an information set of~$RM_q(d,m)$,
    and sends $\cA \triangleq (f(\bold{\Tilde{x}} - \boldt G))_{\boldt \in \cI_{d,m}}$  to the user. 
    \item The user linearly combines the values in~$\cA$ to obtain~$g(\boldk)=f(\boldx)$.
\end{enumerate} 

Next, in order to understand the value of~$N$ in the above scheme, we construct explicit $S$-information super-sets. First, we present a lower- and upper-bound on the size of a minimum~$S$-information super-set. The proofs of the upcoming lemmas
follow directly from the definition of information super-sets (Definition~\ref{def:infosuperset}), and from Lemma~\ref{lemma:nminusdminplusone}.

\begin{lemma}\label{lemma:lowerbound}
    $L(q,d,m,S) \ge \lambda(q,d,m)+S$, where $\lambda(q,d,m)$ is the dimension of $RM_q(d,m)$.
\end{lemma}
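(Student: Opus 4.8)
The plan is to prove the lower bound $L(q,d,m,S) \ge \lambda(q,d,m) + S$ directly from Definition~\ref{def:infosuperset}. Let $\cT$ be any $S$-information super-set for $RM_q(d,m)$, and write $\lambda = \lambda(q,d,m)$ for the dimension of that code. By definition, every $(|\cT| - S)$-subset of $\cT$ must contain an information set for $RM_q(d,m)$, and an information set has size exactly $\lambda$ (the dimension of the code). Hence each such $(|\cT|-S)$-subset has size at least $\lambda$, which immediately forces $|\cT| - S \ge \lambda$, i.e., $|\cT| \ge \lambda + S$. Since this holds for every $S$-information super-set $\cT$, it holds in particular for a minimum one, giving $L(q,d,m,S) \ge \lambda + S$.

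The only subtlety worth spelling out is the multiset bookkeeping, since $\cT$ is a multiset and ``$(|\cT|-S)$-subset'' is in the multiset sense defined in the footnote to Definition~\ref{def:infosuperset}. One must check that a $(|\cT|-S)$-subset of $\cT$ genuinely exists whenever $|\cT| \ge S$ (just remove $S$ elements counted with multiplicity), and that an information set, being an honest set of $\lambda$ distinct coordinates, can only sit inside a multiset of size $\ge \lambda$. Neither point is hard: the first is immediate, and the second follows because distinct coordinates contribute distinct elements to the multiset, so the underlying multiset has at least $\lambda$ elements. I would state these in a sentence rather than belabor them.

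There is essentially no main obstacle here — this lemma is a one-line counting argument, and the excerpt itself says ``the proofs of the upcoming lemmas follow directly from the definition of information super-sets.'' The only thing to be careful about is not to conflate an information set (size $\lambda$) with an information super-set, and to make sure the inequality is applied to an \emph{arbitrary} $\cT$ before passing to the minimum. A clean write-up would be: fix $\cT$; pick any $(|\cT|-S)$-subset $\cS$ (possible since $S < d_{\mathrm{min}}(RM_q(d,m)) \le |\cT|$ is implicit, or simply since a super-set must be large enough for the definition to be non-vacuous); $\cS$ contains an information set, so $|\cS| \ge \lambda$; therefore $|\cT| = |\cS| + S \ge \lambda + S$; take the minimum over $\cT$ to conclude.
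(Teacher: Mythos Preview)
Your proof is correct and follows essentially the same approach as the paper: both argue that a $(|\cT|-S)$-subset must contain an information set of size $\lambda$, forcing $|\cT|-S\ge\lambda$. The paper phrases it as a proof by contradiction while you do it directly, but the content is identical.
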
 
\begin{proof}
    Let~$\cT$ be an $S$-information super-set for~$RM_q(d,m)$ of size~$L(q,d,m,S)$. Suppose~$$L(q,d,m,S) < \lambda(q,d,m)+S,$$ which means~$L(q,d,m,S) - S < \lambda(q,d,m)$. Then, by Definition~\ref{def:infosuperset}, there is a contradiction. That is, $\cT$ is not an~$S$-information super-set for~$RM_q(d,m)$ because every~$|\cT|-S$ subset of~$\cT$ cannot contain an information set for~$RM_q(d,m)$, according to Definition~\ref{def:infoset}.
\end{proof}

\begin{lemma}\label{lemma:weaklbound}
    Let~$d_{\textrm{min}}$ be the minimum distance of $RM_q(d,m)$. If $0 < S < d_{\textrm{min}}$, then 
        $L(q, d, m, S) \le q^m - d_{\textrm{min}} + S + 1.$    In particular,
    if~$q=2$ then $L(q, d, m, S) \le 2^m - 2^{m-d} + S + 1$. 
\end{lemma}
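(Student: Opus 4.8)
The plan is to prove the bound by exhibiting an explicit $S$-information super-set of the stated size and then invoking Lemma~\ref{lemma:nminusdminplusone}. Recall that $RM_q(d,m)$ has block length $n=q^m$. Since by hypothesis $0<S<d_{\textrm{min}}$, we have $S+1\le d_{\textrm{min}}$, and hence $q^m-d_{\textrm{min}}+S+1\le q^m$. Thus there is room to take $\cT$ to be an \emph{ordinary} subset of $[q^m]$ (every coordinate of multiplicity one) of size exactly $q^m-d_{\textrm{min}}+S+1$; no repeated coordinates are needed.

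Next I would verify the super-set property directly against Definition~\ref{def:infosuperset}. Any $(|\cT|-S)$-subset $\cS\subseteq\cT$ has size $|\cT|-S=q^m-d_{\textrm{min}}+1=n-d_{\textrm{min}}+1$. By Lemma~\ref{lemma:nminusdminplusone}, every subset of $[n]$ of size $n-d_{\textrm{min}}+1$ contains an information set for the code; applying this with $\cC=RM_q(d,m)$ shows that $\cS$ contains an information set for $RM_q(d,m)$. Since $\cS$ was an arbitrary $(|\cT|-S)$-subset, $\cT$ is an $S$-information super-set for $RM_q(d,m)$, and therefore $L(q,d,m,S)\le|\cT|=q^m-d_{\textrm{min}}+S+1$.

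For the binary specialization I would simply substitute the well-known value of the minimum distance of a binary Reed--Muller code, namely $d_{\textrm{min}}(RM_2(d,m))=2^{m-d}$ \cite{peterson1972error} (valid for $0\le d\le m$), which yields $L(2,d,m,S)\le 2^m-2^{m-d}+S+1$, as claimed.

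There is no substantial obstacle here: the only point requiring any care is ensuring that the claimed size does not exceed the block length $q^m$, so that a genuine set (rather than a multiset with repeated coordinates, whose ``subsets'' behave differently) already suffices — and this is exactly what the hypothesis $S<d_{\textrm{min}}$ guarantees. The lemma is in effect an immediate corollary of Lemma~\ref{lemma:nminusdminplusone}; the genuinely interesting direction, namely tightening this upper bound for specific Reed--Muller families and comparing it against the lower bound of Lemma~\ref{lemma:lowerbound}, is what the later sections on explicit constructions address.
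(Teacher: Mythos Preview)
Your proposal is correct and follows essentially the same approach as the paper: pick any subset $\cT\subseteq[q^m]$ of size $q^m-d_{\textrm{min}}+S+1$, observe that every $(|\cT|-S)$-subset then has size $q^m-d_{\textrm{min}}+1$ and hence contains an information set by Lemma~\ref{lemma:nminusdminplusone}, and specialize via $d_{\textrm{min}}=2^{m-d}$ for $q=2$. Your explicit remark that $S<d_{\textrm{min}}$ guarantees $|\cT|\le q^m$ (so an ordinary set suffices) is a nice clarification the paper leaves implicit.
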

\begin{proof}
    From Lemma~\ref{lemma:nminusdminplusone}, it follows that every subset of~$[q^m]$ of size~$q^m-d_{\textrm{min}}+1$ contains an information set for~$RM_q(d,m)$. Let~$\cT $ be a subset of~$[q^m]$. If~$|\cT| = q^m-d_{\textrm{min}}+S+1$, then every~$|\cT|-S$ subset of~$\cT$  contains an information set for~$RM_q(d,m)$, making~$\cT$ an~$S$-information super-set for~$RM_q(d,m)$. Therefore, the minimum size of an~$S$-information super-set for~$RM_q(d,m)$ is at most~$q^m-d_{\textrm{min}}+S+1$. The second statement simply follows from the fact that the minimum distance of~$RM(d,m)$ is~$2^{m-d}$.
\end{proof}
\begin{corollary}
    Let~$d_{\textrm{min}}$ be the minimum distance of~$RM_q(d,m)$. If $d<m(q-1)$ and $0 < S < d_{\textrm{min}}$, then the number~$N$ of workers required in the information super-set scheme satisfies~$N \le q^m - d_{\textrm{min}} + S + 1$.
    In particular, if $q=2$ then $N \le 2^m - 2^{m-d} + S + 1$.
\end{corollary}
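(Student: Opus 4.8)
The plan is to derive this Corollary immediately from two ingredients already established: the information super-set scheme of the preceding (unnumbered) Theorem, which uses exactly $N = L(q,d,m,S)$ workers whenever $d < m(q-1)$, and Lemma~\ref{lemma:weaklbound}, which upper-bounds $L(q,d,m,S)$. So the skeleton is: invoke the scheme to reduce the claim about $N$ to a claim about $L(q,d,m,S)$, then quote Lemma~\ref{lemma:weaklbound}.

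Concretely, first I would note that the hypotheses $d < m(q-1)$ and $0 < S < d_{\textrm{min}}$ are precisely what is needed for the information super-set scheme to be well-defined: $d < m(q-1)$ guarantees $RM_q(d,m)$ exists, and $0 < S < d_{\textrm{min}}$ is the standing assumption in Definition~\ref{def:infosuperset} under which an $S$-information super-set makes sense. Under these conditions, the scheme described in items i)--v) above runs with $N = L(q,d,m,S)$ workers. Second, I would apply Lemma~\ref{lemma:weaklbound}, which asserts $L(q,d,m,S) \le q^m - d_{\textrm{min}} + S + 1$ for $0 < S < d_{\textrm{min}}$, to conclude $N \le q^m - d_{\textrm{min}} + S + 1$. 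For the special case $q = 2$, I would substitute the known value $d_{\textrm{min}} = 2^{m-d}$ of the minimum distance of $RM(d,m)$ (equivalently, cite the ``in particular'' clause of Lemma~\ref{lemma:weaklbound} directly) to obtain $N \le 2^m - 2^{m-d} + S + 1$.

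There is no real obstacle here; the statement is a bookkeeping corollary that repackages Lemma~\ref{lemma:weaklbound} in terms of the worker count of the scheme. The only point requiring a half-sentence of care is checking that the parameter regime of the Corollary is contained in the parameter regimes where both the scheme and Lemma~\ref{lemma:weaklbound} are valid, which it is by construction. I would therefore keep the proof to two or three sentences, essentially ``$N = L(q,d,m,S)$ by the information super-set scheme, and the bound follows from Lemma~\ref{lemma:weaklbound}; for $q=2$ use $d_{\textrm{min}} = 2^{m-d}$.''
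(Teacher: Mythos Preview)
Your proposal is correct and matches the paper's approach exactly: the Corollary is stated in the paper without proof, as it follows immediately from the preceding Theorem (giving $N = L(q,d,m,S)$) and Lemma~\ref{lemma:weaklbound}. Your two-sentence version is precisely what the paper implicitly intends.
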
\label{cor:infosupersetbound}

We also note that $L(q,d,m,S) \le (S+1) \lambda(q,d,m)$, as repeating an information set $(S+1)$ times will give an $S$-information super-set.



\begin{remark}
    Notions related to information super-sets have appeared previously in the literature. 
In~\cite{rudow2021locality}, \textit{computational locality} is studied for Reed-Muller codes; we refer readers to Appendix~\ref{section:computationallocality} for details. 
Additionally, some results regarding information super-set sizes are implied by existing works~\cite{guruswami2017efficiently} about punctured Reed-Muller codes (see Appendix~\ref{section:puncturedrm}).
\end{remark}

Due to the importance of computations over~$\bF_2$, in Section~\ref{section:binaryRM} we focus on~$q=2$, and construct several information super-sets which outperform the bound in Lemma~\ref{lemma:weaklbound}. Our results on information super-sets for binary Reed-Muller codes can be summarized as follows. We come up with a construction of a~$1$-information super-set for~$RM_2(1,m)$, which yields~$L(2,1,m,1) = m+2$ if~$m$ is even and~$L(2,1,m,1) \le m+3$ if~$m$ is odd. A construction of a~$2$-information super-set for~$RM_2(1,m)$ of size~$2m+1$ is also presented, and yet through a greedy computer search on a set-theoretic formulation
we obtain smaller~$2$-information super-sets for~$RM_2(1,m)$. 
Then, using the~$(u,u+v)$-construction of binary Reed-Muller codes~\cite{abbe2020reed}, we come up with a construction of~$S$-information super-sets for general binary Reed-Muller codes~$RM_2(d,m)$ which outperforms Lemma~\ref{lemma:weaklbound} in most cases. 

Later, in Section~\ref{section:generalsupersets}, we deal with non-binary Reed-Muller codes, i.e., $q>2$. We first come up with a construction of a~$1$-information super-set for~$RM_q(1,m)$ that yields~$L(q>2,1,m,1) = m+2$. Then, constructions of~$2$-information super-sets are presented.
Generalizing the~$(u,u+v)$-construction of binary Reed-Muller codes, we come up with a construction of~$S$-information super-sets for general Reed-Muller codes~$RM_q(d,m)$, outperforming direct application of Lemma~\ref{lemma:weaklbound}.
In Table~\ref{tab:summary}, we give a summary of various methods used in this work in terms of their restrictions and the number of workers needed; numeric examples are given in Table~\ref{tab:info_sup_set_size} in Section~\ref{section:binaryRM} and in Table~\ref{tab:gen_info_sup_set_size3} in Section~\ref{section:generalsupersets}.

\begin{table*}[ht!]
    \centering
    \resizebox{5.5in}{!}{
\begin{tabular}{ |c|c|c|c|c|c|}
\hline
 & Repetition & LCC & Lemma~\ref{lemma:weaklbound} & Section~\ref{section:binaryRM} & Section~\ref{section:generalsupersets} \\
\hline
 Restriction & None & $q\ge N$, $d<m(q-1)$  & $d<m(q-1)$ & $q=2$, $d<m$ & $d<m(q-1)$\\
 \hline
$N$ & $(S+1)\lambda$ & $(\lambda-1)d+S+1$
 & $q^m-d_{\textrm{min}}+S+1$ &  Theorem~\ref{thm:elldms} & Table~\ref{tab:gen_info_sup_set_size3}
   \\
 \hline
\end{tabular}
}
 \caption{Summary of various methods. Here $\lambda=\lambda(q,d,m)$ and $d_{\textrm{min}}$ denotes the minimum distance of~$RM_q(d,m)$.}
    \label{tab:summary}
\end{table*}

\section{Information super-sets for binary Reed-Muller codes}\label{section:binaryRM}

This section focuses on constructing information super-sets for binary Reed-Muller codes ($q=2$). For convenience, we omit the parameter~$q$ in this section.
Our techniques rely on a recursive computation which stems from the famous~$(u,u+v)$-construction of Reed-Muller codes~\cite{abbe2020reed}. We begin with necessary special cases in Section~\ref{section:onemone} and Section~\ref{section:onemtwo}, and then proceed with a recursive computation in Section~\ref{section:dms}.

\subsection{$1$-information super-sets for~$RM(1,m)$}\label{section:onemone}
Let $M$ be an $m \times 2^m$ matrix comprised of all binary column vectors of length $m$.  Define an $(m+1) \times 2^m$ matrix~$\tilde{G}$ as: $$\tilde{G}= \begin{bmatrix}
    M \\
      \1 
\end{bmatrix},$$
i.e., $\tilde{G}$ is obtained by appending an all-one row to $M$. 
It can be verified that $\tilde{G}$ is a generator matrix for $RM(1,m)$. Let $G$ be the $(m+1) \times 2^m$ matrix obtained by row-reducing $\tilde{G}$ via adding the sum of all rows of $M$ to the all-one row. Then, $G$ is also a generator matrix for $RM(1,m)$. 
The columns of $G$ are indexed by subsets of $[m]$; specifically, the entries of the index set indicate the positions of ones in the column, except the last row. For instance, for~$m=4$, the index set~$\{1,3\}$ corresponds to the column~$[1,0,1,0]^\intercal$ in~$M$.
It can also be observed that all columns of $G$ have odd Hamming weights. Without loss of generality, we assume that the columns of $G$ are in the lexicographical ordering of the index sets. 

Since~$RM(1,m)$ is the extended Hadamard code, whose dimension is~$m+1$, it follows that $\lambda(d=1,m)=m+1$, and the bound in Lemma~\ref{lemma:lowerbound} is $L(d=1,m,S) \ge m+1+S$. This bound is attained with equality in the case where~$d=1$, $S=1$, and~$m$ is even, as stated next. 

\begin{theorem}\label{thm:mplustwothree}
    There exists an explicit construction of a $1$-information super-set for $RM(1,m)$, which yields $L(1,m,1) = m+2$ when $m$ is even and $L(1,m,1) \le m+3$ when $m$ is odd.
\end{theorem}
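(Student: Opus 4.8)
The plan is to exhibit an explicit multiset $\cT$ of columns of $G$ (the modified generator matrix for $RM(1,m)$ described above) of size $m+2$ when $m$ is even, and show that every $(m+1)$-subset of $\cT$ contains an information set for $RM(1,m)$; since $\lambda(1,m)=m+1$, an information set here is just a set of $m+1$ columns whose corresponding submatrix of $G$ is invertible. The natural candidate is to take the $m$ ``singleton'' columns indexed by $\{1\},\{2\},\ldots,\{m\}$ together with two more carefully chosen columns. In the matrix $G$, the column indexed by a subset $A\subseteq[m]$ has $e_A$ (the indicator vector of $A$) in its first $m$ coordinates and, after the row reduction, a last coordinate equal to $1-|A|\bmod 2$ — i.e. $1$ iff $|A|$ is even. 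So all chosen columns have odd-size index sets (to keep the last coordinate $0$ for singletons) except possibly the extra ones; the analysis will need the last coordinate, so at least one extra column must have an even-size index set, e.g. $\emptyset$ (giving last coordinate $1$) or a pair $\{i,j\}$. A clean choice: $\cT=\{\{1\},\dots,\{m\}\}\cup\{\emptyset,\{1,2\}\}$ for appropriate parity of $m$.

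The key steps, in order, would be: (i) write down $G$ explicitly in terms of index sets and record that the column for $A$ is $\binom{e_A}{[\,|A|\text{ even}\,]}\in\bF_2^{m+1}$; (ii) observe that the $m+1$ columns indexed by $\{1\},\dots,\{m\},\emptyset$ already form an information set, since the resulting $(m+1)\times(m+1)$ matrix is upper/lower triangular with ones on the diagonal (the singletons give the identity in the first $m$ rows and zero in the last, and $\emptyset$ contributes the last unit vector); (iii) therefore it suffices to handle the case where the straggler removes one of these $m+1$ distinguished columns — if it removes the extra column $\{1,2\}$ we are done by (ii); if it removes $\emptyset$, we must show $\{1\},\dots,\{m\},\{1,2\}$ is an information set; if it removes some singleton $\{j\}$, we must show the remaining $m$ distinguished columns together with $\{1,2\}$ is an information set; (iv) reduce each such determinant computation over $\bF_2$ to a tiny combinatorial statement. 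Removing $\{j\}$ for $j\notin\{1,2\}$: the matrix has rows $1..m$ given by all unit vectors except $e_j$, plus the column $e_1+e_2$ (with last coord $1$) and the column $\emptyset$ is gone — wait, $\emptyset$ is still present in this subcase; one checks the $m+1$ columns $\{i\}_{i\ne j}\cup\{\emptyset,\{1,2\}\}$ span, because the singletons give all unit vectors except $e_j$, $\emptyset$ gives $e_{m+1}$, and $e_1+e_2+e_{m+1}$ recovers $e_1+e_2$, hence $e_j$ is still missing — so this is \emph{not} an information set, which tells us the construction must be chosen more cleverly. This signals that the right extra columns are not $\emptyset$ and $\{1,2\}$ but rather something whose index sets interact with \emph{all} singletons, e.g. a column indexed by $[m]$ itself (all-ones in the first $m$ coords).

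Accordingly I would instead take $\cT=\{\{1\},\dots,\{m\},\ [m],\ \emptyset\}$ when $m$ is even (so that $[m]$ has even size and last coordinate $1$, as does $\emptyset$): then the column for $[m]$ is $(1,1,\dots,1,1)^\intercal$, the column for $\emptyset$ is $(0,\dots,0,1)^\intercal$. Dropping any one singleton $\{j\}$ leaves $\{e_i:i\ne j\}$, $\1_{m}$ with trailing $1$, and $e_{m+1}$; subtracting $e_{m+1}$ from $\1_m$-column gives $\sum_i e_i$, and adding back all the $e_i$, $i\ne j$, recovers $e_j$ — so this \emph{is} an information set. Dropping $\emptyset$ leaves the $m$ singletons and $[m]$: the singletons give $e_1,\dots,e_m$ and zero in the last row, $[m]$ gives $\1_m$ plus a $1$ in the last row, so subtracting off $e_1,\dots,e_m$ yields $e_{m+1}$ — again an information set. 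Dropping $[m]$ leaves the $m$ singletons and $\emptyset$, which is the triangular case from step (ii). Hence every $(m+1)$-subset of this $\cT$ of size $m+2$ contains an information set, proving $L(1,m,1)\le m+2$; combined with Lemma \ref{lemma:lowerbound} this gives equality when $m$ is even. For $m$ odd, the parity of $[m]$ is odd so its column has trailing $0$ and the ``drop $\emptyset$'' subcase fails; the fix is to add one more column (e.g. also include a column indexed by $[m]\setminus\{1\}$, of even size), giving $|\cT|=m+3$. The main obstacle will be step (iv) — pinning down exactly which pair (or triple, for odd $m$) of extra index sets makes \emph{all} single-deletion submatrices full rank over $\bF_2$; the argument above shows a naive choice fails, and the working choice is $\{[m],\emptyset\}$ whose verification, while short, is the crux of the theorem.
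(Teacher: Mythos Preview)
Your proposal is correct and takes essentially the same approach as the paper: for even $m$, both constructions append to $I_{m+1}$ (the columns indexed by $\emptyset,\{1\},\dots,\{m\}$) the all-ones column indexed by $[m]$, and your case-by-case verification is in fact more explicit than the paper's. For odd $m$ the paper appends the column indexed by $[m]$ together with a size-two set $B$, whereas you suggest $[m]$ together with $[m]\setminus\{1\}$; both choices supply a column with a $1$ in the last coordinate that overlaps the $[m]$-column appropriately, and the verification goes through the same way.
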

\begin{proof}

To construct a 1-information super-set for~$RM(1,m)$, we first pick the $m+1$ index sets of size at most one (i.e., the empty set and $m$ many $1$-element sets). The sub-matrix of $G$ comprised of these columns is the identity matrix~$I_{m+1}$. 

If $m$ is even, then the column of $G$ indexed by $[m]$ is the all-one column. Consider the $(m+1) \times (m+2)$ matrix $G'$ formed by appending the all-one column to $I_{m+1}$. Removing any one column from $G'$ will result in a sub-matrix of rank~$m+1$. Thus, provided~$m$ is even, we have constructed a $1$-information super-set for $RM(1,m)$ of size $m+2$. 

If $m$ is odd, the column of $G$ indexed by $[m]$ has zero in the bottom row and one everywhere else. So, in addition to the~$m+2$ columns already chosen, we fix any $B\subseteq[m]$ of size two, whose corresponding column has Hamming weight three with one appearing in the last row. In this case, let $G'$ 
be the $(m+1) \times (m+3)$ matrix formed by appending the two columns indexed by~$[m]$ and~$B$ to $I_{m+1}$. It can be verified that removing any two columns from $G'$ will not reduce rank. Thus, provided $m$ is odd, we have constructed a $1$-information super set for $RM(1,m)$ of size $m+3$. 
\end{proof}



\subsection{$2$-information super-sets for~$RM(1,m)$}\label{section:onemtwo}
Theorem~\ref{thm:mplustwothree} can be easily generalized to construct~$S$-information super-sets for any~$S>1$, which provides a (loose) upper bound on $L(1,m,S)$. In particular, it is an easy exercise to show that the multiset given by a union of~$S$ $1$-information super-sets is an~$S$-information super-set, and therefore, $L(1,m,S) \le S\cdot L(1,m,1)$. 
For~$S=2$, this implies that $L(1,m,2) \le 2L(1,m,1) \le 2(m+3)$, but as we now show, smaller constructions exist.


\begin{theorem}\label{thm:twomplusone}
    There exists an explicit construction of a~$2$-information super-set for~$RM(1,m)$ of size~$2m+1$, i.e., $L(1,m,2) \le 2m+1$. 
\end{theorem}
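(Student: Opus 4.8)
The plan is to exhibit an explicit set $\cT$ of $2m+1$ coordinates of $RM(1,m)$ and to verify Definition~\ref{def:infosuperset} for $S=2$ directly, working throughout with the generator matrix $G$ fixed in Section~\ref{section:onemone}. Take $\cT$ to consist of (i) the $m+1$ columns of $G$ indexed by $\emptyset$ and the $m$ singletons $\{1\},\dots,\{m\}$, which form the identity $I_{m+1}$ as already observed, together with (ii) the $m$ columns indexed by the cyclically consecutive pairs $\{1,2\},\{2,3\},\dots,\{m-1,m\},\{m,1\}$. Writing the last (all-parity) coordinate as $0$, the columns in (i) are the standard basis vectors $e_0,e_1,\dots,e_m\in\bF_2^{m+1}$, and the column indexed by $\{\ell,\ell+1\}$ (indices read modulo $m$) is $v_\ell\triangleq e_0+e_\ell+e_{\ell+1}$; these have Hamming weight $3$, hence are legitimate columns of $G$. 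For $m\ge 3$ — which is forced here anyway, since Definition~\ref{def:infosuperset} requires $S=2<d_{\textrm{min}}=2^{m-1}$, i.e.\ $m\ge 3$ — all $2m+1$ of these coordinates are distinct and $2m+1\le 2^m$, so $\cT$ is a genuine set of size $2m+1$.

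Next I would reduce the claim to a span statement: a set of coordinates contains an information set for $RM(1,m)$ iff the corresponding columns of $G$ span $\bF_2^{m+1}$, so it suffices to prove $\FtwoSpan(\cT\setminus\{a,b\})=\bF_2^{m+1}$ for every pair of distinct $a,b\in\cT$. This splits into three cases. If both $a$ and $b$ are of type (ii), then $\cT\setminus\{a,b\}$ still contains $e_0,\dots,e_m$ and we are done. If $a=e_i$ and $b=v_\ell$, the surviving basis vectors $e_k$ with $k\ne i$ span the hyperplane $\{x:x_i=0\}$, so it is enough to find one surviving $v_{\ell'}$, $\ell'\ne\ell$, whose $i$-th coordinate is $1$; this holds because coordinate $0$ equals $1$ in every $v_\ell$, and every coordinate $i\in[m]$ equals $1$ in exactly the two vectors $v_i$ and $v_{i-1}$ (cyclically), at least one of which survives the removal of a single $v$. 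The remaining case $a=e_i$, $b=e_j$ is the crux: now the surviving $e_k$ span the codimension-$2$ subspace $\{x:x_i=x_j=0\}$, and the claim becomes that the images of $v_1,\dots,v_m$ under the projection $\pi\colon x\mapsto(x_i,x_j)$ span $\bF_2^2$, equivalently that two distinct nonzero vectors of $\bF_2^2$ occur among $\pi(v_1),\dots,\pi(v_m)$.

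I expect this last case to be the only real obstacle, but it reduces to a short combinatorial check on the cyclic structure. If $\{i,j\}=\{0,\ell_0\}$ with $\ell_0\in[m]$, then $\pi(v_\ell)=(1,1)$ precisely for $\ell\in\{\ell_0,\ell_0-1\}$ and $\pi(v_\ell)=(1,0)$ otherwise; for $m\ge 3$ both index sets are nonempty, so $(1,1)$ and $(1,0)$ both appear. If instead $i,j\in[m]$, I would exhibit an $\ell$ with $i\in\{\ell,\ell+1\}$ and $j\notin\{\ell,\ell+1\}$ — take $\ell=i$, or $\ell=i-1$ when $j=i+1$ (using $i-1\ne i+1$ for $m\ge3$) — which yields $\pi(v_\ell)=(1,0)$, and symmetrically an $\ell'$ with $\pi(v_{\ell'})=(0,1)$. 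In every subcase two distinct nonzero elements of $\bF_2^2$ are hit, and since any two distinct nonzero vectors of $\bF_2^2$ form a basis, $\pi$ is onto on the images, completing the argument. It is worth remarking that every step uses only $m\ge 3$, which is exactly the regime in which $L(1,m,2)$ is defined.
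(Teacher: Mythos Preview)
Your proof is correct. Both you and the paper append to $I_{m+1}$ a set of $m$ columns of $G$ indexed by $2$-element subsets of $[m]$, but the specific subsets and the verification differ. The paper chooses $\{1,2\}$, $\{m-1,m\}$, and $\{i,i+2\}$ for $i\in[m-2]$, and then argues structurally via Lemma~\ref{lemma:dist3}: it checks that the rows of the parity part $P$ are distinct and of weight at least two, concludes that $[I_{m+1}\mid P]$ generates a code of minimum distance at least~$3$, and invokes Lemma~\ref{lemma:nminusdminplusone} so that any $2m-1$ columns contain an information set. You instead take the cyclically consecutive pairs $\{\ell,\ell+1\}$ and verify the span condition by a direct three-case analysis on which two columns are removed, reducing the hardest case to the projection onto two coordinates hitting two distinct nonzero vectors of $\bF_2^2$. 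Your route is more elementary in that it bypasses the minimum-distance detour and Lemma~\ref{lemma:dist3}; the paper's route is shorter once that lemma is in hand and isolates a reusable criterion that is later generalized (Lemma~\ref{lemma:generaldist3}, Theorem~\ref{thm:generaltwomplusone}). Incidentally, your cyclic choice also satisfies the hypotheses of Lemma~\ref{lemma:dist3}, so either verification applies to either set of columns.
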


To prove Theorem~\ref{thm:twomplusone} we require the following simple lemma.
\begin{lemma}\label{lemma:dist3}
    Let~$\cC$ be a linear code generated by~$G=[I|P]$ over~$\bF_2$. If every row of~$P$ is distinct and has Hamming weight at least two, then~$d_{\min}(\cC)\ge 3$.
\end{lemma}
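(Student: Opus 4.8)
The plan is to prove Lemma~\ref{lemma:dist3} directly from the structure of a codeword in a systematic code. Write an arbitrary nonzero codeword as $\boldc = \boldm G = (\boldm, \boldm P)$ for some nonzero message vector $\boldm \in \bF_2^k$, where $P$ has rows $\boldp_1, \dots, \boldp_k$. I want to show $\mathrm{wt}(\boldc) \ge 3$, and I will split into cases based on the Hamming weight of the message part $\boldm$, which already contributes $\mathrm{wt}(\boldm)$ to $\mathrm{wt}(\boldc)$ since the first $k$ coordinates are systematic.

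First, if $\mathrm{wt}(\boldm) \ge 3$, we are immediately done. Next, if $\mathrm{wt}(\boldm) = 2$, say $\boldm = \bolde_i + \bolde_j$ with $i \ne j$, then the parity part is $\boldp_i + \boldp_j$. Since the rows of $P$ are distinct, $\boldp_i \ne \boldp_j$, so $\boldp_i + \boldp_j \ne \boldzero$, contributing at least $1$ more to the weight; hence $\mathrm{wt}(\boldc) \ge 3$. Finally, if $\mathrm{wt}(\boldm) = 1$, say $\boldm = \bolde_i$, then the parity part is exactly the row $\boldp_i$, which by hypothesis has Hamming weight at least two, so $\mathrm{wt}(\boldc) \ge 1 + 2 = 3$. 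Since every nonzero codeword falls into one of these three cases, $d_{\min}(\cC) \ge 3$.

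The proof is essentially a routine case analysis, so there is no serious obstacle; the only point requiring a moment's care is correctly tracking that distinctness of the rows of $P$ is exactly what rules out cancellation in the weight-$2$ message case, while the weight-at-least-two condition on each row handles the weight-$1$ message case. One should also note that the lemma is stated over $\bF_2$, which is what makes ``$\boldm$ has weight $2$'' correspond to a simple sum of two rows and makes the argument clean; the same reasoning would need adjustment over larger fields.
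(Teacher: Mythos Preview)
Your proof is correct and is essentially identical to the paper's own argument: both perform a case analysis on the Hamming weight of the message vector (equivalently, the number of rows of $G$ being summed), using the weight-at-least-two hypothesis for the single-row case and the distinctness hypothesis for the two-row case.
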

\begin{proof}
    We show that the minimum weight of~$\cC$ is at least~$3$. Clearly, $\bF_2$-summation of~$3$ rows of~$G$ or more has Hamming weight at least~$3$, and any one row of~$G$ has Hamming weight at least~$3$. It remains to show that $\bF_2$-summation of any two rows has Hamming weight at least~$3$. Indeed, this summation will have Hamming weight two in the systematic part, and since the respective rows of~$P$ are distinct, it will have Hamming weight at least~$1$ in the non-systematic part, and hence weight at least~$3$ overall.
\end{proof}

\begin{proof}[Proof of Theorem~\ref{thm:twomplusone}]

Recall the above definition of the generator matrix $G$ for~$RM(1,m)$. Let $G'=[I_{m+1}\mid P]$ be a $(m+1) \times (2m+1)$ sub-matrix of $G$. If $G'$ is a generator matrix of a linear code with a minimum distance at least three, then any collection of $(2m-1)$ columns of $G'$ contains $m+1$ linearly independent columns, which means~$G'$ gives a~$2$-information super-set for $RM(1,m)$ of size~$2m+1$. To show that constructing such a sub-matrix~$G'$ is possible, by Lemma~\ref{lemma:dist3} it suffices to show how to choose an~$(m+1) \times m$ sub-matrix~$P$ of~$G$ so that every two rows of~$P$ are distinct and every row of~$P$ has Hamming weight at least two. 

To that end, we select the following~$m$ index sets: $\{1,2\}$, $\{m-1,m\}$, and $\{i,i+2\}$ for all $i \in [m-2]$. Note that every $i \in [m]$ belongs to exactly two of these $m$ index sets, and that no two index sets have an intersection of size more than one. Moreover, it can easily be checked that
rows of the matrix~$P$ corresponding to these~$m$ index sets each have a Hamming weight of at least two. Therefore, the matrix~$P$ satisfies the properties required by Lemma~\ref{lemma:dist3}, and~$G'=[I_{m+1}\mid P]$ gives a~$2$-information super-set of size~$2m+1$.
\end{proof}


A slightly more involved construction (obtained through a greedy computer search) provides a smaller~$2$-information super-set for~$RM(1,m)$, albeit without a closed-form expression as in Theorem~\ref{thm:twomplusone}, and numeric results are given in Figure~\ref{fig:greedyresults}. Notice that Lemma~\ref{lemma:lowerbound} and Theorem~\ref{thm:twomplusone}
suggest that
\begin{align*}
    2 \le L(1,m,2)-m-1 \le m.
\end{align*}
However, Figure~\ref{fig:greedyresults} suggests a sublinear growth of~$L(1,m,2)-m-1$, and hence it is likely that Theorem~\ref{thm:twomplusone} can be improved substantially.




\begin{figure}[htbp]
  \centering
  \includegraphics[width=0.5\textwidth]{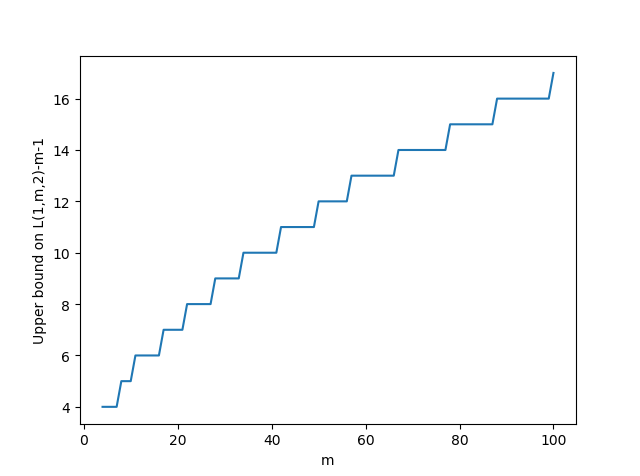}
  \caption{Numeric results ($m \le 100$). 
  }
  \label{fig:greedyresults}
\end{figure}



\subsection{Greedy approach for~$2$-information super-sets}\label{section:greedysetup}

Recall the definition of the (row-reduced) generator matrix $G$ for~$RM(1,m)$ in the proof of Theorem~\ref{thm:mplustwothree}. Let $G'=[I_{m+1}\mid P]$ be a $(m+1) \times (m+1+u)$ sub-matrix of $G$. To show that the set of indices of $G'$ is a $2$-information super-set, by Lemma~\ref{lemma:dist3}, it suffices to show that every row of $P$ is distinct and has Hamming weight at least two. To construct a $2$-information super-set of size $m+1+u$, we first pick the~$m+1$ index sets of size at most one, which gives the $I_{m+1}$ part of $G'$. Given a collection of subsets $S_1, \dots, S_u \subseteq [m]$ and~$j\in[m]$, define~$T_j\subseteq[u]$ as~$T_j \triangleq \{i \mid j \in S_i \}$, and let $T_0 \triangleq \{j \in [u] \mid |S_j|~\text{is even}\}$. The matrix~$P$ formed by the index sets $S_1, \dots, S_u$ has the required properties if the following conditions are satisfied:  
\begin{itemize}
    \item $|T_i| \ge 2$ for all $i\in[0,m]$ and
    \item $T_i \ne T_j$, for all distinct $i,j\in[0,m]$. 
\end{itemize} 
The goal is to find the smallest $u$ for which such index sets exist. In Appendix~\ref{section:greedydetails} we detail a greedy method which relies upon this set-theoretic formulation of~$2$-information super-set construction, through which we obtained the results in Figure~\ref{fig:greedyresults}.

\subsection{$S$-information super-sets for $RM(d,m)$}\label{section:dms}
We turn to construct $S$-information super-sets for any~$S$ using the recursive 
$(u,u+v)$-construction of binary Reed-Muller codes. 
\begin{align*}
    RM&(d,m)= \\ &\{(u,u+v)|u \in RM(d,m-1), v \in RM(d-1,m-1)\}.
\end{align*}


This construction is used to prove the next lemma, which shows that the size of an $S$-information super-set for $RM(d,m)$ is bounded by the sum of the size of an $S$-information super-set for $RM(d,m-1)$ and the size of an $S$-information super-set for $RM(d-1,m-1)$.

\begin{lemma}\label{lemma:sumofsizes}
    $L(d,m,S) \le L(d,m-1,S)+L(d-1,m-1,S)$.
\end{lemma}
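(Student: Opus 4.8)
The plan is to realize the bound directly through the $(u,u+v)$-construction at the level of coordinate sets. I would index the $2^m$ coordinates of $RM(d,m)$ so that the first $2^{m-1}$ of them form the ``$u$-block'' and the last $2^{m-1}$ the ``$(u+v)$-block'', consistently with writing a codeword as $(u,u+v)$ with $u\in RM(d,m-1)$ and $v\in RM(d-1,m-1)$. Then I would take $\cT_1$ to be a minimum-size $S$-information super-set for $RM(d,m-1)$, viewed as a multiset of positions inside the $u$-block, and $\cT_2$ a minimum-size $S$-information super-set for $RM(d-1,m-1)$, viewed as a multiset of positions inside the $(u+v)$-block, and set $\cT \triangleq \cT_1 \cup \cT_2$ (a multiset union over disjoint coordinate blocks), so that $|\cT| = L(d,m-1,S)+L(d-1,m-1,S)$. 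The goal is then to show that $\cT$ is an $S$-information super-set for $RM(d,m)$.

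Two small ingredients would be needed. First, a monotonicity remark: if $\cT'$ is an $S$-information super-set for a code, then every $(|\cT'|-S')$-submultiset of $\cT'$ with $S'\le S$ still contains an information set, since any such submultiset contains a $(|\cT'|-S)$-submultiset. Second, the crucial ``gluing'' step: if $\cI_1$ is an information set for $RM(d,m-1)$ (sitting in the $u$-block) and $\cI_2$ is an information set for $RM(d-1,m-1)$ (sitting in the $(u+v)$-block), then $\cI_1 \cup \cI_2$ is an information set for $RM(d,m)$. I would prove this by observing that $(u,v)\mapsto(u,u+v)$ is a linear isomorphism from $RM(d,m-1)\times RM(d-1,m-1)$ onto $RM(d,m)$, so $\lambda(d,m)=\lambda(d,m-1)+\lambda(d-1,m-1)$ and hence $|\cI_1\cup\cI_2|=\dim RM(d,m)$; it then suffices to check that a codeword $(u,u+v)$ vanishing on $\cI_1$ in the $u$-block and on $\cI_2$ in the $(u+v)$-block must be zero. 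Indeed, vanishing on $\cI_1$ forces $u=0$ because $\cI_1$ is an information set for $RM(d,m-1)$, after which $u+v=v$ vanishes on $\cI_2$ and hence $v=0$ because $\cI_2$ is an information set for $RM(d-1,m-1)$.

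To finish, I would take an arbitrary $(|\cT|-S)$-submultiset $\cS$ of $\cT$; it is obtained by deleting $S_1$ coordinates from $\cT_1$ and $S_2$ from $\cT_2$ with $S_1+S_2=S$ and $S_1,S_2\ge 0$. By the monotonicity remark, the surviving portion of $\cT_1$ still contains an information set $\cI_1$ for $RM(d,m-1)$ and the surviving portion of $\cT_2$ still contains an information set $\cI_2$ for $RM(d-1,m-1)$; by the gluing step, $\cI_1\cup\cI_2\subseteq\cS$ is an information set for $RM(d,m)$. Hence every $(|\cT|-S)$-submultiset of $\cT$ contains an information set for $RM(d,m)$, i.e., $\cT$ is an $S$-information super-set for $RM(d,m)$, which gives $L(d,m,S)\le |\cT| = L(d,m-1,S)+L(d-1,m-1,S)$.

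The part that needs the most care is the gluing step — in particular, the upgrade from ``the projection of $RM(d,m)$ onto $\cI_1\cup\cI_2$ is injective'' to ``$\cI_1\cup\cI_2$ is an information set'', which hinges on the dimension identity coming from the $(u,u+v)$-construction; the rest is routine bookkeeping with multisets. I would also add the convention that if no $S$-information super-set exists for one of the two smaller codes — possible since the definition requires $S<d_{\textrm{min}}$ and $d_{\textrm{min}}(RM(d,m-1))$ can be strictly smaller than $d_{\textrm{min}}(RM(d,m))$ — the corresponding term on the right-hand side is read as $+\infty$, so the inequality holds vacuously.
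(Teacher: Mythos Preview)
Your proposal is correct and follows essentially the same route as the paper: both place an $S$-information super-set for each of the two smaller codes into the respective halves of the $(u,u+v)$-construction and verify that their union is an $S$-information super-set for $RM(d,m)$. The only cosmetic difference is in the gluing step---the paper checks full rank via the block-upper-triangular determinant of the corresponding generator submatrix, whereas you check injectivity of the projection by showing a codeword vanishing on $\cI_1\cup\cI_2$ must be zero---and your explicit $+\infty$ convention for the $S\ge d_{\min}$ edge case is a small addition the paper leaves implicit.
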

\begin{proof}
    Let~$G^{(1)}$ be a generator matrix for~$RM(d,m-1)$ and~$G^{(2)}$ a generator matrix for~$RM(d-1,m-1)$. Then, a generator matrix~$G$ for~$RM(d,m)$ can be written as $$G=\begin{bmatrix}
G^{(1)} & G^{(1)}\\
0 & G^{(2)}
\end{bmatrix}.$$  
Let~$\cT_1 \subseteq [2^{m-1}]$ be an
$S$-information super-set for 
$RM(d,m-1)$ and let $\cT_2 \subseteq [2^{m-1}]$ be an
$S$-information super-set for 
$RM(d-1,m-1)$. Define the shifted set $\hat{\cT}_2 \triangleq \cT_2 + 2^{m-1}$, by adding~$2^{m-1}$ to all elements of~$\cT_2$. We claim that $\cT=\cT_1 \cup \hat{\cT}_2$ is an $S$-information super-set for 
$RM(d,m)$; this would conclude the proof since it implies the existence of an $S$-information set for $RM(d,m)$ of size $|\cT_1|+|\cT_2|$.

Indeed, let $\cW$ be any $|\cT|-S$ subset of $\cT$. To prove that $\cT=\cT_1 \cup \hat{\cT}_2$ is an $S$-information super-set for 
$RM(d,m)$, it must be shown that  $G_{\cW}$ has rank $\lambda(d,m)$. 
Note that $\lambda(d,m)=\lambda(d,m-1)+\lambda(d-1,m-1)$. 
Let $\cW_1=\cT_1 \cap \cW$ and $\cW_2=\cT_2 \cap \cW$. It follows that $|\cW_1| \ge |\cT_1|-S$ and $|\cW_2| \ge |\cT_2|-S$. By the definition of information super-sets, there is a $\lambda(d,m-1)$ subset $\cV_1$ of  $\cW_1$ such that $G^{(1)}_{\cV_1}$ is non-singular and a $\lambda(d-1,m-1)$ subset $\cV_2$ of  $\cW_2$ such that $G^{(2)}_{\cV_2}$ is non-singular.  
Consider the following square sub-matrix~$G'$ of~$G_{\cW}$:
$$G'=\begin{bmatrix}
G^{(1)}_{\cV_1} & G^{(1)}_{\cV_2}\\ \ \\ 
0 & G^{(2)}_{\cV_2}
\end{bmatrix}.$$ 
It follows that~$G'$ is invertible (i.e., of rank $\lambda(d,m)$) since~$\det(G')=\det(G^{(1)}_{\cV_1})\cdot \det( G^{(2)}_{\cV_2})$ by a known formula.
\end{proof}

Lemma~\ref{lemma:sumofsizes} implies a recursive method for constructing an~$S$-information super-set for any~$RM(d,m)$, by taking the union of~$S$-information super-sets for~$RM(d,m-1)$ and~$RM(d-1,m-1)$. Also, note that combining Lemma~\ref{lemma:sumofsizes} with Lemma~\ref{lemma:weaklbound} results in
\begin{align} \label{equation:u_uplusv_gain}
    L(d,m,S) &\le (2^{m-1}-2^{m-d-1}+S+1) \nonumber \\
    &\phantom{=}+(2^{m-1}-2^{m-d}+S+1) \nonumber \\
    &= 2^{m}-2^{m-d}-2^{m-d-1}+2S+2,
\end{align}
which is at least as good as Lemma~\ref{lemma:weaklbound} in cases where 
\begin{align}\label{equation:condition}
    S+1 \le 2^{m-d-1}.
\end{align}

Furthermore, the next lemma follows easily from Definition~\ref{def:infosuperset}.

\begin{lemma}\label{lemma:soneplusstwo}
    For all integers $S_1, S_2$ such that $0 < S_1, S_2 < 2^{m-d}$, we have $$L(d,m,S_1+S_2) \le L(d,m,S_1)+L(d,m,S_2).$$
\end{lemma}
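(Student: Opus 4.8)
The plan is to reduce the claim about combining tolerances~$S_1+S_2$ to the two individual statements by a direct counting argument on multisets, exactly in the spirit of how Lemma~\ref{lemma:sumofsizes} was proven but without the recursive $(u,u+v)$ machinery. Let~$\cT_1$ be an~$S_1$-information super-set for~$RM(d,m)$ of minimum size~$L(d,m,S_1)$ and let~$\cT_2$ be an~$S_2$-information super-set of minimum size~$L(d,m,S_2)$. I claim that the multiset union~$\cT \triangleq \cT_1 \uplus \cT_2$ (adding multiplicities) is an~$(S_1+S_2)$-information super-set for~$RM(d,m)$, which immediately gives~$L(d,m,S_1+S_2) \le |\cT| = |\cT_1| + |\cT_2| = L(d,m,S_1) + L(d,m,S_2)$.

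To verify the claim, fix any~$(|\cT| - (S_1+S_2))$-subset~$\cW$ of~$\cT$; per Definition~\ref{def:infosuperset} I must show~$\cW$ contains an information set for~$RM(d,m)$. The deletion of~$S_1+S_2$ elements from~$\cT$ (counted with multiplicity) removes some elements from the~$\cT_1$-copy and the rest from the~$\cT_2$-copy; say~$a$ elements are removed from the~$\cT_1$-part and~$b$ from the~$\cT_2$-part, with~$a+b = S_1+S_2$. If~$a \le S_1$, then the portion of~$\cW$ lying inside the~$\cT_1$-copy is a~$(|\cT_1| - a)$-subset of~$\cT_1$ with~$a \le S_1$, hence a fortiori a subset of~$\cT_1$ obtained by deleting at most~$S_1$ elements, so by the~$S_1$-information-super-set property of~$\cT_1$ it contains an information set for~$RM(d,m)$, and so does~$\cW$. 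Otherwise~$a > S_1$, which forces~$b = S_1+S_2 - a < S_2$, and then the portion of~$\cW$ inside the~$\cT_2$-copy is a~$(|\cT_2| - b)$-subset of~$\cT_2$ with~$b < S_2 \le S_2$, so by the~$S_2$-information-super-set property of~$\cT_2$ it contains an information set; again~$\cW$ does too. In either case~$\cW$ contains an information set for~$RM(d,m)$, proving the claim.

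One small bookkeeping point to handle carefully: since we are working with multisets, "deleting $a$ elements from the $\cT_1$-copy" must be interpreted so that each copy retains its own multiplicity budget, i.e. the residual inside the $\cT_1$-copy really is a subset of $\cT_1$ in the multiset sense of the footnote accompanying Definition~\ref{def:infosuperset}. This is automatic from how the multiset union and multiset subsets are defined, but it should be stated. I also need the hypothesis~$0 < S_1, S_2 < 2^{m-d}$ only to ensure that the notions~$S_1$- and~$S_2$-information super-set are well-defined (Definition~\ref{def:infosuperset} requires the tolerance to be strictly below~$d_{\min} = 2^{m-d}$); note that~$S_1 + S_2$ itself need not be below~$d_{\min}$, since an~$(S_1+S_2)$-information super-set is still perfectly meaningful as long as it has at least~$\lambda(d,m) + S_1 + S_2$ elements, and indeed our construction produces one. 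I expect no genuine obstacle here; the only thing requiring care is the pigeonhole split~$a \le S_1$ or~$b < S_2$ (equivalently~$b \le S_2$), which is forced because~$a + b = S_1 + S_2$ rules out both~$a > S_1$ and~$b > S_2$ simultaneously.
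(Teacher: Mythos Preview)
Your proof is correct and takes essentially the same approach as the paper: both form the multiset union~$\cT_1 \uplus \cT_2$ and use the pigeonhole observation that deleting~$S_1+S_2$ elements cannot simultaneously delete more than~$S_1$ from the first part and more than~$S_2$ from the second. The paper phrases this in terms of columns of~$G_{\cT_1}$ and~$G_{\cT_2}$ rather than multiset subsets, but the argument is identical; your additional bookkeeping about multiset conventions and the role of the hypothesis~$S_i < 2^{m-d}$ is if anything more careful than the paper's version.
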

\begin{proof}
Let $G$ be a generator matrix for $RM(d,m)$. 
Let~$\cT_1$ be an~$S_1$-information super-set and~$\cT_2$ an~$S_2$-information super-set for~$RM(d,m)$. 
To prove the lemma we show the existence of an $(S_1+S_2)$-information super-set of size $|\cT_1|+|\cT_2|$. 
For this, it suffices to show that removing any $S_1+S_2$ columns from the matrix\footnote{For a matrix~$G$ with~$n$ columns and a multiset~$\cT\subseteq[n]$, we denote by~$G_\cT$ the matrix containing the~$|\cT|$ columns indexed by~$\cT$.} 
$G' \triangleq \begin{bmatrix}
    G_{\cT_1}~G_{\cT_2}
\end{bmatrix}$
results in a matrix of rank $\lambda(d,m)$. It is impossible to simultaneously remove more than $S_1$ columns belonging to $G_{\cT_1}$ and more than $S_2$ columns belonging to $G_{\cT_2}$, while removing at most~$S_1+S_2$ columns overall. As $\cT_1$ and $\cT_2$ are information super-sets, $\lambda(d,m)$ linearly independent columns  belonging to at least one of $G_{\cT_1}$ and  $G_{\cT_2}$ will remain, thereby proving the lemma.
\end{proof}
Lemma~\ref{lemma:soneplusstwo} implies a method for constructing an~$S$-information super-set for any~$RM(d,m)$ by taking the union of~$\floor{S/2}$ $2$-information super-sets for $RM(d,m)$, and potentially one~$1$-information super-set for~$RM(d,m)$ if~$S$ is odd. In particular, we have  $$L(1,i,S)\le  \left\lfloor\frac{S}{2}\right\rfloor (2i+1) + \left(S - 2\left\lfloor\frac{S}{2}\right\rfloor \right) (i+3).$$ Combining this with Lemma~\ref{lemma:weaklbound} gives  
\begin{align} \label{equation:d=1}
  L(1,i,S) \le  U(1,i,S), 
\end{align}
where 
\begin{align*}
U(1,i,S)  \triangleq \min\Big\{&2^{i-1}+S+1,\\
    &\textstyle\left\lfloor\frac{S}{2}\right\rfloor(2i+1) + \left(S - 2\left\lfloor\frac{S}{2}\right\rfloor \right)  (i+3)\Big\}.
\end{align*}

The recursive construction implied by Lemma~\ref{lemma:sumofsizes}, Lemma~\ref{lemma:soneplusstwo}, Section~\ref{section:onemone}, and Section~\ref{section:onemtwo} provides our final algorithm for constructing~$S$-information super-sets for any~$RM(d,m)$, as follows.


To construct an~$S$-information super-set for~$RM(d,m)$, which is not one of the base cases mentioned shortly, break~$RM(d,m)$ to~$RM(d,m-1)$ and~$RM(d-1,m-1)$, build an~$S$-information super-set for each, and take the union. We consider two base cases which are treated separately. The code~$RM(d,m)$ is considered a base case if either of the following is true.

\begin{enumerate}
    \item The parameters~$d \ge 2$ and~$m>d$ violate~\eqref{equation:condition}, i.e., $S+1>2^{m-d-1}$. In this case, we construct an~$S$-information super-set using Lemma~\ref{lemma:weaklbound}, i.e., we return any set of~$2^m-2^{m-d}+S+1$ coordinates as our~$S$-information super-set.
    \item $(d,m)=(1,i)$ for some~$i$. 
    In this case, depending on whichever is smaller, we either apply the method described after Lemma~\ref{lemma:soneplusstwo}, 
    and return the union of the~$\floor{S/2}$ $2$-information super-sets (with the potential additional one $1$-information super-set), or construct an~$S$-information super-set using Lemma~\ref{lemma:weaklbound}.
    
\end{enumerate}

To bound the size of the resulting~$S$-information super-set for the input code~$RM(d,m)$, one may consider the above algorithm as a tree-like structure, and we are left to bound the summation of all sizes of information super-sets that were computed in its leaves. See Figure~\ref{fig:tree} for an example where~$d=4,m=8$ and~$S=2$. In this example, the root of the tree is the code~$RM(4,7)$, the leaves from the first base case are of the form~$RM(i,i+2)$ for~$i=2,3,4$, and the leaf from the second base case is~$RM(1,4)$. We eventually obtain the following general bound on~$L(d,m,S)$ whose proof is given in Appendix~\ref{section:thmsevenproof}.

\begin{figure}
\centering

\tikzset{every picture/.style={line width=0.75pt}} 

\begin{tikzpicture}[x=0.75pt,y=0.75pt,yscale=-1,xscale=1]

\draw    (238.28,101.84) -- (183.56,145.14) ;
\draw [shift={(181.99,146.39)}, rotate = 321.65] [color={rgb, 255:red, 0; green, 0; blue, 0 }  ][line width=0.75]    (10.93,-3.29) .. controls (6.95,-1.4) and (3.31,-0.3) .. (0,0) .. controls (3.31,0.3) and (6.95,1.4) .. (10.93,3.29)   ;
\draw    (268.8,102.73) -- (319.97,149.93) ;
\draw [shift={(321.44,151.28)}, rotate = 222.68] [color={rgb, 255:red, 0; green, 0; blue, 0 }  ][line width=0.75]    (10.93,-3.29) .. controls (6.95,-1.4) and (3.31,-0.3) .. (0,0) .. controls (3.31,0.3) and (6.95,1.4) .. (10.93,3.29)   ;
\draw    (300.86,178.44) -- (260.26,213.66) ;
\draw [shift={(258.75,214.97)}, rotate = 319.07] [color={rgb, 255:red, 0; green, 0; blue, 0 }  ][line width=0.75]    (10.93,-3.29) .. controls (6.95,-1.4) and (3.31,-0.3) .. (0,0) .. controls (3.31,0.3) and (6.95,1.4) .. (10.93,3.29)   ;
\draw    (351.6,179.03) -- (386.62,213.27) ;
\draw [shift={(388.05,214.67)}, rotate = 224.35] [color={rgb, 255:red, 0; green, 0; blue, 0 }  ][line width=0.75]    (10.93,-3.29) .. controls (6.95,-1.4) and (3.31,-0.3) .. (0,0) .. controls (3.31,0.3) and (6.95,1.4) .. (10.93,3.29)   ;
\draw    (407.05,238.67) -- (444.88,274.27) ;
\draw [shift={(446.34,275.64)}, rotate = 223.26] [color={rgb, 255:red, 0; green, 0; blue, 0 }  ][line width=0.75]    (10.93,-3.29) .. controls (6.95,-1.4) and (3.31,-0.3) .. (0,0) .. controls (3.31,0.3) and (6.95,1.4) .. (10.93,3.29)   ;
\draw    (353.89,237.67) -- (318.11,270.18) ;
\draw [shift={(316.63,271.52)}, rotate = 317.74] [color={rgb, 255:red, 0; green, 0; blue, 0 }  ][line width=0.75]    (10.93,-3.29) .. controls (6.95,-1.4) and (3.31,-0.3) .. (0,0) .. controls (3.31,0.3) and (6.95,1.4) .. (10.93,3.29)   ;

\draw (226.67,84.4) node [anchor=north west][inner sep=0.75pt]  [font=\small] [align=left] {{\large {\fontfamily{pcr}\selectfont $RM(4,7)$}}};
\draw (287.44,156.55) node [anchor=north west][inner sep=0.75pt]  [font=\small] [align=left] {{\large {\fontfamily{pcr}\selectfont $RM(3,6)$}}};
\draw (152.8,153.76) node [anchor=north west][inner sep=0.75pt]  [font=\small] [align=left] {{\large {\fontfamily{pcr}\selectfont $RM(4,6)$}}};
\draw (228.34,217.9) node [anchor=north west][inner sep=0.75pt]  [font=\small] [align=left] {{\large {\fontfamily{pcr}\selectfont $RM(3,5)$}}};
\draw (346.69,218.11) node [anchor=north west][inner sep=0.75pt]  [font=\small] [align=left] {{\large {\fontfamily{pcr}\selectfont $RM(2,5)$}}};
\draw (409.31,274.45) node [anchor=north west][inner sep=0.75pt]  [font=\small] [align=left] {{\large {\fontfamily{pcr}\selectfont $RM(1,4)$}}};
\draw (286.04,273.11) node [anchor=north west][inner sep=0.75pt]  [font=\small] [align=left] {{\large {\fontfamily{pcr}\selectfont $RM(2,4)$}}};

\end{tikzpicture}
 \ \\ \ \\ 
\caption{Example of a tree ($S=2$).}
    \label{fig:tree}
\end{figure}
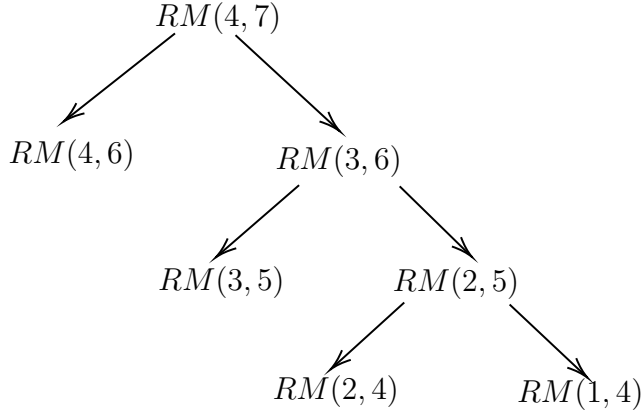

\begin{theorem}\label{thm:elldms}
    If $s\triangleq\log_2(S+1)$ is an integer and $d>1$, then 
    \begin{align*}
            L(d,m,S) \le 
         \sum_{i=2}^d \binom{m-i-s-1}{d-i} 2^{i+s}   + \sum_{j=2+s}^{m-d+1} \binom{m-j-1}{d-2} U(1,j,S).
    \end{align*}
    If~$s$ is not an integer and $d>1$, 
    \begin{align*}
        L(d,m,S) 
        \le \sum_{i=2}^d \binom{m-i-\left\lceil s\right\rceil-1}{d-i}
        (2^{i+\left\lceil s\right\rceil}-2^{\left\lceil s\right\rceil}+S+1)   + 
\sum_{j=2+\left\lceil s\right\rceil}^{m-d+1} \binom{m-j-1}{d-2} U(1,j,S).
    \end{align*}
    For $d=1$, \eqref{equation:d=1} gives an upper bound on $L(1,m,S)$. 
\end{theorem}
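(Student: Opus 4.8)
The plan is to unfold the recursive construction described above — repeatedly splitting $RM(d,m)$ into $RM(d,m-1)$ and $RM(d-1,m-1)$ via the $(u,u+v)$-construction until a base case is reached — into a rooted binary tree, and then to bound $L(d,m,S)$ by the sum over the leaves $\ell$ of this tree of $c_\ell\cdot B(\ell)$, where $c_\ell$ is the number of root-to-leaf paths reaching $\ell$ and $B(\ell)$ is the base-case size bound the algorithm assigns to $\ell$. Throughout I would assume $d\ge 2$ and $m-d>\lceil s\rceil$ with $s\triangleq\log_2(S+1)$; otherwise $RM(d,m)$ is itself a base case and the claimed formula collapses to a single term (the $d=1$ case being exactly \eqref{equation:d=1}, which is the last assertion of the theorem).

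The first step is to identify the leaves. Taking $\log_2$ of the type-1 base-case condition $S+1>2^{m'-d'-1}$ rewrites it as $m'-d'\le\lceil s\rceil$ (which reads $m'-d'\le s$ when $s$ is an integer, and accounts for the two cases in the statement). Since along a path a ``left'' step decreases the gap $m'-d'$ by one while a ``right'' step fixes the gap and decreases $d'$ by one, and since the gap exceeds $\lceil s\rceil$ at the root, every root-to-leaf path terminates either at a type-1 leaf $RM(i,i+\lceil s\rceil)$ with $2\le i\le d$, or at a type-2 leaf $RM(1,j)$ with $\lceil s\rceil+2\le j\le m-d+1$ — the lower bound on $j$ being forced by the requirement that the penultimate node $RM(2,j+1)$ on such a path not itself be a type-1 base case. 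The second step is a lattice-path count: a path to $RM(i,i+\lceil s\rceil)$ uses $d-i$ right steps and $(m-d)-\lceil s\rceil$ left steps with the last step necessarily a left step, so $c_\ell=\binom{m-i-\lceil s\rceil-1}{d-i}$; a path to $RM(1,j)$ uses $d-1$ right steps and $m-j-d+1$ left steps with the last step necessarily the right step out of $RM(2,j+1)$, so $c_\ell=\binom{m-j-1}{d-2}$.

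The third step is to assemble the bound: iterating Lemma~\ref{lemma:sumofsizes} down the tree gives $L(d,m,S)\le\sum_\ell c_\ell\,B(\ell)$; Lemma~\ref{lemma:weaklbound} gives $B(RM(i,i+\lceil s\rceil))\le 2^{i+\lceil s\rceil}-2^{\lceil s\rceil}+S+1$ since the minimum distance of $RM(i,i+\lceil s\rceil)$ is $2^{\lceil s\rceil}$; and \eqref{equation:d=1} gives $B(RM(1,j))\le U(1,j,S)$. Substituting these, together with the simplification $2^{i+s}-2^s+S+1=2^{i+s}$ valid when $s$ is an integer, reproduces the two displayed formulas. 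As an alternative route, I would note one can instead prove the bound by induction on $m$: apply Lemma~\ref{lemma:sumofsizes}, invoke the inductive hypothesis (or the relevant base-case bound) on $RM(d,m-1)$ and $RM(d-1,m-1)$, and merge the two formulas via Pascal's identity $\binom{n}{k}=\binom{n-1}{k-1}+\binom{n-1}{k}$, checking that the $U(1,\cdot,S)$-weighted terms align correctly at the endpoints of the summation range.

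I expect the main obstacle to be the bookkeeping around the ``boundary'' of the recursion rather than any deep idea: pinning down exactly which codes are leaves, verifying that the final step of every root-to-leaf path is forced (a left step into a gap-$\lceil s\rceil$ node for type-1 leaves, since otherwise the parent is already a base case; a right step out of a $d=2$ node for type-2 leaves), confirming that no path is truncated early by a base case before reaching one of the listed leaves (a consequence of the gap being monotone along any path), and lining up the summation ranges $2\le i\le d$ and $\lceil s\rceil+2\le j\le m-d+1$ with the binomial identities — plus the degenerate parameter regimes (small $d$, or $m-d$ only just above $\lceil s\rceil$) where some binomial coefficients vanish or $RM(d,m-1)$ is already a base case. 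Handling the integer-$s$ and non-integer-$s$ cases separately from the outset — they differ only in whether $2^{i+\lceil s\rceil}-2^{\lceil s\rceil}+S+1$ collapses to $2^{i+s}$ — keeps the calculation tidy.
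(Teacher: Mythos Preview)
Your proposal is correct and follows essentially the same approach as the paper: both unfold the recursion of Lemma~\ref{lemma:sumofsizes} into a tree, identify the type-1 leaves $RM(i,i+\lceil s\rceil)$ and type-2 leaves $RM(1,j)$, count the multiplicities, and sum the base-case bounds from Lemma~\ref{lemma:weaklbound} and \eqref{equation:d=1}. The only difference is in how the multiplicities are computed: the paper sets $T(i,j)$ to be the multiplicity of $RM(i,j)$, observes the Pascal-type recurrence $T(i,j)=T(i,j+1)+T(i+1,j+1)$, and proves $T(i,j)=\binom{m-j}{d-i}$ by induction via the hockey-stick identity, whereas you read off the same binomials directly by a lattice-path count (left/right steps, with the final step forced). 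Your route is slightly more direct, but the two arguments are interchangeable.
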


In the above recursive procedure, we break down $RM(d,m)$ only if \eqref{equation:condition} is true. It follows from~\eqref{equation:u_uplusv_gain} that the information super-set size given by Theorem~\ref{thm:elldms} is at least as good as Lemma~\ref{lemma:weaklbound}. In most cases Theorem~\ref{thm:elldms} is better, either due to the gains from the $(u,u+v)$ approach due to~\eqref{equation:u_uplusv_gain} or due to the gains from the method given by Lemma~\ref{lemma:soneplusstwo}.
In Table~\ref{tab:info_sup_set_size}, we provide a few examples that illustrate the merit of using Theorem~\ref{thm:elldms} instead of Lemma~\ref{lemma:weaklbound}. 
\begin{table}[ht!]
    \centering
\begin{tabular}{ |c|c|c|c|c|c| } 
\hline
$d$ & $m$ & $S$ & Theorem~\ref{thm:elldms} &  $2^m - d_{\textrm{min}} + S + 1$ & $(S+1)\lambda(d,m)$ \\
\hline
$1$ & $5$ & $3$ & $19$ & $20$ & $24$ \\ 
\hline
$2$ & $5$ & $1$ & $14$ & $26$ & $32$ \\ 
\hline 
$2$ & $5$ & $2$ & $24$ & $27$ & $48$\\
\hline 
$3$ & $5$ & $1$ & $30$ & $30$ & $52$\\ 
\hline 
$3$ & $6$ & $1$ & $50$ & $58$ & $84$ \\ 
\hline 
$3$ & $6$ & $2$ & $55$ & $59$ & $126$\\ 
\hline 
$3$ & $7$ & $1$ & $78$ & $114$ & $128$ \\ 
\hline
$3$ & $7$ & $2$ & $87$ & $115$ & $192$ \\ 
\hline
$3$ & $7$ & $3$ & $93$ & $116$ & $256$ \\ 
\hline 
$3$ & $8$ & $2$ & $215$ & $227$ & $279$\\ 
\hline 
$4$ & $7$ & $1$ & $112$ & $122$ & $198$\\
\hline 
$4$ & $8$ & $2$ & $208$ & $243$ & $489$\\ 
\hline 
$4$ & $9$ & $2$ & $346$ & $483$ & $768$\\ 
\hline
$4$ & $9$ & $3$ & $407$ & $484$ & $1024$\\ 
\hline 
$4$ & $10$ & $3$ & $677$ & $964$ & $1544$\\
\hline
\end{tabular}
    \caption{Sizes of~$S$-information super-sets for~$RM(d,m)$ for selected parameters.
    The methods under comparison are the recursive approach presented throughout Section~\ref{section:binaryRM}, direct application of Lemma~\ref{lemma:weaklbound}, and plain repetition.}
    \label{tab:info_sup_set_size}
\end{table}

It can be seen from Table~\ref{tab:info_sup_set_size} that Theorem~\ref{thm:elldms} compares favorably against  Lemma~\ref{lemma:weaklbound}, and they both outperform the repetition method significantly.


\section{Information super-sets for general Reed-Muller codes}\label{section:generalsupersets}

This section focuses on constructing information super-sets for general Reed-Muller codes ($q>2$). 
We again begin with the special case $d=1$ in Section~\ref{section:generalonesuperset} and~\ref{section:generaltwosuperset}, and then present a recursive procedure in Section~\ref{section:generalqdms} for the general case.

\subsection{$1$-information super-sets for~$RM_q(1,m)$}\label{section:generalonesuperset}

Let~$M$ be an~$m \times q^m$ matrix comprised of all~$q$-ary column vectors of length $m$.  Define an~$(m+1) \times q^m$ matrix~$\tilde{G}$ as: $$\tilde{G}= \begin{bmatrix}
    M \\
      \1 
\end{bmatrix},$$
i.e., $\tilde{G}$ is obtained by appending an all-one row to $M$. 
It can be verified that $\tilde{G}$ is a generator matrix for $RM_q(1,m)$. Let $G$ be the $(m+1) \times q^m$ matrix obtained by row-reducing $\tilde{G}$ via subtracting the sum of all rows of $M$ from the all-one row. Clearly~$G$ is also a generator matrix for $RM_q(1,m)$. 
Similar to Section~\ref{section:onemone}, we index the \textit{binary} columns of~$M$ (i.e., the columns which contain only~$0$'s and~$1$'s) using the subsets of~$[m]$ in a natural way.
The columns of~$G$ are indexed similarly.

It is well known~\cite{peterson1972error} that $\lambda(q,d=1,m) = m+1$. The following theorem establishes that the bound in Lemma~\ref{lemma:lowerbound} is achieved with equality when~$q>2, d=1$, and~$S=1$.

\begin{theorem}\label{thm:generalmplustwo}
    For any prime power~$q>2$, there exists a construction of a~$1$-information super-set for $RM_q(1,m)$ which yields~$L(q,1,m,1) = m+2$.
\end{theorem}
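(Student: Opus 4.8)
The plan is to mimic the binary construction of Theorem~\ref{thm:mplustwothree}, but exploit the extra freedom afforded by a larger alphabet to avoid the parity obstruction that forced the size $m+3$ when $m$ is odd in the binary case. Recall that in the row-reduced generator matrix $G$ for $RM_q(1,m)$, the $m+1$ columns indexed by subsets of $[m]$ of size at most one form the identity matrix $I_{m+1}$: the columns indexed by the singletons $\{i\}$ are the standard basis vectors (with a $-1$ in the last row after row reduction, but this still yields a matrix row-equivalent to $I_{m+1}$, so it is of full rank), and the column indexed by $\emptyset$ is the last standard basis vector $e_{m+1}$. So the first step is to select these $m+1$ columns, giving the $I_{m+1}$ block.

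Next I would append one carefully chosen extra column $\boldc$ to obtain an $(m+1)\times(m+2)$ matrix $G'=[I_{m+1}\mid \boldc]$, and argue that deleting any single column of $G'$ leaves a matrix of rank $m+1$. Deleting the last column $\boldc$ trivially leaves $I_{m+1}$. Deleting the $j$-th column of $I_{m+1}$ (for $j\in[m+1]$) leaves the columns $\{e_i : i\neq j\}$ together with $\boldc$; this has rank $m+1$ if and only if the $j$-th coordinate $c_j$ of $\boldc$ is nonzero. Hence it suffices to find a genuine column of $G$ (i.e., the evaluation of a degree-$\le 1$ polynomial, excluding the $m+1$ already used) all of whose $m+1$ entries are nonzero. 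Since $q>2$, the constant polynomial $f\equiv 2$ (here $2$ means $1+1$, a nonzero field element since $q>2$ so $\mathrm{char}\neq 2$ is not required — even in characteristic $2$ with $q>2$ one can take $f\equiv\alpha$ for any $\alpha\notin\{0,1\}$) evaluates to the all-$\alpha$ vector, which, expressed in the row-reduced basis $G$, corresponds to some column; more concretely, the column of $G$ indexed by $[m]$ (the all-ones $q$-ary column in $M$, topped off) has a nonzero last entry precisely when we avoid the characteristic-$2$ cancellation, and here we instead just pick any column corresponding to a polynomial with no zero among its $q^m$ evaluations — e.g. take $\boldc$ to be the column indexed by the all-ones binary vector $(1,\ldots,1)\in\{0,1\}^m\subseteq\bF_q^m$, whose entries in $M$ are all $1$ and whose last entry in $G$ is $1-m$; if $1-m=0$ in $\bF_q$, replace it by the column corresponding to $f(\boldz)=z_1+z_2+\cdots$ shifted by a constant to make every entry nonzero.

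The main obstacle, and the place requiring the most care, is ensuring that a column of $G$ with all $m+1$ entries nonzero actually exists for every $m$ and every prime power $q>2$; the naive choice (analogue of the binary $[m]$-indexed column) can fail when $1-m\equiv 0\pmod p$. The clean fix is to observe that a column of $G$ is nonzero in every coordinate iff the corresponding affine function $\boldz\mapsto \langle\bolda,\boldz\rangle+b$ never vanishes on $\bF_q^m$ \emph{and} the coefficient vector $\bolda$ has full support — but an affine function over $\bF_q^m$ with $\bolda\neq 0$ always has zeros, so one must instead choose $\bolda=0$, i.e. the \emph{constant} nonzero function $f\equiv\alpha$ with $\alpha\in\bF_q\setminus\{0,1\}$ (nonempty since $q>2$), whose evaluation vector is $(\alpha,\ldots,\alpha)$. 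In the row-reduced generator matrix $G$ this constant codeword is $\alpha$ times the last row's generating vector, i.e. the column $\boldc$ we append is $\alpha\cdot e_{m+1}$ in the coordinates dual to $G$'s rows — wait, this is just a scalar multiple of a column already present, so it does not help. Therefore the correct statement is: we append the column of $G$ indexed by $[m]$ when $m\not\equiv 1\pmod p$, and otherwise (the measure-zero case $m\equiv 1\pmod p$) we append instead a column indexed by a size-$(m-1)$ or size-$2$ subset chosen so that $1-|{\cdot}|$ is nonzero and the first-row entries remain in $\{0,1\}$ with the right support pattern — one checks that for $q>2$ at least one such subset size in $\{2,\ldots,m\}$ avoids the congruence obstruction (since not all of $\{-1,-2,\ldots,-(m-1)\}$ can vanish mod $p$ once $p\ge 2$), completing the construction and matching the lower bound $L(q,1,m,1)\ge m+2$ from Lemma~\ref{lemma:lowerbound}. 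I would conclude by verifying the resulting $G'=[I_{m+1}\mid\boldc]$ indeed has the property that every $(m+1)$-subset of its $m+2$ columns is full rank, which is exactly the defining property of a $1$-information super-set for $RM_q(1,m)$.
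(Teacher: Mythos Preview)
Your overall strategy is exactly right and matches the paper: take the $m+1$ columns of $G$ giving $I_{m+1}$, append one more column $\boldc$, and note that $G'=[I_{m+1}\mid\boldc]$ is a $1$-information super-set iff every entry of $\boldc$ is nonzero. You also correctly identify that the column indexed by $[m]$ (all-ones in $M$) has last entry $1-m$ after row reduction, so it works precisely when $p\nmid m-1$.

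The gap is in your handling of the remaining case $m\equiv 1\pmod p$. You propose falling back to a column indexed by a proper subset $S\subsetneq[m]$ of size $m-1$ or $2$, checking only the last-entry condition $1-|S|\neq 0$. But a column of $G$ indexed by $S\subseteq[m]$ has its first $m$ entries equal to the indicator vector of $S$; if $S\neq[m]$ then at least one of those entries is $0$, and deleting the corresponding column of $I_{m+1}$ from $G'$ drops the rank. So no \emph{binary} column other than the one indexed by $[m]$ can ever serve as $\boldc$, regardless of the congruence class of its size. Restricting yourself to ``first-row entries in $\{0,1\}$'' is exactly what kills the argument.

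The paper's fix exploits $q>2$ in the way your earlier paragraph gestured at but then abandoned: when $p\mid m-1$, pick $\alpha\in\bF_q\setminus\{0,1\}$ and take the column of $M$ equal to $(1,\ldots,1,\alpha,1,\ldots,1)$. All first $m$ entries are nonzero, and the last entry after row reduction is $1-(m-1+\alpha)=1-\alpha\neq 0$ (using $m\equiv 1\pmod p$ and $\alpha\neq 1$). This is the single missing idea; once you allow a non-binary evaluation point, the construction closes immediately. (Your detour through constant polynomials conflated codewords with columns of $G$; columns correspond to evaluation points $\boldz\in\bF_q^m$, not to polynomials.)
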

\begin{proof}
To construct a 1-information super-set for~$RM_q(1,m)$, we first pick the $m+1$ index sets of size at most one (i.e., the empty set and $m$ many $1$-element sets). The sub-matrix of $G$ corresponding to these columns is again the identity matrix~$I_{m+1}$. Then, let~$p$ be the characteristic of~$\bF_q$.

If~$p \nmid m-1$, then we additionally pick the all-one column, which, provided~$q>2$, does not change after row reduction. Consider the $(m+1) \times (m+2)$ matrix $G'$ formed by appending the all-one column to $I_{m+1}$. Removing any one column from $G'$ results in a sub-matrix of rank~$m+1$. 

If~$p \mid m-1$, then the last entry of the all-one column in~$\Tilde{G}$ becomes zero in~$G$ as a result of row reduction. Pick some~$\alpha \in \bF_q \setminus \{0,1\}$. In this case, rather than choosing the all-one column, we select the column in~$\Tilde{G}$ that contains~$\alpha$ in one entry and~$1$ in all other entries. This ensures that the additional column selected does not lose Hamming weight after row reduction. Consequently, the $(m+1) \times (m+2)$ matrix $G'$ formed by appending this extra column to $I_{m+1}$ also has rank~$m+1$ after any one column is removed from it. Thus, provided~$q>2$, we have constructed a~$1$-information super-set for $RM_q(1,m)$ of size $m+2$. 
\end{proof}



\subsection{$2$-information super-sets for~$RM_q(1,m)$}\label{section:generaltwosuperset}

We investigate how to construct~$2$-information super-sets for~$RM_q(1,m)$ under different parameter regimes. First, we generalize Theorem~\ref{thm:twomplusone} for a construction of~$2$-information super-sets of the same size. Then, we show that under certain field size restrictions, it is possible to obtain smaller information super-sets.


The following lemma generalizes Lemma~\ref{lemma:dist3}, and the proof follows in a similar fashion.

\begin{lemma}\label{lemma:generaldist3}
    Let~$\cC$ be a linear code generated by~$G=[I|P]$ over~$\bF_q$. If no row of~$P$ is a scalar multiple over~$\bF_q$ of another row, and every row of~$P$ has Hamming weight at least two, then~$d_{\min}(\cC)\ge 3$.
\end{lemma}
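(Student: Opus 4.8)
The statement to prove is Lemma~\ref{lemma:generaldist3}, a direct generalization of Lemma~\ref{lemma:dist3} from $\bF_2$ to $\bF_q$. The claim: if $\cC$ is generated by $G=[I\mid P]$ over $\bF_q$, no row of $P$ is a scalar multiple of another, and every row of $P$ has Hamming weight at least two, then $d_{\min}(\cC)\ge 3$.

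My plan is to mirror the proof of Lemma~\ref{lemma:dist3}, tracking how the argument must change over a non-binary field. A nonzero codeword is $\boldc = \boldy G$ for some nonzero $\boldy \in \bF_q^{m+1}$ (where $m+1$ is the number of rows). Write $\boldc = (\boldy, \boldy P)$ in systematic form, so $\mathrm{wt}(\boldc) = \mathrm{wt}(\boldy) + \mathrm{wt}(\boldy P)$. Let $w = \mathrm{wt}(\boldy) \ge 1$. I would split into three cases based on $w$:

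\textbf{Case $w \ge 3$:} Then $\mathrm{wt}(\boldc) \ge \mathrm{wt}(\boldy) \ge 3$ immediately, regardless of $P$.

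\textbf{Case $w = 1$:} Then $\boldy$ has a single nonzero entry $\beta$ in some position $i$, so $\boldy P = \beta \cdot (\text{row } i \text{ of } P)$. Since $\beta \neq 0$ and row $i$ of $P$ has Hamming weight at least two, $\mathrm{wt}(\boldy P) \ge 2$, giving $\mathrm{wt}(\boldc) \ge 1 + 2 = 3$.

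\textbf{Case $w = 2$:} This is the only delicate case. Say $\boldy$ has nonzero entries $\beta_i$ and $\beta_j$ in positions $i \neq j$. Then $\boldy P = \beta_i r_i + \beta_j r_j$ where $r_i, r_j$ are the corresponding rows of $P$. I claim this is nonzero: if $\beta_i r_i + \beta_j r_j = 0$, then $r_i = -(\beta_j/\beta_i) r_j$, making $r_i$ a scalar multiple of $r_j$ over $\bF_q$, contradicting the hypothesis. Hence $\mathrm{wt}(\boldy P) \ge 1$, so $\mathrm{wt}(\boldc) \ge 2 + 1 = 3$.

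Combining all cases, every nonzero codeword has weight at least $3$, so $d_{\min}(\cC) \ge 3$. The main (and really only) obstacle compared to the binary case is Case $w=2$: in $\bF_2$ one only needs the two rows to be \emph{distinct}, but over $\bF_q$ a linear combination $\beta_i r_i + \beta_j r_j$ with nonzero coefficients can vanish whenever $r_i$ and $r_j$ are proportional, which is exactly why the hypothesis is strengthened from ``distinct rows'' to ``no row is a scalar multiple of another.'' I expect the write-up to be short; the only thing to be careful about is phrasing the scalar-multiple argument symmetrically (the relation $r_i = \gamma r_j$ for $\gamma = -\beta_j/\beta_i \neq 0$ is what the hypothesis forbids), and noting that the ``Hamming weight at least two'' condition on each row is what handles $w=1$.
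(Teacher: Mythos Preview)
Your proof is correct and follows exactly the approach the paper intends: the paper states only that ``the proof follows in a similar fashion'' to Lemma~\ref{lemma:dist3}, and your case split on $\mathrm{wt}(\boldy)\in\{1,2,\ge 3\}$ together with the scalar-multiple argument in the $w=2$ case is precisely that similar fashion.
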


Theorem~\ref{thm:twomplusone} can be generalized for construction of~$2$-information super-sets for~$RM_q(1,m)$ of size~$2m+1$.

\begin{theorem}\label{thm:generaltwomplusone}
    There exists an explicit construction of a~$2$-information super-set for~$RM_q(1,m)$ of size~$2m+1$, i.e., $L(q,1,m,2) \le 2m+1$. 
\end{theorem}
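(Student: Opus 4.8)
The plan is to mimic the proof of Theorem~\ref{thm:twomplusone}, replacing the use of Lemma~\ref{lemma:dist3} with its $q$-ary analogue Lemma~\ref{lemma:generaldist3}. Recall the (row-reduced) generator matrix $G$ for $RM_q(1,m)$ defined in Section~\ref{section:generalonesuperset}, whose binary columns are indexed by subsets of $[m]$ exactly as in the binary case. First I would select the same $m+1$ index sets of size at most one, so that the corresponding sub-matrix of $G$ is the identity $I_{m+1}$. Then I would pick the same $m$ binary index sets as in the proof of Theorem~\ref{thm:twomplusone}: namely $\{1,2\}$, $\{m-1,m\}$, and $\{i,i+2\}$ for all $i\in[m-2]$, forming an $(m+1)\times m$ sub-matrix $P$ of $G$ from the corresponding columns. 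Setting $G'=[I_{m+1}\mid P]$, this is an $(m+1)\times(2m+1)$ sub-matrix of $G$, so if $G'$ generates a code of minimum distance at least three, then any $2m-1$ columns of $G'$ contain $m+1$ linearly independent ones, giving a $2$-information super-set for $RM_q(1,m)$ of size $2m+1$.

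The key point is then to verify the hypotheses of Lemma~\ref{lemma:generaldist3} for this $P$: every row of $P$ has Hamming weight at least two, and no row is an $\bF_q$-scalar multiple of another. The Hamming weight claim is unchanged from the binary case: each $j\in[m]$ lies in exactly two of the chosen index sets, so each of the first $m$ rows of $P$ has weight exactly two; the bottom (all-one) row of $G$ restricted to these columns has all entries equal to $1$ (since every chosen index set has even size $2$, the row reduction that subtracts the sum of the $M$-rows from the all-one row leaves a $1$ in each such column — this needs a one-line check, analogous to the binary ``odd Hamming weight'' observation), so it has weight $m\ge 2$. For the scalar-multiple condition, I would argue that since the columns here are all \emph{binary} (entries in $\{0,1\}$), the rows of $P$ are also binary vectors; two binary vectors of the same Hamming weight $2$ are $\bF_q$-scalar multiples of one another only if they are equal, and the combinatorial fact that no two of the chosen index sets intersect in more than one point (hence no two are equal) rules this out among the top $m$ rows; the bottom row has weight $m>2$ so it cannot be a scalar multiple of a weight-$2$ row. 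Hence Lemma~\ref{lemma:generaldist3} applies and $d_{\min}$ of the code generated by $G'$ is at least three.

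The main obstacle — really the only subtlety beyond the binary proof — is confirming that the bottom all-one row of $G$ does not collapse on the selected columns after the row reduction, i.e., that the column in $G$ indexed by an even-size subset $B$ still has a $1$ in its last coordinate. Since $|B|=2$ is even and the row reduction subtracts $\sum_{\text{rows of }M}$ from the all-one row, the last coordinate of such a column becomes $1 - |B|\cdot 1 = 1 - 2 \pmod p$, which is $1$ only when $p=2$; for general $q$ this is $-1\ne 0$ in $\bF_q$, which is still nonzero, so the column retains Hamming weight $3$ — fine for the weight condition, and the nonzero last entry is still not a scalar multiple issue since the relevant rows there are the systematic rows, not the bottom one. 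I would double-check this bookkeeping carefully, and if the constant in the last coordinate differs from the binary case it does not matter: Lemma~\ref{lemma:generaldist3} only needs weight $\ge 2$ and the non-proportionality of rows, both of which survive. With that verified, $G'=[I_{m+1}\mid P]$ gives a $2$-information super-set for $RM_q(1,m)$ of size $2m+1$, proving $L(q,1,m,2)\le 2m+1$.
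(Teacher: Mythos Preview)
Your proposal is correct and follows essentially the same approach as the paper's proof: both select the identity block $I_{m+1}$ together with the same $m$ binary index sets $\{1,2\},\{m-1,m\},\{i,i+2\}$ and then invoke Lemma~\ref{lemma:generaldist3}. Your extra care about the bottom row of $P$ (its entries being $-1$ rather than $1$ when $q>2$, but still nonzero) is a detail the paper simply subsumes under ``it can easily be checked.''
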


\begin{proof}
Recall the above definition of the generator matrix $G$ for~$RM_q(1,m)$. Let $G'=[I_{m+1}\mid P]$ be a $(m+1) \times (2m+1)$ sub-matrix of $G$. If $G'$ is a generator matrix of a linear code with a minimum distance at least three, then any collection of $(2m-1)$ columns of $G'$ contains $m+1$ linearly independent columns, which means~$G'$ gives a~$2$-information super-set for $RM_q(1,m)$ of size~$2m+1$. To show that constructing such a sub-matrix~$G'$ is possible, by Lemma~\ref{lemma:generaldist3} it suffices to show how to choose an~$(m+1) \times m$ sub-matrix~$P$ of~$G$ so that no row of~$P$ is a scalar multiple of another row and every row of~$P$ has Hamming weight at least two. 

To that end, we select the following~$m$ index sets: $\{1,2\}$, $\{m-1,m\}$, and $\{i,i+2\}$ for all $i \in [m-2]$. Note that every $i \in [m]$ belongs to exactly two of these $m$ index sets, and that no two index sets have an intersection of size more than one. Moreover, it can easily be checked that
rows of the matrix~$P$ corresponding to these~$m$ index sets each have a Hamming weight of at least two. Therefore, the matrix~$P$ satisfies the properties required by Lemma~\ref{lemma:generaldist3}, and~$G'=[I_{m+1}\mid P]$ gives a~$2$-information super-set of size~$2m+1$.
\end{proof}

Next, we establish that smaller~$2$-information super-sets are possible given certain parameter restrictions. First, the following theorem shows a condition under which the lower bound in Lemma~\ref{lemma:lowerbound} is exactly attained.
\begin{theorem}\label{thm:twoparities}
Let $\bF_q$ be a finite field with characteristic not equal to $2$ or $3$. Then, there exists a construction of a~$2$-information super-set for~$RM_q(1,q-2)$, which yields~$L(q,1,q-2,2) = q+1$.  
\end{theorem}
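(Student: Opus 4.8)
The target is Theorem~\ref{thm:twoparities}: for a finite field $\bF_q$ with characteristic not $2$ or $3$, there is a $2$-information super-set for $RM_q(1,q-2)$ of size $q+1$, matching the lower bound $\lambda(q,1,q-2)+S = (q-1)+2 = q+1$ from Lemma~\ref{lemma:lowerbound}.

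\textbf{The plan.} I would realize $RM_q(1,m)$ for $m = q-2$ via a particularly convenient generator matrix and then exhibit $q+1$ coordinates whose corresponding $(m+1)\times(q+1) = (q-1)\times(q+1)$ submatrix $G'$ generates a code of minimum distance at least $3$; by Lemma~\ref{lemma:nminusdminplusone} (equivalently Lemma~\ref{lemma:dist3}/\ref{lemma:generaldist3}), every $(q+1)-2 = q-1$ subset of these columns then contains an information set, which is exactly the definition of a $2$-information super-set of size $q+1$. The natural choice of coordinates for $RM_q(1,q-2)$ is to mimic a Reed–Solomon structure: first take the $m+1 = q-1$ "standard" coordinates giving an identity block $I_{q-1}$ (the empty index set plus the $m$ singletons, as in the proof of Theorem~\ref{thm:generalmplustwo}), and then adjoin two more carefully chosen columns so that $G' = [\,I_{q-1}\mid P\,]$ with $P$ an $(q-1)\times 2$ matrix. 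By Lemma~\ref{lemma:generaldist3}, it suffices that every row of $P$ has Hamming weight at least $2$ (i.e.\ no zero entries) and no row of $P$ is a scalar multiple of another row. Since $P$ has only two columns, "weight $\ge 2$" means both entries nonzero, and "no row is a scalar multiple of another" means the $q-1$ ratios (second entry)/(first entry) are pairwise distinct elements of $\bF_q^\times$ — but $|\bF_q^\times| = q-1$, so we need these ratios to hit \emph{every} nonzero element exactly once.

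\textbf{Key steps.} First, set up the generator matrix $G$ for $RM_q(1,q-2)$ exactly as in Section~\ref{section:generalonesuperset}: rows indexed by the $m$ coordinate functions and an all-ones row, row-reduced so that the standard coordinates give $I_{m+1}$. Second, pick the identity block: the $m+1$ index sets of size $\le 1$. Third — the crux — choose the two extra columns. Each additional coordinate of $RM_q(1,m)$ corresponds to a point $\boldz = (z_1,\dots,z_m) \in \bF_q^m$, and its column in the row-reduced $G$ is $(z_1,\dots,z_m,\, 1 - \sum_{i} z_i)^\intercal$ (using the row reduction that subtracts $\sum$(rows of $M$) from the all-ones row). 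So the two extra columns are vectors $\boldv^{(1)},\boldv^{(2)}$ of the form $(z_1,\dots,z_m,1-\sum z_i)$, and I need: (a) every coordinate of each $\boldv^{(j)}$ is nonzero, and (b) the $m+1 = q-1$ coordinatewise ratios $v^{(2)}_k / v^{(1)}_k$ are exactly the $q-1$ distinct elements of $\bF_q^\times$. Here is where characteristic $\ne 2, 3$ should enter. A clean way: take $\boldv^{(1)}$ to be the all-ones column (legitimate only when its last coordinate $1 - m = 1 - (q-2) = 3-q = 3$ in $\bF_q$ is nonzero, i.e.\ $\mathrm{char} \ne 3$ — this is one place the hypothesis bites); then $P$'s first column is all ones and condition (b) reduces to: the $q-1$ entries of $\boldv^{(2)}$ are all the distinct elements of $\bF_q^\times$. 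So I must find $\boldz\in\bF_q^m = \bF_q^{q-2}$ with all $z_i$ nonzero and distinct, and $1-\sum z_i$ also nonzero and distinct from all the $z_i$, and with the whole $(q-1)$-tuple $(z_1,\dots,z_{q-2}, 1-\sum z_i)$ equal to $\bF_q^\times$ as a set. That forces $\{z_1,\dots,z_{q-2}\} = \bF_q^\times \setminus \{w\}$ for some $w\in\bF_q^\times$, and then $1-\sum z_i = 1 - \big(\sum_{t\in\bF_q^\times} t\big) + w = 1 - 0 + w = 1+w$ (using $\sum_{t \in \bF_q^\times} t = 0$ for $q > 2$), and we need $1+w = w$?? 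No — we need $1+w$ to equal the missing element, i.e.\ $1 + w = w$ is wrong; we need the last coordinate to fill in $w$, so $1+w = w$, impossible. Let me recompute: we want $(z_1,\dots,z_{q-2},1-\sum z_i)$ to be a permutation of $\bF_q^\times$, and since $z_1,\dots,z_{q-2}$ are $q-2$ distinct nonzero elements they miss exactly one nonzero element $w$, so we need $1 - \sum z_i = w$. Now $\sum z_i = \big(\sum_{t\in\bF_q^\times} t\big) - w = -w$ (as $\sum_{t\in\bF_q^\times}t = 0$), so $1 - \sum z_i = 1 + w$, and we need $1+w = w$, impossible. So the all-ones first column does \emph{not} work directly; instead I should take a \emph{scaled} version, $\boldv^{(1)} = c\cdot \mathbf 1$ for a suitable constant, or more flexibly let $\boldv^{(1)}$ itself range over a scaling of $\bF_q^\times$. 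The right formulation: seek two coordinates with column-vectors $\boldv^{(1)}, \boldv^{(2)}$ such that as unordered multisets $\{v^{(2)}_k/v^{(1)}_k : k\}$ is all of $\bF_q^\times$; equivalently, after permuting rows, $\boldv^{(1)}$ is proportional to $(1,g,g^2,\dots,g^{q-2})$ and $\boldv^{(2)}$ proportional to $(g, g^2, \dots, g^{q-1}) = g\cdot\boldv^{(1)}$... that's a scalar multiple, bad. So the two columns must \emph{not} be globally proportional but must be proportional row-by-row with a nonconstant ratio pattern. Concretely I want $\boldv^{(1)} = (a_1,\dots,a_{q-1})$, $\boldv^{(2)} = (b_1,\dots,b_{q-1})$ with $b_k = \rho_k a_k$ where $\{\rho_1,\dots,\rho_{q-1}\} = \bF_q^\times$, and each of $\boldv^{(1)},\boldv^{(2)}$ is a legitimate $RM_q(1,q-2)$ coordinate column, i.e.\ of the form $(z_1,\dots,z_{q-2},1-\sum z_i)$. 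The degrees of freedom (choice of the two points in $\bF_q^{q-2}$, i.e.\ $2(q-2)$ parameters, plus global row permutation) are ample; the constraints are the $2(q-1)$ "all entries nonzero" conditions and the single "ratios are a permutation of $\bF_q^\times$" condition. I expect one can simply write down an explicit pair: e.g.\ order $\bF_q^\times = \{g^0, g^1, \dots, g^{q-2}\}$ for a primitive root $g$ and try $z_i^{(1)} = $ something geometric so that the first column (including its parity entry) is a permutation of $\bF_q^\times$ and likewise the second, with ratio pattern a cyclic shift — this is exactly where char $\ne 2$ (needed so $-1 \ne 1$, giving room) and char $\ne 3$ (needed at a couple of arithmetic steps, e.g.\ $3\ne 0$) come in. Fourth and final step: once $P$ is exhibited, invoke Lemma~\ref{lemma:generaldist3} to conclude $d_{\min} \ge 3$, hence by Lemma~\ref{lemma:nminusdminplusone} every $q-1$ columns of the $q+1$ chosen contain an information set, matching Lemma~\ref{lemma:lowerbound} for equality $L(q,1,q-2,2) = q+1$.

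\textbf{Main obstacle.} The whole difficulty is the explicit construction of the two extra coordinates: I need two points of $\bF_q^{q-2}$ whose associated "RM-columns" (appended parity coordinate included) have all entries nonzero and whose $(q-1)$ coordinatewise ratios form a complete set of representatives of $\bF_q^\times$ — equivalently, each column is a permutation of $\bF_q^\times$ and the permutation taking one to the other is "ratio-free of fixed points in a controlled way". Pinning down a clean closed form for these points, and verifying that the parity coordinate $1-\sum z_i$ lands where it must, is the delicate part, and it is precisely here that the exclusion of characteristics $2$ and $3$ is forced (char $2$: one would need $-1\ne 1$ to separate the ratio structure; char $3$: the arithmetic of the sum $1 - \sum z_i$ and the identity $\sum_{t\in\bF_q^\times} t = 0$ interacts with the size $m = q-2$, and $m-1 = q-3 \equiv 0$ in char $3$ would collapse a needed coordinate). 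Everything after producing this pair — the appeal to Lemma~\ref{lemma:generaldist3} and Lemma~\ref{lemma:nminusdminplusone} — is routine.
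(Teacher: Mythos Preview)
Your overall strategy is exactly the paper's: take $G' = [I_{q-1}\mid P]$ with two extra columns and invoke Lemma~\ref{lemma:generaldist3} to force $d_{\min}\ge 3$. You even start with the right first extra coordinate --- the all-ones point --- and correctly compute that its row-reduced column is $(1,\ldots,1,3)^\intercal$, noting that $\mathrm{char}\ne 3$ makes the last entry nonzero. But one line later you drop that $3$: you write ``then $P$'s first column is all ones and condition (b) reduces to: the $q-1$ entries of $\boldv^{(2)}$ are all the distinct elements of $\bF_q^\times$,'' and from there derive the impossible equation $1+w=w$.

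The fix is simply to keep the $3$. The first column of $P$ is $(1,\ldots,1,3)^\intercal$, so the ratio in the last row is $(1+w)/3$, not $1+w$. Requiring this to equal the missing element $w$ gives $1+w=3w$, i.e.\ $w=2^{-1}$ (this is where $\mathrm{char}\ne 2$ enters). Taking $\{z_1,\ldots,z_{q-2}\}=\bF_q^\times\setminus\{2^{-1}\}$ makes the second column $(x_1,\ldots,x_{q-2},\,1+2^{-1})^\intercal$; the last row of $P$ is then $(3,\,3\cdot 2^{-1})$, whose ratio $2^{-1}$ is precisely the one nonzero element absent from $x_1,\ldots,x_{q-2}$. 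That is the paper's explicit construction verbatim. Your ``main obstacle'' dissolves the moment you carry the $3$ through --- no primitive roots, no cyclic shifts, no further search is needed.
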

\begin{proof}
  To prove the theorem, we need to find a two-column matrix~$P$ that satisfies the properties of Lemma~\ref{lemma:generaldist3}.
  Let~$x_1, \ldots, x_{q-1}$ be all non-zero elements of~$\bF_q$, with~$x_{q-1} \triangleq 2^{-1}$, and note that~$\sum_{i=1}^{q-1} x_i=0$. 
Let \begin{align*}
        \Tilde{P} \triangleq \begin{bmatrix}
            1 & 1 & \ldots & 1 & 1 \\
            x_1 & x_2 & \ldots & x_{q-2} & 1
        \end{bmatrix}^\intercal,
    \end{align*}
and let~$P$ be the matrix obtained by row-reducing~$\Tilde{P}$ via subtracting the sum of all other rows from the last row of~$\Tilde{P}$. That is,
\begin{align*}
        P \triangleq \begin{bmatrix}
            1 & 1 & \ldots & 1 & 3 \\
            x_1 & x_2 & \ldots & x_{q-2} & 1+2^{-1}
        \end{bmatrix}^\intercal.
    \end{align*}
Let~$p$ be the characteristic of~$\bF_q$. We have~$2^{-1} = (p+1)/2$ and~$-1 = p-1$. Since~$p \ne 3$, it follows that $\frac{p+1}{2} \ne p-1$. This fact, combined with the fact that~$p \ne 2$, gives us that~$1+2^{-1} \ne 0$. Because~$x_1, \ldots, x_{q-2}$ are non-zero, it then follows that all rows of~$P$ have a Hamming weight of two. 

It remains to show that no row of~$P$ is a scalar multiple of another row. Since~$x_1, \ldots, x_{q-2}$ are distinct, showing~$3^{-1}(1+2^{-1}) \notin \{x_1, \ldots, x_{q-2}\}$ suffices. Indeed, we have~$3^{-1}(1+2^{-1}) = 3^{-1}(3\cdot 2^{-1}) = 2^{-1} = x_{q-1}$.
\end{proof}
\begin{remark}
According to the MDS conjecture~\cite{segre1955ovals}, apart from some special cases, $[n,k]_q$ MDS codes exist if and only if~$q \ge n-1$. Because a~$2$-information super-set of size~$m+3$ is equivalent to an~$[m+3,m+1]_q$ MDS code, if the MDS conjecture is true, then such~$2$-information super-sets require field size~$q \ge m+2$, which matches with Theorem~\ref{thm:twoparities}.
\end{remark}
Given the nature of the above restrictions, we also provide the following construction for less restrictive parameter regimes, at the price of larger information super-sets.  

\begin{theorem}\label{thm:paritieswithlambda}
    There exists a construction of a~$2$-information super-set for $RM_q(1,m)$ which yields~$L(q>2,1,m,2) = m+\frac{m}{\gamma}+2$ for any finite field~$\bF_q$ with size at least~$\gamma+2$ and characteristic not equal to~$\gamma-1$, where~$\gamma \mid m$.
\end{theorem}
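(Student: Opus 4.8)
The plan is to generalize the construction of Theorem~\ref{thm:twomplusone} by grouping the auxiliary index sets into blocks of size $\gamma$, so that we need only $m/\gamma$ extra parity columns instead of $m$. First I would start, as before, with the $m+1$ index sets of size at most one, giving the $I_{m+1}$ part of a generator matrix $G'=[I_{m+1}\mid P]$ for a sub-code of $RM_q(1,m)$. Now partition $[m]$ into $m/\gamma$ consecutive blocks $B_1,\ldots,B_{m/\gamma}$ each of size $\gamma$. For each block $B_t$ I would select a single extra column of $G$ supported on (a subset of) $B_t\cup\{m{+}1\}$; concretely, take the column indexed by the set $B_t$ itself, i.e. the column of $\tilde G$ having $1$'s in the $\gamma$ rows of $B_t$, $0$ elsewhere in $M$, and whatever the last row contributes. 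Because $\gamma \mid m$ and the field has characteristic $\ne \gamma-1$, row reduction (subtracting the sum of the $M$-rows from the all-one row) does not annihilate the last coordinate of these columns: the last entry becomes $1-\gamma \ne 0$, so each such column has Hamming weight $\gamma+1 \ge 3$. The resulting $P$ is $(m+1)\times (m/\gamma)$, and $G'$ has $m+1 + m/\gamma + 1 = m + m/\gamma + 2$ columns once we also keep the all-one column (or, depending on the parity count, adjust by one — I would check whether the $+2$ comes from including both the all-one column and handling the last-row bookkeeping, mirroring the even/odd split in Theorem~\ref{thm:mplustwothree}).

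The key step is then to verify $d_{\min}$ of the code generated by $G'=[I_{m+1}\mid P]$ is at least $3$, so that Lemma~\ref{lemma:generaldist3} applies and every $(|\cT|-2)$-subset of the column index multiset contains an information set (size $m+1$) for $RM_q(1,m)$. By Lemma~\ref{lemma:generaldist3} it suffices to check two things about the rows of $P$: every row has Hamming weight at least two, and no row is an $\bF_q$-scalar multiple of another. For the weight condition: each row of $P$ indexed by $j\in[m]$ has a $1$ in the column for the block $B_t$ containing $j$, and also a nonzero entry in whatever extra column we added to force things (the all-one column, or its row-reduced image); the last row of $P$ (index $m+1$) picks up the $1-\gamma$ entries. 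One has to make sure each of these rows meets two nonzero coordinates, which is where the block structure helps — provided we include one more "linking" column (the all-one column) so that every row of $P$ sees at least one $1$ from it in addition to its block column. For the scalar-multiple condition: two rows $j,j'$ in the \emph{same} block $B_t$ would have identical support $\{t, \text{all-one col}\}$ and entries $(1,1)$, hence would be equal — that is a problem. So the construction must be refined: within each block assign \emph{distinct nonzero scalars} $\alpha_1,\ldots,\alpha_\gamma \in \bF_q^\times$ to the $\gamma$ rows, i.e. use a column of $G$ (or of the unreduced $\tilde G$) whose $M$-part restricted to $B_t$ is $(\alpha_1,\ldots,\alpha_\gamma)^\intercal$ rather than all $1$'s. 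This is possible exactly when $q \ge \gamma+2$ (we need $\gamma$ distinct nonzero scalars, plus room to keep the linking column's entries distinct from these ratios). Rows in different blocks have disjoint block-supports so are automatically not scalar multiples; and the characteristic $\ne \gamma-1$ hypothesis is what keeps the reduced last-row entry $1-\gamma$ nonzero, preserving weight.

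So the ordered steps are: (i) write down $G=[I_{m+1}\mid \text{extra columns}]$ as the row-reduced $RM_q(1,m)$ generator as in Section~\ref{section:generalonesuperset}; (ii) partition $[m]$ into $m/\gamma$ blocks and, for each, choose a column of $\tilde G$ whose $M$-entries on that block are $\gamma$ distinct nonzero field elements (using $q\ge\gamma+2$) and $0$ off the block; (iii) add one linking column (all-one, or row-reduced version, using the characteristic condition to keep its last entry nonzero) so that every row of the resulting $P$ has weight $\ge 2$; (iv) invoke Lemma~\ref{lemma:generaldist3} to conclude $d_{\min}\ge 3$; (v) conclude that with $|\cT| = m + m/\gamma + 2$ columns, deleting any $2$ leaves $m+1$ independent columns, i.e. an information set for $RM_q(1,m)$ — giving a $2$-information super-set of that size, so $L(q,1,m,2) \le m + m/\gamma + 2$, and combined with Lemma~\ref{lemma:lowerbound} ($L \ge m+3$), equality follows only when $m/\gamma = 1$; otherwise we just report the upper bound. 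The main obstacle I anticipate is the bookkeeping in step (iii): making sure that after row reduction \emph{every} row of $P$ — including the last one and including rows in blocks whose chosen scalar happens to interact badly with the linking column — still has weight $\ge 2$ and no two rows become proportional. This likely forces the precise hypotheses "$q \ge \gamma+2$" (enough distinct scalars, and enough slack so none of the $\gamma$ ratios equals the linking-column ratio) and "$\operatorname{char}\bF_q \ne \gamma-1$" (so $1-\gamma \ne 0$ survives row reduction), exactly matching the statement; getting these constants exactly right, rather than off by one, is the delicate part.
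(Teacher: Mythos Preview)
Your high-level strategy matches the paper's: build $G'=[I_{m+1}\mid P]$ with $P\in\bF_q^{(m+1)\times(m/\gamma+1)}$, arrange that every row of $P$ has weight $\ge 2$ and no two rows are $\bF_q$-proportional, and invoke Lemma~\ref{lemma:generaldist3}. You also correctly pinpoint that distinct nonzero scalars within each block are what prevent same-block rows from being proportional. Where you and the paper diverge is in how the blocks are linked. You use $m/\gamma$ \emph{disjoint} block columns (one per block, carrying distinct scalars) together with one global all-one linking column. The paper instead uses a \emph{staircase} layout: it sets
\[
\tilde P=\begin{bmatrix}\mathbf 1 & (x_1,\dots,x_\gamma)^\intercal\end{bmatrix}\in\bF_q^{\gamma\times 2}
\]
with distinct nonzero $x_i$, places shifted copies $\tilde P_i=[\,\mathbf 0_{\gamma\times i}\mid\tilde P\mid \mathbf 0_{\gamma\times(m/\gamma-i-1)}\,]$ so that block $t$ occupies the two \emph{consecutive} columns $t{+}1,t{+}2$, stacks these with an all-one bottom row to form $P'$, and row-reduces. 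Consecutive blocks overlap in one column, so each of the first $m$ rows automatically has weight exactly~$2$ with no separate linking column needed.

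The gap in your version is the role of the hypothesis $\operatorname{char}\bF_q\ne\gamma-1$. You justify it via ``the last entry becomes $1-\gamma$,'' but that computation belongs to your initial all-$1$ block columns, which you then abandon in favor of distinct scalars $\alpha_1,\dots,\alpha_\gamma$. After that switch, the reduced last entry of each of your block columns is $1-\sum_i\alpha_i$, and the reduced last entry of your all-one linking column is $1-m$; neither equals $1-\gamma$, so the stated characteristic condition no longer governs anything in your final construction. The paper's staircase is precisely what makes $\operatorname{char}\ne\gamma-1$ the right hypothesis: column~$1$ of $P'$ carries only block~$0$'s column of ones, so its sum over rows $1,\dots,m$ is $\gamma$, and the $(m{+}1)$-st entry after reduction is exactly $1-\gamma$. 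The paper then defers to the argument of Theorem~\ref{thm:twoparities} to finish the Lemma~\ref{lemma:generaldist3} verification. Your disjoint-blocks-plus-link construction can be salvaged with extra care in choosing the $\alpha_i$ (e.g.\ forcing $\sum_i\alpha_i\ne 1$, which $q\ge\gamma+2$ does allow), but as written it would prove a variant with different side conditions rather than the theorem as stated; the cleanest repair is to adopt the overlapping-block layout.
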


\begin{proof}
    Let
    \begin{align*}
        \Tilde{P} \triangleq \begin{bmatrix}
            1 & 1 & \ldots & 1 \\
            x_1 & x_2 & \ldots & x_\gamma
        \end{bmatrix}^\intercal,
    \end{align*}
    where~$x_1, \ldots, x_\gamma$ are distinct non-zero elements of~$\bF_q$. For all~$i \in [0, \frac{m}{\gamma}-1]$, define block matrix 
    \begin{align*}
    \Tilde{P}_i \triangleq \begin{bmatrix} \mathbf{0}_{\gamma \times i} & \Tilde{P} & \mathbf{0}_{\gamma \times (\frac{m}{\gamma}-i-1)} \end{bmatrix},
\end{align*} where~$\mathbf{0}$ denotes the zero matrix. Then, define the~$(m+1) \times (\frac{m}{\gamma}+1)$ matrix
    \begin{align*}
        P' \triangleq \begin{bmatrix}
    \Tilde{P}_0 \\
    \Tilde{P}_1 \\
    \vdots \\
    \Tilde{P}_{\frac{m}{\gamma}-1} \\
      \1 
\end{bmatrix},
    \end{align*}
where~$\1$ is the row of all~$1$'s, and let matrix~$P$ be one obtained by row-reducing~$P'$ via subtracting the sum of all rows of~$P'$ from the last row. That the field characteristic is not equal to~$\gamma-1$ ensures that the first entry of the last row of~$P$ is not zero, i.e., ensures that we do not lose Hamming weight from row-reduction, and from the proof of Theorem~\ref{thm:twoparities} we have that if~$q\ge \gamma+2$ then~$P$ has the properties of Lemma~\ref{lemma:generaldist3}.
\end{proof}

\subsection{$S$-information super-sets for~$RM_q(d,m)$}\label{section:generalqdms}


As a generalization of Section~\ref{section:dms}, we present a method for constructing~$S$-information super-sets for~$RM_q(d,m)$, assuming still that~$d<m(q-1)$. First, recall that a generator matrix~$G_{d,m}$ for~$RM_q(d,m)$ can be written in a recursive manner. Letting~$w\triangleq \operatorname{min}(d,q-1)$ and~$\bF_q = \{0,\zeta^0,\zeta^1,\ldots,\zeta^{q-2}\}$, we have
\begin{align*}
    G_{d,m} &= \begin{bmatrix}
(\zeta^{q-2})^w G_{d-w,m-1} & \ldots & 0^w G_{d-w,m-1} \\ \vdots & \vdots & \vdots \\ 
(\zeta^{q-2})^0 G_{d,m-1} & \ldots & 0^0 G_{d,m-1}
\end{bmatrix}.
\end{align*}

Then, the next lemma follows from a similar logic to that of Lemma~\ref{lemma:sumofsizes} and also suggests a recursive method for constructing an~$S$-information super-set for any~$RM_q(d,m)$. 

\begin{lemma}\label{lemma:generalsumofsizes}
    $L(q,d,m,S) \le \sum_{i=0}^{w} L(q,d-i,m-1,S)$, where~$w= \operatorname{min}(d,q-1)$.
\end{lemma}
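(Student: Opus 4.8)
The plan is to mimic the proof of Lemma~\ref{lemma:sumofsizes}, replacing the binary $(u,u+v)$-construction with the $q$-ary recursive form of $G_{d,m}$ displayed just above the statement. First I would fix $w \triangleq \min(d,q-1)$ and, for each $i \in [0,w]$, let $\cT_i \subseteq [q^{m-1}]$ be an $S$-information super-set for $RM_q(d-i,m-1)$ of size $L(q,d-i,m-1,S)$. I would then form the shifted copies $\hat{\cT}_i \triangleq \cT_i + i\cdot q^{m-1}$ (so that $\hat{\cT}_i$ indexes the block of columns corresponding to the $i$-th ``column group'' $\zeta^{\cdot}\!-$block in $G_{d,m}$, say reading the block columns as $i=0,\ldots,w$ in whatever order the displayed matrix fixes), and claim that $\cT \triangleq \bigcup_{i=0}^{w} \hat{\cT}_i$ is an $S$-information super-set for $RM_q(d,m)$ of size $\sum_{i=0}^{w} L(q,d-i,m-1,S)$, which gives the bound.

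The core of the argument is to show that for any $(|\cT|-S)$-subset $\cW$ of $\cT$, the matrix $(G_{d,m})_\cW$ has rank $\lambda(q,d,m)$. Writing $\cW_i \triangleq \cW \cap \hat{\cT}_i$, I would observe $|\cW_i| \ge |\cT_i| - S$ because at most $S$ elements are removed in total, hence at most $S$ from each block; so by the defining property of $\cT_i$, each $\cW_i$ (un-shifted) contains a subset $\cV_i$ of size $\lambda(q,d-i,m-1)$ on which the relevant generator matrix $G_{d-i,m-1}$ is nonsingular. I would then assemble the square submatrix of $(G_{d,m})_\cW$ obtained by taking, from block $i$, the columns indexed by $\cV_i$ together with the columns of the already-selected $\cV_j$ for $j>i$ (exactly as in the $(u,u+v)$ case, where the off-diagonal copies appear). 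Because the recursive form of $G_{d,m}$ is block lower-triangular up to the diagonal scalars $(\zeta^{q-2})^{d-w}, \ldots, (\zeta^{q-2})^d$ being nonzero (and the ``$0^w$'' entry, which is $0$ unless $w<d$ makes it $0^0=1$ — I'd need to handle that corner; the standard convention in the displayed matrix is that it is block upper- or lower-triangular after discarding the all-zero block column), the determinant factors as a product of the $\det(G_{d-i,m-1}$-minors$)$ times nonzero scalars, hence is nonzero. Finally I would invoke $\lambda(q,d,m) = \sum_{i=0}^{w} \lambda(q,d-i,m-1)$ to conclude the selected submatrix is $\lambda(q,d,m)\times\lambda(q,d,m)$ and of full rank.

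The main obstacle is bookkeeping the precise block structure of the $q$-ary $(u,u+v)$-type decomposition: unlike the binary case, there are $w+1 \le q$ blocks of columns rather than two, the diagonal scalars $(\zeta^{q-2})^j$ must be verified nonzero, and one must check that the off-diagonal copies needed to make the chosen submatrix square are indeed available as genuine columns of $G_{d,m}$ (i.e., that the block pattern really is triangular in the sense that block $i$'s lower sub-blocks reuse $G_{d-j,m-1}$ for $j>i$, scaled by nonzero field elements). Once the triangularity is pinned down, the determinant-of-a-block-triangular-matrix formula finishes the proof exactly as in Lemma~\ref{lemma:sumofsizes}; so I expect the write-up to be short, with the only delicate point being to state the block decomposition cleanly and to confirm $|\cW_i|\ge |\cT_i|-S$ for every block simultaneously.
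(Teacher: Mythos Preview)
Your overall strategy---take an $S$-information super-set $\cT_i$ for each $RM_q(d-i,m-1)$, place it in a distinct column block of $G_{d,m}$, and verify that the union is an $S$-information super-set for $RM_q(d,m)$---is exactly what the paper intends by ``similar logic to Lemma~\ref{lemma:sumofsizes}.'' The step $|\cW_i|\ge|\cT_i|-S$ is fine, and the reduction to showing that the square submatrix assembled from the $\cV_i$'s has full rank is the right target.

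The gap is your invocation of block-triangularity. In the binary case the matrix $\bigl[\begin{smallmatrix} G^{(1)}_{\cV_1} & G^{(1)}_{\cV_2} \\ 0 & G^{(2)}_{\cV_2}\end{smallmatrix}\bigr]$ is triangular only because one of the two evaluation points is $0$, making the $0^1 G^{(2)}$ block vanish. For $q>2$ and $w\ge 2$, the square submatrix you build has $(p,i)$-block equal to $\alpha_i^p (G_{d-p,m-1})_{\cV_i}$ with $\alpha_0,\ldots,\alpha_w$ distinct; only the column block corresponding to the point $0$ has any vanishing sub-blocks, and every other block is nonzero. The matrix is therefore \emph{not} block-triangular, and its determinant does not factor as a product of the diagonal block determinants; ``scaled by nonzero field elements'' does not salvage this.

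What is actually needed is a peeling (block Gaussian elimination) argument that uses the information-set property at \emph{every} stage, not only on the diagonal. Since $\cV_0$ is an information set for $RM_q(d,m-1)$, the $p{=}0$ row-block restricted to column block $0$ is nonsingular; column operations that clear the $(0,i)$-blocks for $i\ge1$ turn the remaining $(p,i)$-blocks (for $p,i\ge1$) into $(\alpha_i^p-\alpha_0^p)(G_{d-p,m-1})_{\cV_i}$, and you iterate using that $\cV_1$ is an information set for $RM_q(d-1,m-1)$, and so on. Equivalently, in polynomial language: evaluating at $x_1=\alpha_0$ and using that $\cV_0$ is an information set for $RM_q(d,m-1)$ forces $\sum_p \alpha_0^p f_p\equiv 0$, hence $f=(x_1-\alpha_0)\tilde f$ with $\deg\tilde f\le d-1$, and you recurse on $\tilde f$ with the remaining $\cV_1,\ldots,\cV_w$. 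This Vandermonde-style elimination is the extra ingredient the $q$-ary case needs beyond the binary proof.
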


According to~\cite{assmus1998polynomial,romanov2022number,kasami1968new}, the code~$RM_q(d,m)$ has minimum distance equal to~$(q-a)q^{m-b-1}$, where~$d=(q-1)b+a$ and~$0\le a < q-1$. Combining Lemma~\ref{lemma:generalsumofsizes} with Lemma~\ref{lemma:weaklbound} therefore results in
\begin{align}
\label{equation:general_u_uplusv_gain}
    L(q,d,m,S) &\le \sum_{i=0}^{w} [q^{m-1}-(q-a_i)q^{m-b_i-2}+S+1],
\end{align}
where~$a_i=(d-i)\bmod (q-1)$ and~$b_i=\floor{\frac{d-i}{q-i}}$ for every~$i\in\{0,1,\ldots,q-1\}$. This is at least as good as Lemma~\ref{lemma:weaklbound} in cases where
\begin{align}\label{equation:generalcondition}
    \sum_{i=0}^{w} [q^{m-1}-(q-a_i)q^{m-b_i-2}+S+1] \le q^{m}-(q-a_0)q^{m-b_0-1}+S+1.
\end{align}

Furthermore, Lemma~\ref{lemma:soneplusstwo} holds for any~$q$: for all integers $S_1, S_2$ such that $0 < S_1, S_2 < (q-a_0)q^{m-b_0-1}$, we have $L(q,d,m,S_1+S_2) \le L(q,d,m,S_1)+L(q,d,m,S_2)$. Therefore, we formulate our final method for constructing~$S$-information super-sets for any~$RM_q(d,m)$, as follows. Break~$RM_q(d,m)$ into~$\{RM_q(d-i,m-1)\}_{i=0}^{w}$, build an~$S$-information super-set for each, take the union, and continue this breakdown until one of the following base cases is reached:

\begin{enumerate}
    \item The parameters~$d \ge 2$ and~$m(q-1)>d$ violate~\eqref{equation:generalcondition}. In this case, we construct an~$S$-information super-set using Lemma~\ref{lemma:weaklbound}, i.e., we return any set of~$q^{m}-(q-d\bmod(q-1))q^{m-\floor{\frac{d}{q-1}}-1}+S+1$ coordinates as an~$S$-information super-set.

    \item $(d,m)=(1,i)$ for some~$i$. 
    In this case, depending on whichever is smaller, we either apply the method implied by Lemma~\ref{lemma:soneplusstwo}, 
    and return the union of the~$\floor{S/2}$ $2$-information super-sets (with the potential additional one $1$-information super-set), or construct an~$S$-information super-set using Lemma~\ref{lemma:weaklbound}.

    \item $(d,m)=(0,i)$ for some~$i$. In this case, it is easy to check that both sides of~\eqref{equation:generalcondition} equal~$S+1$, so we simply return any set of~$S+1$ coordinates as 
    an~$S$-information super-set.
\end{enumerate}


Deriving a bound such as the one given in Theorem~\ref{thm:elldms} is possible, but highly cumbersome. Nevertheless, the breakdown described above is easily implementable using computer code. Hence, to illustrate the merit of our techniques over using Lemma~\ref{lemma:weaklbound} directly, we provide several examples for~$q=3,4,5$ in Table~\ref{tab:gen_info_sup_set_size3}. It is easily seen that the recursive approach outperforms straightforward application of Lemma~\ref{lemma:weaklbound} quite significantly.

\begin{table}[ht!]
    \centering
\begin{tabular}{ |c|c|c||c|c||c|c||c|c| } 
\hline
\multicolumn{3}{|c||}{Parameters} & \multicolumn{2}{c||}{$q=3$} & \multicolumn{2}{c||}{$q=4$} & \multicolumn{2}{c|}{$q=5$} \\
\hline
$d$ & $m$ & $S$ & Recursion & Lemma~\ref{lemma:weaklbound} & Recursion & Lemma~\ref{lemma:weaklbound} & Recursion & Lemma~\ref{lemma:weaklbound} \\
\hline
$1$ & $5$ & $3$ & $18$ & $85$ & $18$ & $260$ & $18$ & $629$ \\ 
\hline
$2$ & $5$ & $1$ & $29$ & $164$ & $29$ & $514$ & $29$ & $1252$ \\ 
\hline 
$2$ & $5$ & $2$ & $39$ & $165$ & $41$ & $515$ & $41$ & $1253$ \\
\hline 
$3$ & $5$ & $1$ & $67$ & $191$ & $77$ & $770$ & $77$ & $1877$ \\ 
\hline 
$3$ & $6$ & $1$ & $103$ & $569$ & $115$ & $3074$ & $115$ & $9377$ \\ 
\hline 
$3$ & $6$ & $2$ & $134$ & $570$ & $159$ & $3075$ & $161$ & $9378$ \\ 
\hline 
$3$ & $7$ & $1$ & $149$ & $1703$ & $163$ & $12290$ & $163$ & $46877$ \\ 
\hline
$3$ & $7$ & $2$ & $200$ & $1704$ & $230$ & $12291$ & $232$ & $46878$ \\ 
\hline
$3$ & $7$ & $3$ & $283$ & $1705$ & $326$ & $12292$ & $344$ & $46879$ \\ 
\hline 
$3$ & $8$ & $2$ & $284$ & $5106$ & $319$ & $49155$ & $321$ & $234378$ \\ 
\hline 
$4$ & $7$ & $1$ & $359$ & $1946$ & $435$ & $13314$ & $449$ & $62502$ \\
\hline 
$4$ & $8$ & $2$ & $722$ & $5835$ & $907$ & $53251$ & $944$ & $312503$ \\ 
\hline 
$4$ & $9$ & $2$ & $1093$ & $17499$ & $1332$ & $212995$ & $1374$ & $1562503$ \\ 
\hline
$4$ & $9$ & $3$ & $1524$ & $17500$ & $1879$ & $212996$ & $2019$ & $1562504$\\ 
\hline 
$4$ & $10$ & $3$ & $2254$ & $52492$ & $2699$ & $851972$ & $2870$ & $7812504$ \\
\hline
\end{tabular}
    \caption{Sizes of~$S$-information super-sets for~$RM_q(d,m)$ for selected parameters. The methods under comparison are the recursive approach presented throughout Section~\ref{section:generalsupersets} and the direct application of Lemma~\ref{lemma:weaklbound}. It can be seen that there is marginal dependence of the performance of the recursive method on the field size~$q$, while the performance of Lemma~\ref{lemma:weaklbound} depends on~$q$ in a drastic manner.}
    \label{tab:gen_info_sup_set_size3}
\end{table}

\section{Conclusion}
In this work, we reexamine the notion of restricted collusion among worker nodes in a distributed computing system. We suggest that in many common cases the worker nodes are completely free to collude, as they are operating under a single service provider which itself poses a privacy risk. The privacy notion we consider is that of perfect subset privacy. We use information sets of Reed-Muller codes to reduce the download requirement of our schemes. To tackle stragglers, we explore the notion of information super-sets of Reed-Muller codes and present some constructions of information super-sets. 
Information super-sets might be of independent interest since they shed light on the behavior of Reed-Muller codes under puncturing. Characterizing bounds and algorithms for constructing information super-sets remains an interesting problem for future work.

\newpage 
\printbibliography
\clearpage
\appendices

\ifarXiv

\section{Reducing Upload Cost}\label{section:reducingUploadCost}
Let $G$ be a generator matrix and $H$ be a parity check matrix of any $[n, m]_q$ linear code $\cC$ that satisfies $d_{\textrm{min}}(\cC^{\perp}) \ge r+1$. 
Let $B(H,\bolds)$ be a publicly known deterministic algorithm to compute a solution $\boldy$ to $\bolds H^\intercal=\boldy$. In other words, given any syndrome, there is a fixed shift vector that defines the corresponding coset. 
The user computes the syndrome $\boldx H^\intercal$ and sends it to the service provider.
Both the user and the service provider can obtain $\bold{\Tilde{x}} \triangleq B(H,\boldx H^\intercal).$ 
Since $\boldx H^\intercal=\bold{\tilde{x}}H^\intercal$, it follows that $\boldx-\bold{\tilde{x}} \in \mathcal{C}$. There exists exactly one $\boldk \in \bF_q^{m}$ such that $\boldk G=\boldx-\bold{\tilde{x}}$. The user stores this $\boldk$ as side information. This scheme is equivalent to \eqref{equation:kg} with key $\boldk$. 
The upload cost is clearly reduced from $n$ to $n-m$. 





\section{Computational Locality}\label{section:computationallocality}
The notion of computational locality was introduced in \cite{rudow2021locality} as a proxy for the minimum number of workers necessary for coded computation in the presence of stragglers. Let $\cC$ be a linear code of block length~$n$, let~$\ell \le n$ and~$S$ be non-negative integers, and let $\cI \in \binom{[n]}{\ell}$. The computational locality~$L_{\cI,S}$ is the minimum size of multiset~$\cT \subseteq [n]$ such that~$G_\cI$ is in the span of any collection of $|\cT|-S$ columns of~$G_{\cT}$. The computational locality~$L_{\ell,S}$ of code~$\cC$ is $\operatorname{max}_{\cI \in \binom{[n]}{\ell}}L_{\cI,S}$. 
In the present paper we wish to construct~$S$-information super-sets of minimum size; it is evident that the size of such~$S$-information super-sets equals to the minimum of~$L_{\cI,S}$ across all information sets~$\cI\subseteq[n]$. We also note that~\cite{rudow2021locality} did not explicitly study or construct information super-sets.

\section{Punctured Reed-Muller Codes}\label{section:puncturedrm}
In a sense, the problem of puncturing Reed-Muller codes is dual to that of constructing information super-sets for our application. In the former, one fixes the block length of the punctured code, and wishes to maximize its minimum distance. In our setting, we fix the minimum distance (implied by the parameter~$S$), and wish to minimize the block length. One relevant work known to the authors is~\cite{guruswami2017efficiently},
 which provides a tighter bound on $L(q, d, m, S)$ than Lemma~\ref{lemma:weaklbound} for some values of~$S$, assuming certain structures of the field. Specifically, if $q = p^2$ for a prime power $p$, and if $d<q$, then the results in~\cite{guruswami2017efficiently} 
 provide an information super-set of size $p^t(p-1)$ for $S=p^t(p-1)-d(d+1)M-1$, where $M = m^d + O(m^{d-1})$ and $t=\left\lceil\log_p (dM)/2\right\rceil$. Note that the $q=2$ case is not handled in \cite{guruswami2017efficiently} due to the requirement $q=p^2$ for a prime power $p$.

\section{Details of the greedy approach}\label{section:greedydetails}

In this section, we illustrate via an example a greedy method for constructing~$2$-information super-sets for binary code~$RM(1,m)$, for the case where~$m=8$. As described in Section~\ref{section:greedysetup}, we first pick the~$9$ index sets of size at most one, i.e., columns of~$G$ forming $I_9$. 
To pick~$u$ additional columns, we set a weight limit~$\eta$ on these columns as a tunable parameter; that is, we require~$|S_i| \le \eta$ for all~$i \in [u]$, a practice that has proven to improve empirical performance. In this example we let~$\eta = 4$. 

\textit{Step 1}: We begin by setting~$S_1 \triangleq [\eta] = \{1,2,3,4\}$, which means $|T_1|=|T_2|=|T_3|=|T_4|=1$. This gives us an index set of size four; equivalently, we have picked the following additional column (ignoring the last row of~$G$):
\begin{align*}
    \begin{bmatrix}
    1 & 1 & 1 & 1 & 0 & 0 & 0 & 0
\end{bmatrix}^\intercal.
\end{align*}

\textit{Step 2}: From~$S_2$ and onward, at each step we pick the smallest~$i \in [m]$ such that~$|T_i| = 1$, and include~$i$ in the corresponding index set; in the example, at the start of the second step, $i=1$, so we include~$1$ in~$S_2$, which effectively means~$|T_1| = 2$. If~$|T_i| = 2$ for any~$i \in [m]$, we no longer include~$i$ in any future index set unless otherwise needed. If, after picking the smallest~$i$, the weight limit of the current index set is not yet reached, then we greedily include in the index set all~$j \in [m]$ such that~$|T_j| = 0$, until the limit is reached. In the example, $S_2 = \{1\}$, and $|T_5|=|T_6|=|T_7|=|T_8|=0$ at the moment, so we add~$5,6,7$ to~$S_2$, making~$S_2 = \{1,5,6,7\}$. Notice that we cannot include~$8$ because that will take us over the weight limit. We have thus far picked the following two columns:

\begin{align*}
    \begin{bmatrix}
    1 & 1 & 1 & 1 & 0 & 0 & 0 & 0 \\
    1 & 0 & 0 & 0 & 1 & 1 & 1 & 0
\end{bmatrix}^\intercal.
\end{align*}

\textit{Step 3}: Since the first row already has weight two, while~$|T_2|=|T_3|=|T_4|=1$, in the next step we begin by adding~$2$ to~$S_3$. Afterwards we include~$8$ because~$|T_8|=0$. Notice that we still have not reached the~$\eta=4$ limit even after including all~$j \in [m]$ such that~$|T_j| = 0$. When this is the case, we continue to greedily add the smallest~$i \in [m]$ such that~$|T_i| = 1$; in this case, $i=3$ is added, making~$S_3 = \{2,3,8\}$ at the moment:

\begin{align*}
    \begin{bmatrix}
    1 & 1 & 1 & 1 & 0 & 0 & 0 & 0\\
    1 & 0 & 0 & 0 & 1 & 1 & 1 & 0\\
    0 & 1 & 1 & 0 & 0 & 0 & 0 & 1
\end{bmatrix}^\intercal.
\end{align*}

Notice that~$T_2=T_3$ as of now, a conflict that can and needs to be addressed in future steps. Because~$S_3$ has one more weight to spare, one might be inclined to greedily include~$4$ and make~$S_3 = \{2,3,4,8\}$. However, this would mean~$T_2=T_3=T_4$, making the condition~$T_i \ne T_j$ for all distinct $i,j\in[0,m]$ difficult to satisfy. For this reason, we add~$i=5$, the next smallest~$i$ with~$|T_i|=1$, instead of~$4$. After three steps, we have picked the following three columns:

\begin{align*}
    \begin{bmatrix}
    1 & 1 & 1 & 1 & 0 & 0 & 0 & 0\\
    1 & 0 & 0 & 0 & 1 & 1 & 1 & 0\\
    0 & 1 & 1 & 0 & \textbf{1} & 0 & 0 & 1
\end{bmatrix}^\intercal.
\end{align*}

\textit{Step 4}: Because a conflict was introduced in Step three ($T_2=T_3$), we first address it by including~$2$, and not~$3$, in~$S_4$. Then, since there is no~$i \in [m]$ with~$|T_i|=0$, we simply add~$4,6,7$ to~$S_4$, making~$S_4=\{2,4,6,7\}$; again we cannot add~$8$ because we are out of weights to spare. We have picked the following four columns:

\begin{align*}
    \begin{bmatrix}
    1 & 1 & 1 & 1 & 0 & 0 & 0 & 0\\
    1 & 0 & 0 & 0 & 1 & 1 & 1 & 0\\
    0 & 1 & 1 & 0 & 1 & 0 & 0 & 1\\
    0 & 1 & 0 & 1 & 0 & 1 & 1 & 0
\end{bmatrix}^\intercal.
\end{align*}

Notice that the~$T_2=T_3$ conflict has been resolved; that being said, a new conflict~$T_6=T_7$ has emerged, and we need to address it in the next step. Also observe that~$|T_i|\ge 2$ holds for all but~$T_8$. 

\textit{Step 5}: We add~$6$ to~$S_5$ in order to resolve the new conflict, and we add~$8$ so that~$|T_8|=2$, all in one step, making~$S_5 = \{6,8\}$. We have picked the following five columns:

\begin{align*}
    \begin{bmatrix}
    1 & 1 & 1 & 1 & 0 & 0 & 0 & 0\\
    1 & 0 & 0 & 0 & 1 & 1 & 1 & 0\\
    0 & 1 & 1 & 0 & 1 & 0 & 0 & 1\\
    0 & 1 & 0 & 1 & 0 & 1 & 1 & 0\\
    0 & 0 & 0 & 0 & 0 & 1 & 0 & 1
\end{bmatrix}^\intercal.
\end{align*}

The weight limit is not exactly reached by~$S_5$, but it does not need to be. With the above choice of~$S_i$'s we have~$$T_0 = \{1,2,3,4,5\}, T_1 = \{1,2\}, T_2 = \{1,3,4\}, T_3 = \{1,3\}, T_4 = \{1,4\},$$ $$T_5 = \{2,3\}, T_6 = \{2,4,5\}, T_7 = \{2,4\}, T_8 = \{3,5\}.$$ Clearly, the~$T_i$'s satisfy the required conditions given in Section~\ref{section:greedysetup}; thus~$u=5$ index sets suffice. This is an improvement over Theorem~\ref{thm:twomplusone}, where~$u=m=8$.

Now we present Algorithm~\ref{algo:greedy}, which the above example illustrates. In what follows, we use the term~$2$-conflict to refer to the scenario where~$T_i = T_j$; such a conflict will be resolved in the next step (e.g., $T_2=T_3$ in the example). A~$3$-conflict is when three~$T_i$'s are equal; such a conflict is hard to resolve (e.g., $T_2=T_3=T_4$ in the example) and will be avoided.

\begin{algorithm}
  \caption{Greedy method for~$2$-information super-sets for~$RM(1,m)$}\label{algo:greedy}
  \Input{Positive integers~$m \ge 4$ and~$\eta \le m$.}
  \Output{Integer~$u \ge 2$, sets~$S_1,\ldots,S_u \subseteq [m]$.}
  $u \gets 1$\;
  $S_1 \gets [\eta]$\;
  $T_j \gets \{i \in [u]: j \in S_i\}$ for all~$j \in [m]$\;
  \While{$|T_i| < 2$ for some~$i \in [m]$ or~$T_i = T_j$ for some distinct~$i, j \in [m]$}{%
  $u \gets u + 1$\;
  $S_u \gets \emptyset$\;
  $C \gets$ the set of indices resolving any~$2$-conflicts\;
  $S_u \gets S_u \cup C$\;
  $T_j \gets \{i \in [u]: j \in S_i\}$ for all~$j \in [m]$\;
  $i \gets$ the smallest~$i' \in [m]$ such that~$|T_{i'}|=1$\;
  $S_u \gets S_u \cup \{i\}$\;
  $T_j \gets \{i \in [u]: j \in S_i\}$ for all~$j \in [m]$\;
  $Z_0 \gets \{i: |T_i| = 0\}$\;
  $D_0 \gets $ the set of~$\operatorname{min}(|Z_0|, \eta - |S_u|)$ smallest indices in~$Z_0$\;
  $S_u \gets S_u \cup D_0$\;
  $T_j \gets \{i \in [u]: j \in S_i\}$ for all~$j \in [m]$\;
  $Z_1 \gets \{i: |T_i| = 1\}$\;
  $D_1 \gets $ the set of~$\operatorname{min}(|Z_1|, \eta - |S_u|)$ smallest indices in~$Z_1$ not creating a~$3$-conflict\;
  $S_u \gets S_u \cup D_1$\;
  $T_j \gets \{i \in [u]: j \in S_i\}$ for all~$j \in [m]$\;
  }
  $T_0 \gets \{j \in [u]: |S_j|$ is even\}\;
  \If {$|T_0| < 2$ or~$T_0 = T_i$ for some~$i \in [m]$}{%
  Abort\;
  }
\end{algorithm}

We note that Algorithm~\ref{algo:greedy} is not always guaranteed to give~$2$-information super-sets. For~$\eta = 2$, the algorithm will always succeed and give~$u=m$, matching the result of Theorem~\ref{thm:twomplusone}. Through a greedy computer search via Algorithm~\ref{algo:greedy}, we were able to obtain the results in Figure~\ref{fig:greedyresults} which outperform Theorem~\ref{thm:twomplusone}. 

\section{Proof of Theorem~\ref{thm:elldms}}\label{section:thmsevenproof}

Suppose~$s\triangleq\log_2(S+1)$ is an integer. Let~$T(i,j)$ denote the number of times~$RM(i,j)$ appears in the tree. In what follows, we derive a general expression for~$T(i,j)$ which will be useful for obtaining our bound. We first make the following observations.

\begin{enumerate}
    \item From the recursive structure, we have~$T(i,j) = T(i,j+1)+T(i+1,j+1)$.
    \item For any leaf~$\ell$ resulting from this base case, the number of times~$\ell$ appears is exactly equal to the count of its parent in the tree; that is, $T(i,i+s) = T(i,i+s+1)$ for all~$2 \le i \le d$.
    \item Nodes on the rightmost path of the tree from the root to the leaf, i.e., nodes of the form~$RM(i,m-d+i)$, only appear once; that is, $T(i,m-d+i) = 1$ for all~$1\le i \le d$.
\end{enumerate}

We prove by induction that for all~$i \in[2,d]$, we have~$T(i,j) = \binom{m-j}{d-i}$ for all~$j \in [i+s+1,m-d+i]$. In detail, we first show~$T(d,j) = \binom{m-d-s-1}{d-d}$ as the initial case of the induction. Then, the induction assumption states that the expression holds for some~$i \in [2,d]$, and we then show that it holds for~$i-1 \in [2,d]$. In the initial case, we easily have~$T(d,j) = 1 = \binom{m-j}{d-d}$ for all~$j \in [d+s+1,m]$ from the third observation. Now, suppose~$T(i,j) = \binom{m-j}{d-i}$ for some~$i \in [2,d]$. Then, from the first and third observations,
\begin{align*}
    T(i-1,j) &= T(i-1,j+1)+T(i,j+1) \\ &= T(i-1,j+2)+T(i,j+2)+T(i,j+1) \\ &=\ldots = T(i-1,m-d+i-1)+\sum_{h=j+1}^{m-d+i-1} T(i,h) \\&= 1+\sum_{h=j+1}^{m-d+i-1} \binom{m-h}{d-i} \\&= 1+\sum_{h=d-i+1}^{m-j-1} \binom{h}{d-i} \\ &= \sum_{h=d-i}^{m-j-1} \binom{h}{d-i} \\&= \binom{m-j}{d-i+1},
\end{align*}
where the last step follows from the hockey-stick identity. This indeed proves that~$T(i,j) = \binom{m-j}{d-i}$.

In the first base case, we stop the breakdown when the condition~\eqref{equation:condition} is violated, i.e., when~$m-d=s$. This means the leaves from this base case are of the form~$RM(i,i+s)$ for~$ 2\le i \le d$. Recall that the minimum distance of~$RM(i,i+s)$ is~$d_{\textrm{min}}=2^s = S+1$, and therefore, the size of any~$S$-information super-set for~$RM(i,i+s)$ equals its block length~$2^{i+s}-(S+1)+S+1=2^{i+s}$. To count the number of times this base case is reached, i.e., the number of leaves from the first base case, notice that the expression for~$T(i,j)$ specifies to $T(i,i+s+1) = \binom{m-i-s-1}{d-i}$. This leads to~$T(i,i+s) = \binom{m-i-s-1}{d-i}$ from the second observation. Taking~$i$ from~$2$ to~$d$ results in the first part of the bound, i.e., $\sum_{i=2}^d \binom{m-i-s-1}{d-i} 2^{i+s}$.

To prove the second part, first notice that because~$RM(2,2+s)$ is a leaf node whose parent is~$RM(2,3+s)$, whose other child~$RM(1,2+s)$ is another leaf, the breakdown cannot go any further than this; in other words, $i \ge 2+s$. On the other hand, $i$ cannot be greater than~$m-d+1$, which corresponds to the last leaf node on the rightmost path of the tree. In addition, notice that leaf node~$RM(1,i)$ appears if and only if~$RM(2,i+1)$ appears, indicating that it suffices to count the number of nodes of the form~$RM(2,i+1)$ for all~$i \in [2+s,m-d+1]$. To this end, from previous analysis we have~$T(2,i+1) = \binom{m-i-1}{d-2}$, which means~$T(1,i) = \binom{m-i-1}{d-2}$. Putting all of this together yields the second part of the bound, i.e., $\sum_{j=2+s}^{m-d+1} \binom{m-j-1}{d-2} U(1,j,S)$.

In the case where~$s$ is not an integer, we stop the breakdown when~$m-d=\left\lceil s\right\rceil$. This means the leaves from the first base case are now of the form~$RM(i,i+\left\lceil s\right\rceil)$ for~$ 2\le i \le d$. The minimum distance of~$RM(i,i+\left\lceil s\right\rceil)$ is~$d_{\textrm{min}}=2^{\left\lceil s\right\rceil}$, and the size of any~$S$-information super-set for~$RM(i,i+\left\lceil s\right\rceil)$ equals~$2^{i+\left\lceil s\right\rceil}-2^{\left\lceil s\right\rceil}+S+1$. 
The counting argument follows the same logic.
\fi

\end{document}

\ifCLASSINFOpdf
\else
\fi
